\documentclass[aps,reprint,amsfonts, amssymb, amsmath,  showkeys,pra, superscriptaddress, twocolumn,longbibliography,nofootinbib]{revtex4-1}

\usepackage{xcolor}

\usepackage{float}

\usepackage[shortlabels]{enumitem}

\usepackage{braket}
\usepackage{amsthm}
\usepackage{mathtools}
\usepackage{url}
\usepackage{physics}
\usepackage{graphicx}
\usepackage[left=16mm,right=16mm,top=35mm,columnsep=15pt]{geometry} 

\usepackage[T1]{fontenc}

\usepackage{bm}

\usepackage{silence}
\newcommand*{\eh}{\mathrm{End\, }(\mathcal{H})}

\def\HC{\mathcal{H}}

\def\LC{\mathcal{L}}

\def\ad{^{\dagger}}

\def\a{\alpha}
\def\w{\omega}

\newcommand{\fsnull}[1]{}
\newcommand{\old}[1]{}

\addtocontents{toc}{\protect\setcounter{tocdepth}{1}}

\usepackage{hyperref}
\usepackage[toc,page,header]{appendix}

\addtocontents{toc}{\protect\setcounter{tocdepth}{1}}

\usepackage[makeroom]{cancel}
\definecolor{C1}{RGB}{52, 89, 149}
\definecolor{C2}{RGB}{251, 77, 61}
\definecolor{C3}{RGB}{3, 206, 164}
\definecolor{C4}{RGB}{202, 21, 81}
\definecolor{C5}{RGB}{202, 21, 81}
\hypersetup{colorlinks=true, linkcolor=C5, citecolor=C5, urlcolor=C5}

\usepackage{tikz}
\tikzset{every picture/.style=remember picture}

\usepackage[utf8]{inputenc}
\usepackage{graphicx}
\usepackage{xcolor}
\usepackage{amsmath}
\usepackage{amsthm}
\usepackage{bm}
\usepackage{bbm}
\usepackage{comment}
\usepackage{mathdots}
\usepackage{lipsum}
\usepackage{verbatim}
\usepackage{natbib}
\usepackage{nccmath}
\usepackage{amsfonts}
\usepackage{thm-restate}
\usepackage{thmtools}
\usepackage{ytableau}

\newtheorem{fact}{Fact}

\usepackage{amssymb}
\usepackage{dsfont}

\newcommand{\poly}{\operatorname{poly}}

\newcommand{\DC}{\mathcal{D}}

\newcommand{\MC}{\mathcal{M}}

\newcommand{\OC}{\mathcal{O}}
\newcommand{\PC}{\mathcal{P}}

\newcommand{\VC}{\mathcal{V}}
\newcommand{\WC}{\mathcal{W}}

\newcommand{\Var}{{\rm Var}}

\renewcommand{\geq}{\geqslant}
\renewcommand{\leq}{\leqslant}

\renewcommand{\Im}{\text{Im}}

\newcommand{\spn}{{\rm span}}
\def\endh{\mathrm{End\ }\mch}

\newcommand{\ot}{\otimes}

\newcommand{\bs}{\textsf{BS}}

\newcommand{\lm}{\lambda }

\newcommand{\sg}{\sigma }

\def\C{\mathbb{C}}
\newcommand{\mcl}{\mathcal{L}}
\newcommand{\mcu}{\mathcal{U}}
\newcommand{\mca}{\mathcal{A}}
\newcommand{\mcw}{\mathcal{W}}
\newcommand{\mcf}{\mathcal{F}}
\newcommand{\mco}{\mathcal{O}}

\newcommand{\mch}{\mathcal{H}}
\newcommand{\mcm}{\mathcal{M}}

\newcommand{\mcp}{\mathcal{P}}
\newcommand{\mcd}{\mathcal{D}}
\newcommand{\mce}{\mathcal{E}}
\newcommand{\mct}{\mathcal{T}}
\newcommand{\mcv}{\mathcal{V}}

\newcommand{\mbc}{\mathbb{C}}
\newcommand{\mbr}{\mathbb{R}}
\newcommand{\mbs}{\mathbb{S}}
\newcommand{\mbz}{\mathbb{Z}}
\newcommand{\mbe}{\mathbb{E}}

\def\VC{\mathcal{V}}
\def\PC{\mathcal{P}}
\def\MC{\mathcal{M}}

\newcommand{\mfh}{\mathfrak{h}}
\newcommand{\mfg}{\mathfrak{g}}
\newcommand{\mfe}{\mathfrak{e}}
\newcommand{\mff}{\mathfrak{f}}
\newcommand{\mfb}{\mathfrak{B}}
\newcommand{\mfu}{\mathfrak{u}}
\newcommand{\mfsu}{\mathfrak{su}}
\newcommand{\mfsp}{\mathfrak{sp}}
\newcommand{\mfso}{\mathfrak{so}}
\newcommand{\mfsl}{\mathfrak{sl}}
\renewcommand{\Im}{{\rm Im}}

\def\be{\begin{equation}}
\def\ee{\end{equation}}
\def\bs{\begin{split}}
\def\e{\end{split}}
\def\ba{\begin{eqnarray}}
\def\bea{\begin{eqnarray}}

\def\tea{\end{eqnarray}}
\def\ea{\end{eqnarray}}
\def\eea{\end{eqnarray}}
\def\w{\omega}

\def\SU{\text{SU}}
\def\SO{\text{SO}}

\def\U{\mathrm{U}}
\def\a{\alpha}
\def\b{\beta}

\def\a{\alpha}

\def\b{\beta}

\def\g{\mathfrak{g}}

\def\tn{^\otimes n}
\def\tk{^\otimes k}

\def\SU{\text{SU}}

\def\U{\mathrm{U}}
\def\a{\alpha}
\def\b{\beta}

\newtheorem{lemma}{Lemma}

\def\tn{^{\otimes n}}

\def\tk{^{\otimes k}}

\newcommand{\id}{\mathds{1}}

\renewcommand{\a}{\alpha}
\renewcommand{\b}{\beta}

\newcommand{\arr}{\xrightarrow[]{}}

\newcommand{\sdket}[1]{| #1 \rangle\!\rangle}
\newcommand{\sdbra}[1]{ \langle \!\langle #1|}

\newcommand{\sbraket}[2]{ \langle#1 | #2 \rangle}

\addtocontents{toc}{\protect\setcounter{tocdepth}{0}}

\def\SU{\textsf{SU}}
\def\SO{\textsf{SO}}
\def\O{\textsf{O}}
\def\SP{\textsf{SP}}

\def\U{\textsf{U}}
\def\SU{\textsf{SU}}

\def\triv{{\rm triv}}

\def\sg{\sigma}

\def\g{\gamma}
\def\SU{\textsf{SU}}

\def\U{\textsf{U}}
\def\triv{{\rm triv}}

\def\poly{{\rm poly}}

\newcommand{\mc}[1]{\mathcal{#1}}

\newcommand\mbb[1]{\mathbb{#1}}

\DeclareMathOperator*{\expect}{\mathbb{E}}

\def\be{\begin{equation}}
\def\te{\end{equation}}
\def\ee{\end{equation}}
\def\ba{\begin{eqnarray}}
\def\bea{\begin{eqnarray}}

\def\tea{\end{eqnarray}}
\def\ea{\end{eqnarray}}
\def\eea{\end{eqnarray}}

\newcommand{\floor}[1]{\left \lfloor #1 \right \rfloor }

\begin{document}

\makeatletter
\newif\ifinappendixtoc  

\newcommand{\tableofcontentsappendixonly}{%
  \begingroup
    \inappendixtocfalse

    \@ifundefined{l@section}{}{%
      \let\ao@l@section\l@section
      \def\l@section##1##2{\ifinappendixtoc \ao@l@section{##1}{##2}\fi}%
    }
    \@ifundefined{l@subsection}{}{%
      \let\ao@l@subsection\l@subsection
      \def\l@subsection##1##2{\ifinappendixtoc \ao@l@subsection{##1}{##2}\fi}%
    }
    \@ifundefined{l@subsubsection}{}{%
      \let\ao@l@subsubsection\l@subsubsection
      \def\l@subsubsection##1##2{\ifinappendixtoc \ao@l@subsubsection{##1}{##2}\fi}%
    }

    \def\AppendixTOCMark{\global\inappendixtoctrue}%

    \tableofcontents
  \endgroup
}
\makeatother

\title{
   Classical shadows with arbitrary group representations
}

\author{Maxwell West}
\affiliation{Theoretical Division, Los Alamos National Laboratory, Los Alamos, New Mexico 87545, USA}
\affiliation{School of Physics, University of Melbourne, Parkville, VIC 3010, Australia}

\author{Fr\'{e}d\'{e}ric Sauvage}
\affiliation{Quantinuum, Partnership House, Carlisle Place, London SW1P 1BX, United Kingdom}

\author{Aniruddha Sen}
\affiliation{Department of Computer Science, University of Texas at Austin, Austin, TX, 78712, USA}
\affiliation{Theoretical Division, Los Alamos National Laboratory, Los Alamos, New Mexico 87545, USA}

\author{Roy Forestano}
\affiliation{Physics Department, Institute for Fundamental Theory, University of Florida, Gainesville, FL, 32611, USA}

\author{David Wierichs}
\affiliation{Xanadu, Toronto, ON, M5G 2C8, Canada}

\author{\\Nathan Killoran}
\affiliation{Xanadu, Toronto, ON, M5G 2C8, Canada}

\author{Dmitry Grinko}
\affiliation{QuSoft, Amsterdam, The Netherlands}
\affiliation{Institute for Logic, Language and Computation, University of Amsterdam, Amsterdam, The Netherlands}
\affiliation{Korteweg-de Vries Institute for Mathematics, University of Amsterdam, Amsterdam, The Netherlands}

\author{M. Cerezo}
\affiliation{Information Sciences, Los Alamos National Laboratory, Los Alamos, New Mexico 87545, USA}
\affiliation{Quantum Science Center, Oak Ridge, TN 37931, USA}

\author{Mart\'{i}n Larocca}
\affiliation{Theoretical Division, Los Alamos National Laboratory, Los Alamos, New Mexico 87545, USA}
\affiliation{Quantum Science Center, Oak Ridge, TN 37931, USA}

\begin{abstract}
Classical shadows (CS) 
has recently emerged as an  important framework to efficiently predict  properties of an unknown quantum state. A common strategy in CS protocols is to parametrize the basis in which one measures the state  by a random group action; many examples of this have been proposed and  studied on a case-by-case basis. 
In this work, we present a unified theory that allows us to simultaneously understand  CS protocols based on sampling from general group representations, extending previous approaches 
that worked in simplified (multiplicity-free) settings. 
We identify a class of measurement bases which we call ``centralizing bases''  
that allows us to analytically characterize and invert the measurement channel, minimizing classical post-processing costs.
We complement this analysis by deriving general bounds on the sample-complexity necessary to obtain estimates of a given precision. 
Beyond its unification of previous CS protocols, our method
allows us to readily generate new protocols based on other groups, or different representations of previously considered ones. For example, we characterize novel shadow protocols based on sampling from  the spin and tensor representations of $\SU(2)$,   symmetric and orthogonal groups, and the exceptional Lie group $G_2$.

\end{abstract}

\maketitle

\section{Introduction}

Learning properties of unknown quantum states is a key component of quantum computation and information, but is significantly hindered by the exponential growth in their complexity as a function of system size. Indeed, this typically renders complete characterization intractable for even the relatively modestly sized systems amenable to experimental control today.
To combat this, recent protocols~\cite{huang2020predicting,aaronson2018shadow,cotler2020quantum,sugiyama2013precision,elben2020mixed,rath2021quantum,zhao2021fermionic,low2022classical,wan2023matchgate,paini2021estimating,van2022hardware,ippoliti2024classical,zhao2024group,king2024triply,jerbi2023shadows,koh2022classical,chen2021robust,hearth2024efficient,bertoni2024shallow,chan2022algorithmic,sauvage2024classical,sack2022avoiding,Koh_2022,Helsen_2023,kunjummen2023shadow,denzler2023learningfermioniccorrelationsevolving,Wu_2024,west2024random,grier2024sample,brandao2020fast,bertoni2022shallow,vitale2024estimation,somma2024shadow,west2025real,bringewatt2025classical} have focused on efficiently learning a small subset of favoured observables, for example few-body~\cite{huang2020predicting} or symmetry-respecting~\cite{sauvage2024classical} operators. In particular, the framework developed in Ref.~\cite{huang2020predicting} allows for the prediction of many expectation values from ``classical shadows'' (CS) of a target state $\rho$, constructed via measurements made after a random action of some unitary ensemble. 
Given a desired set of properties, it is preferable to work with a protocol that minimizes the variance of the CS estimator; conversely, given a shadows protocol, a key part of the analysis is the determination of the set of observables that lead to small variances, and thus can be  efficiently estimated.

\begin{figure*}[t]
  \includegraphics[width=.6\textwidth]{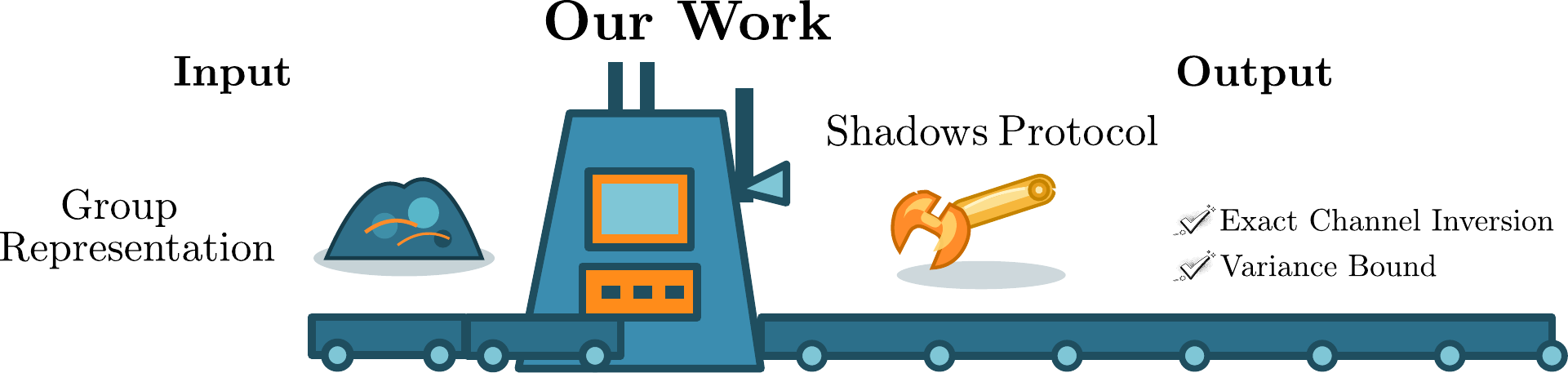}
  \caption{
Schematic overview of our framework. From any choice of a group representation
$R:G\to U(\mathcal H)$ (and a choice of commuting subgroup $H\subseteq G$
with a non-degenerate simultaneous eigenbasis) we obtain a classical-shadows
protocol such that a) the resulting measurement channel is
constant on the isotypic components of operator space,
$\MC=\sum_{\lambda} a_\lambda^{H} \PC_\lambda^{\VC}$ (see Theorem~\ref{thm:mc}), and therefore channel inversion
$\MC^{-1}=\sum_{\lambda}(a_\lambda^{H})^{-1} \PC_\lambda^{\VC}$ is exact, and b) the
sample complexity is controlled by the visibility factors $a_\lambda^{H}$ via
general variance bounds (see Theorem~\ref{thm:var}). In this sense, our framework turns group
representations into off-the-shelf shadows protocols.}
\label{fig:schematic}
\end{figure*}
 
The use of group actions to generate random measurement bases lies at the heart of CS protocols, providing a powerful framework for quantum state estimation. Traditionally, groups such as the local and global Clifford groups~\cite{huang2020predicting} -- or more recently, matchgate circuits~\cite{zhao2021fermionic,wan2023matchgate} -- have been employed to construct these measurement ensembles, each analyzed within its own specialized context. While this case-by-case approach has proven effective, it complicates exploration and experimentation with different groups, as each requires a custom inversion strategy and corresponding variance bounds. An important step in an unifying direction was taken in Ref.~\cite{chen2021robust}, where 
it was shown that if a certain action of the group is multiplicity-free, then one can derive a general expression for the corresponding shadows measurement channel and invert it (a necessary component of the classical postprocessing) analytically, with no significant computational overhead.  
However, many groups of practical interest do not satisfy this assumption, considerably limiting the applicability of the existing analysis.

The primary contribution of this work is to show that it is possible to transcend this multiplicity constraint and derive analytically invertible channels for a large class of non-multiplicity-free representations of arbitrary compact groups $G$. Key to this is measuring in a (non-degenerate) simultaneous eigenbasis of an abelian subgroup $H$ of $G$. We find that different choices of $H$ lead naturally to shadows protocols that variously favor different sets of target observables, all for the same choice of representation of the parent group $G$.

We also derive a sequence of bounds on the variance of the resulting estimators that depend purely on the relation of the target observable $O$ to the $G$-module structures of $\mch$ and  $\eh$. 
Roughly speaking, we will see that observables that have an appreciable overlap with an exponentially large $G$-module will be prohibitively expensive to be learned accurately;
a second key result of ours is to quantify this phenomenon. Our work, schematically depicted in Figure~\ref{fig:schematic}, provides out-of-the-box access to a broad family of CS protocols, parametrized by a group, an abelian subgroup, and a representation.

Our framework encompasses and (when $G$ {and $H$} are chosen accordingly) reproduces  known results for schemes based on global and local Cliffords~\cite{huang2020predicting,ippoliti2024classical}, fermionic Cliffords~\cite{zhao2021fermionic,low2022classical,wan2023matchgate}, orthogonal~\cite{west2025real} and symplectic unitaries~\cite{west2024random}. 
These direct reductions occur as the aforementioned protocols involve (if implicitly) making measurements in a non-degenerate weight basis of a group whose adjoint action decomposes multiplicity-freely. For all of these examples, our formalism allows the measurement channels to be obtained with significantly less effort than in the original derivations. In the cases of $\SU(2)$, the orthogonal group, and  we furthermore obtain shadow protocols which differ from those obtained from those groups in previous literature~\cite{west2025real,liang2024real,sauvage2024classical,huang2020predicting}. In the former instance, we  derive a sample- and computationally-efficient method for estimating $n$-qubit permutation-invariant operators $O$, with a variance scaling as 
${\rm poly}\hspace{0.5mm}(n, \|O\|_\infty)$,
allowing us to efficiently learn  expectation values of \textit{any} permutation-invariant operator with spectral norm polynomial in the system size.

Beyond these examples, we develop novel shadow protocols based on the symmetric group and, more exotically, the exceptional Lie group $G_2$. As we shall see, our formalism makes such extensions fairly mechanical and straightforward for groups whose representation theory is well-understood.

\section{Main results}
Let us begin by establishing some notation. 
Given access to an unknown state $\rho$ acting on a Hilbert space $\mch$, a CS protocol applies a random unitary $U$ sampled from an ensemble $\mce\subset \U(\mch)$, followed by a measurement in a basis $\mcw=\{ \ket{w}\}_w$ of $\mch$ with (say) outcome $w$, and records a ``snapshot'', given by the pair $(w,U)$. We will assume that $\mce$ is a 3-design~\cite{mele2024introduction} over a group $G$, and consider the effect of various choices of groups.  To produce an estimate for the expectation value $o=\Tr[\rho O]$ of some observable $O$, the shadows protocol collects several pairs, and averages the single-shot estimates $\hat{o}_{w,U} = \Tr[  \MC^{-1}( U^\dagger \ketbra{w} U)\, O]$, where 
\begin{equation}\label{eq:mcinit}
\mcm  = \expect_{U\sim \mce} {\rm Ad}_{U^\dagger}\circ \mca_\mcw \circ {\rm Ad}_{U}
\end{equation}
denotes the so-called  \textit{measurement channel}\footnote{Such details of CS are reviewed in Appendix~\ref{sec:cs}.}. Here $\mca_\mcw(-)=\sum_w \bra{w}(-)\ket{w}\ketbra{w}$ is a dephasing channel with respect to $\mcw$ and ${\rm Ad}_U(-) = U(-)U\ad$ the adjoint action of $U$.  That is, $\MC$ is simply the average of the action of the unitary ensemble on $\mca_\mcw$. Note that the expression for the estimates implies that being able to invert $\MC$ is fundamental for a shadows protocol to work. On occasion, when the emphasis drawn to the basis dependence of $\mcm$ seems worth the additional notational clutter, we will write $\mcm_\mcw$ for the measurement channel associated to the basis $\mcw$.

Now, suppose $G$ is a compact group with a unitary action on $\HC$, that is, with a group homomorphism\footnote{We can relax this assumption and assume that $R$ is a projective representation of the group $G$.} $R:G\arr \U(\HC)$. The $G$-module $\HC$ decomposes under this action as
\begin{equation}
\mch\cong \bigoplus_{\eta \in \widehat{\mch} }\bigoplus_{i=1}^{m_{\eta}^\HC} \,\,\HC^{\eta,i} = \bigoplus_{\eta \in \widehat{\mch}} \mbb{C}^{m_{\eta}^\HC} \ot V^{\eta}_G\,,
\end{equation}
where we use $\HC^{\eta,i} \subseteq \HC$ to denote $G$-submodules isomorphic to $V^\eta_G$, the irreducible $G$-module labelled by $\eta$, appearing with multiplicity $m_{\eta}^\HC$, and $\widehat{\mch}$ is the set of all irreps appearing in $\mch$. For a $G$-module $\HC$, we will write $\HC_\eta\coloneqq\bigoplus_{i=1}^{m_{\eta}^\HC} \,\,\HC^{\eta,i}\subset \HC$ to denote the so-called \textit{maximal $\eta$-isotypic component} or simply $\eta$-\textit{isotypic}, the submodule containing all irreducibles of the same type $\eta$ in the module $\HC$.

Next, let $\LC:=\eh$ denote the linear operators on $\HC$, which likewise constitute a $G$-module and thus decompose under $G$\footnote{We will typically use $\eta,\zeta$ for $G$-irreps appearing in $\mch$, and $\lm,\mu$ for irreps appearing in $\mcl$. It is traditional to write $\widehat{G}$ to denote the set of all inequivalent $G$-irreps.} into irreps,
\begin{equation}
    \mcl \cong \bigoplus_{\lm \in \widehat{\LC}} \bigoplus_{i=1}^{m_\lm^\LC} \LC^{\lm,i}= \bigoplus_{\lm \in \widehat{\LC}} \mbb{C}^{m_\lm^\LC}\ot V^\lm_G,\label{eq:ldecomp}
\end{equation}
where $\widehat{\LC}$ is the set of all irreps appearing in $\LC$.
Recalling that $\mce$ is taken to be a 3-design over $G$, the explicit twirl in Eq.~\eqref{eq:mcinit} makes it evident that $\MC$ is a $G$-equivariant linear operator on $\LC$, i.e., $\mcm(R(g)\rho R(g)\ad)=R(g)\mcm(\rho) R(g)\ad$ for all $g\in G$; this implies, for example, that  both its image and kernel constitute $G$-modules. They are especially relevant in our context since they correspond, respectively, to the subsets of operators that can and cannot be estimated unbiasedly by shadows. We will use $\LC^\VC \coloneqq {\rm Im}(\MC_\WC)$ to denote the former, the so-called \textit{visible space}, which is equivalently~\cite{van2022hardware} defined as:
\begin{equation}\label{eq:visspace}
\LC^\VC = \spn \Big\{ R(g) \ketbra{w}{w}  R(g)\ad\Big\}_{g,w}  \,.
\end{equation}
Note that
\begin{equation}
    \LC^\VC \cong \bigoplus_{\lm \in \widehat{\VC}} \mbb{C}^{m_\lm^{\VC}}\ot V^\lm_G,
\end{equation}
where $\widehat{\VC}$ is the set of irreps appearing in $\LC^\VC$ with multiplicity $m_\lm^{\VC} \leq m_\lm^{\LC}$ (since $\LC^\VC$ is a $G$-submodule inside $\LC$).
As $\LC^\VC$ is patently a $G$-module, let us write $\LC^\VC_\lm \subseteq \LC^\VC$ for its isotypic components, and $\PC^\VC_\lm$ for the corresponding projectors. That is, $\mcp_\lm^\mcv$ projects onto the isotypic $\lm$ within the space $\LC^\VC$, which may be strictly contained within the full $\lm$ isotypic of $\mcl$.

We make the following important definition:

\begin{restatable}{defn}{deffb}\label{def:fb}
We call $\WC$ a Fourier basis (FB) if it is
adapted to the irrep decomposition of $\HC$, that is, if it is of the form $\WC=\bigcup_{\eta,i} \WC^{\eta,i}$ with $\WC^{\eta,i}$ a basis for each irrep $\HC^{\eta,i}\subseteq \HC$.
\end{restatable}
It is not hard to see that when measuring in a FB, Eq.~\eqref{eq:visspace} forces the visible space $\LC^\VC$ to be constrained to the \textit{diagonal} operator submodule $\LC^\DC \coloneqq \bigoplus_{\eta,i} {\rm End}(\HC^{\eta,i})$. However, we will soon see this does not necessarily mean all operators are visible, as the  inclusion $\LC^\VC \subseteq \LC^\DC$ may be strict. Also note that $\LC^\DC \subseteq \LC$; in particular, if $\HC$ is irreducible then $\mcl^\DC=\mcl$.

Now, by virtue of the $G$-equivariance of $\MC$, one can (using Schur's lemma~\cite{fulton1991representation}) always write 
\begin{equation}
    \mcm \cong \bigoplus_{\lm \in \widehat{\VC}} \MC_\lm \ot \id_{d_\lm}\,,
\end{equation}
with $\MC_\lm \in {\rm End}(\mbb{C}^{m_\lm^\VC})$, viewed as embedded into ${\rm End}(\mbb{C}^{m_\lm^\LC})$. That is, the measurement channel acts non-trivially only on the multiplicity spaces of the decomposition Eq.~\eqref{eq:ldecomp}. Evidently, the extent to which this observation actually simplifies things for a given group depends on the magnitude of the corresponding $m_\lm^\VC$. For a multiplicity-free decomposition (i.e., with all $m_\lm^\VC=1$) the channel becomes amenable to analytic characterization; in particular its (pseudo-)inversion is trivial if one can  calculate the  $\MC_\lm$ (which in this case would be scalars). There are natural cases of interest, however, for which the decomposition is not multiplicity-free. For example, in the case of the diagonal action of $\SU(2)$ on a system of $n$ qubits one finds (see Appendix~\ref{sec:appdetails}) that the multiplicities can be of order $\exp (n)$. Generically  this will result in the channel being composed of exponentially large dense matrices, the inversion of which would be intractable\footnote{Recall that calculating the (pseudo-)inverse of $\mcm$ is a crucial part of the CS procedure (see Appendix~\ref{sec:cs} for more discussion on this point).}. Indeed, difficulties related to inverting large (if polynomial-sized) matrices have previously been encountered in the CS literature~\cite{sauvage2024classical}.

Certainly, however, these complications would vanish if each $\MC_\lm$ were proportional to the identity $\id_{m_\lm^\mcv}$,
corresponding to a $\MC$ of the form $\MC=\sum_{\lm \in \widehat{\VC}} a_\lm \PC_\lm^\VC$ for some coefficients $a_\lm$.
Remarkably, as we now discuss, it is often possible to engineer such a feature by a suitable choice of $\WC$. To that end, 
we begin with the following definition. 

\begin{restatable} 
{defn}{defnlmgood}\label{def_lmgood}
A basis $\mcw$ of $\HC$ is called ``$\lm$-centralizing'' if the corresponding channel $\MC$ acts as a scalar on the isotypic $\LC^\VC_\lm$,
\[
\MC \big|_{\LC^\VC_\lm} = a_\lm \PC_\lm^\VC
\]
for some $a_\lm \in \mbb{R}$.
\end{restatable}
\noindent
The value of $\lm$-centralizing bases is that, if one is interested in some $O\in \LC^\VC_\lm$, then one can trivially invert the action of the measurement channel on $O$; indeed $\mcm^{-1}(O)=a_\lm^{-1}O$.
As a first example, we have
\begin{restatable} 
{lemma}{lemfgood}\label{lem:fgood}
Any FB is trivial-centralizing. Furthermore, $a_\triv=1$.
\end{restatable}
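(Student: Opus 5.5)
The plan is to show directly that $\MC$ acts as the identity on the trivial isotypic component $\LC^\VC_\triv$, which gives both claims at once with $a_\triv=1$. Here $\LC^\VC_\triv$ is the subspace of $G$-invariant operators lying in the visible space. Since $\WC$ is a FB, the discussion after Definition~\ref{def:fb} gives $\LC^\VC\subseteq \LC^\DC=\bigoplus_{\eta,i}{\rm End}(\HC^{\eta,i})$. Hence every $O\in\LC^\VC_\triv$ is a $G$-invariant element of $\LC^\DC$.

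The first step is to characterise such $O$. Because $O$ lies in $\LC^\DC$, it is block diagonal, $O=\bigoplus_{\eta,i} O_{\eta,i}$ with $O_{\eta,i}\in{\rm End}(\HC^{\eta,i})$. Because $O$ is $G$-invariant, $R(g)OR(g)\ad=O$ for all $g$. Each $\HC^{\eta,i}$ is $G$-stable, so this holds block by block: $O_{\eta,i}$ commutes with the irreducible action of $G$ on $\HC^{\eta,i}$. Schur's lemma then gives $O_{\eta,i}=c_{\eta,i}\,\id_{\HC^{\eta,i}}$ for scalars $c_{\eta,i}$.

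The second step is to check that the dephasing channel fixes such an $O$. The FB satisfies $\WC=\bigcup_{\eta,i}\WC^{\eta,i}$ with $\WC^{\eta,i}$ a basis of $\HC^{\eta,i}$, so $O$ is diagonal in $\WC$. Concretely, $\bra{w}O\ket{w}=c_{\eta,i}$ and $O\ket{w}=c_{\eta,i}\ket{w}$ for $w\in\WC^{\eta,i}$. Therefore
\[
\mca_\WC(O)=\sum_{\eta,i}\sum_{w\in\WC^{\eta,i}} c_{\eta,i}\ketbra{w}=\sum_{\eta,i}c_{\eta,i}\,\id_{\HC^{\eta,i}}=O .
\]

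The third step uses invariance under the twirl in Eq.~\eqref{eq:mcinit}. For each $U=R(g)$ in the ensemble, ${\rm Ad}_U(O)=O$, so
\[
\MC(O)=\expect_U U\ad\,\mca_\WC(O)\,U=\expect_U U\ad O U=O .
\]
This holds for every $O\in\LC^\VC_\triv$, so $\MC|_{\LC^\VC_\triv}=\PC^\VC_\triv$, and $\WC$ is trivial-centralizing with $a_\triv=1$. Only the 1-design property of $\mce$ is needed here.

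No step is a real obstacle. The only point needing care is that the trivial isotypic of $\LC^\VC$ sits inside the invariant part of $\LC^\DC$ and not merely inside $\LC$; this is exactly what the FB assumption supplies through $\LC^\VC\subseteq\LC^\DC$. Without that inclusion, an invariant operator could intertwine two copies $\HC^{\eta,i}$ and $\HC^{\eta,j}$ and fail to be diagonal in $\WC$. I would justify the inclusion briefly from Eq.~\eqref{eq:visspace}: each $\ket{w}$ lies in a single $\HC^{\eta,i}$, and $R(g)$ preserves $\HC^{\eta,i}$, so each $R(g)\ketbra{w}R(g)\ad$ lies in ${\rm End}(\HC^{\eta,i})$.
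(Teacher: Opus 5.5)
Your proof is correct and follows essentially the same route as the paper. Both arguments use Schur's lemma to describe the $G$-invariant operators, use their invariance to pass them through the twirl, and use the Fourier-basis structure to see that the dephasing $\mca_\WC$ fixes the block identities $\id_{\HC^{\eta,i}}$.

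The only difference is organizational. You first restrict to $\LC^\VC_\triv\subseteq(\LC^\DC)^G$ using $\LC^\VC\subseteq\LC^\DC$. The paper instead works with the full spanning set $O_{i,j}=\sum_\a\ketbra{\eta,i,\a}{\eta,j,\a}$ of the commutant and checks directly that $\mcm(O_{i,j})=\delta_{i,j}O_{i,j}$. This additionally shows that the cross-copy intertwiners are annihilated, and identifies the visible trivial isotypic as the span of the $O_{i,i}$.
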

In fact, one can show (see Appendix~\ref{sec:gshaddetails}) that all such \textit{invariants} (that is, operators in the isotypic corresponding to the trivial action of $G$) which live in the visible space of the protocol commute, so that one does not really need shadows to simultaneously estimate them; the content of Lemma~\ref{lem:fgood} is then that their simultaneous eigenbasis is given by (any) FB. 
Next we define:

\begin{restatable}{defn}{defngood}\label{def_good}
A basis $\mcw$ of $\HC$ is called ``centralizing'' if it is $\lm$-centralizing for each $\lm \in \widehat{\VC}$, i.e., if the channel $\MC$ has the form
\begin{equation}\label{eq:gb}
    \MC = \sum_{\lm \in \widehat{\VC}} a_\lm \PC_\lm^\VC\,.
\end{equation}
\end{restatable}
The reason why we call such $\WC$ ``centralizing'' is that the corresponding $\MC$ are central in the algebra of $G$-invariant linear operators from $\mcl$ to $\mcl$.
The value of centralizing bases is evident: we can trivially invert $\MC$ over its image: $\MC^{-1}=\sum_{\lm \in \widehat{\VC}} a_\lm^{-1}  \PC^\VC_\lm$.
We are able to construct centralizing bases for a broad family of group representations, by making use of the following notion:
\begin{restatable}{defn}{defcse}\label{def:cse}
We call $\WC$ a ``commuting subgroup eigenbasis'' (CSE) if both  a) it is a FB as per Definition~\ref{def:fb}, and b) each $\WC^{\eta,i}$ is a simultaneous eigenbasis for an abelian subgroup $H\subseteq G$.
We say a CSE is ``non-degenerate'' (ND), and use NDCSE for short, if for every $\eta \in \widehat{\HC}$ the restriction of each $G$-module $\HC^{\eta,i}$ to the corresponding abelian subgroup $H$ is multiplicity-free.
\end{restatable}
We remark that one could, in principle, choose different subgroups $H^{\eta,i}$ for different irreps $\HC^{\eta,i}$, in which case all of our results would generalize smoothly. However, to keep the notation under control (and because none of our examples require such fine-grained control) we will henceforth assume for simplicity a global choice of $H$. We will, however, occasionally consider shadow protocols obtained from different global choices of \(H\) for the same parent group \(G\). 
In this context, we can prove the following result:

\begin{restatable}{thm}{thmmc}\label{thm:mc} 
Any NDCSE is centralizing, and thus leads to $\MC=\sum_{\lm \in \widehat{\VC}} a_\lm^H \PC^\VC_\lm$. 
Moreover, for every $\lm \in \widehat{\VC}$ each $a_\lm^H$ can be analytically described: $a_\lm^H = d^H_\lm/d_\lm$ with $d_\lm=\dim(V^\lm_G)$ and $d^H_\lm=\dim( (V^\lm_G)^H)$, the dimension of the $H$-invariant subspace in $V^\lm_G$,
with $H\subseteq G$ the commuting subgroup defining the NDCSE.
\end{restatable}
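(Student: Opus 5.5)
The plan is to show that, for an NDCSE, the dephasing channel $\mca_\mcw$ factors as a product of two pieces. One piece is a $G$-equivariant projection onto the diagonal operator module $\LC^\DC$. The other is the $H$-twirl $\mathcal{T}_H(X)\coloneqq\expect_{h\in H} R(h) X R(h)\ad$. The $G$-twirl in Eq.~\eqref{eq:mcinit} then only sees the $H$-twirl, and Schur's lemma finishes the argument. Concretely, let $\PC^\DC$ be the projector onto $\LC^\DC=\bigoplus_{\eta,i}{\rm End}(\HC^{\eta,i})$ along the off-diagonal blocks $\Pi_{\eta,i}\LC\,\Pi_{\zeta,j}$ with $(\eta,i)\neq(\zeta,j)$, where $\Pi_{\eta,i}$ is the orthogonal projector onto $\HC^{\eta,i}$. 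Since each $\HC^{\eta,i}$ is $G$-invariant, each $\Pi_{\eta,i}$ commutes with $R(g)$, so $\PC^\DC$ is $G$-equivariant. The claim to prove first is
\begin{equation*}
\mca_\mcw=\mathcal{T}_H\circ\PC^\DC .
\end{equation*}

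To prove this factorization, note first that $\mca_\mcw$ annihilates every off-diagonal block because $\mcw$ is a FB, exactly as $\PC^\DC$ does. So it suffices to compare $\mca_\mcw$ and $\mathcal{T}_H$ on a single block ${\rm End}(\HC^{\eta,i})$. There, $\mcw^{\eta,i}$ is a simultaneous eigenbasis of the abelian group $H$: $R(h)\ket{w}=\chi_w(h)\ket{w}$ for one-dimensional characters $\chi_w$. Then $\mathcal{T}_H(\ketbra{w}{w'})=\expect_h \chi_w(h)\overline{\chi_{w'}(h)}\,\ketbra{w}{w'}$, and by character orthogonality this equals $\delta_{\chi_w,\chi_{w'}}\ketbra{w}{w'}$. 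The non-degeneracy assumption says that the $\chi_w$, for $w\in\mcw^{\eta,i}$, are pairwise distinct. Hence $\mathcal{T}_H$ keeps exactly the diagonal matrix units, i.e.\ it coincides with $\mca_\mcw$ on this block. This is the step where the ND hypothesis is essential, and I expect it to be the only genuinely delicate point. Note that across different blocks $\mathcal{T}_H$ alone need not kill cross terms, since characters can repeat between irreps; this is why $\PC^\DC$ must be kept in the factorization.

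Next I would use $G$-equivariance of $\PC^\DC$ to write
\begin{equation*}
\MC=\Big(\expect_{g} {\rm Ad}_{R(g)\ad}\circ\mathcal{T}_H\circ{\rm Ad}_{R(g)}\Big)\circ\PC^\DC \eqqcolon \Phi\circ\PC^\DC .
\end{equation*}
The 3-design assumption suffices here, since all quantities are quadratic in $R(g)\otimes \overline{R(g)}$. Because $\LC^\VC\subseteq\LC^\DC$, we have $\MC|_{\LC^\VC}=\Phi|_{\LC^\VC}$, so it remains to analyze $\Phi$.

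$\Phi$ is $G$-equivariant. On the $\lm$-isotypic component of $\LC$, identified with $\mbb{C}^{m_\lm^\LC}\ot V^\lm_G$, the group $H\subseteq G$ acts only on the second factor. Hence $\mathcal{T}_H$ acts as $\id\ot P^\lm_H$, where $P^\lm_H$ is the projector onto $(V^\lm_G)^H$. Twirling $\id\ot P^\lm_H$ over $G$ gives $\id\ot\expect_g \rho_\lm(g)^{-1}P^\lm_H\rho_\lm(g)$. By Schur's lemma on the irreducible $V^\lm_G$, this equals $\id\ot\frac{\Tr P^\lm_H}{d_\lm}\id_{d_\lm}=\frac{d^H_\lm}{d_\lm}\,\id$. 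So $\Phi$, and hence $\MC$, acts on $\LC^\VC_\lm\subseteq\LC_\lm$ as the scalar $a^H_\lm=d^H_\lm/d_\lm$. This gives both the centralizing property and the formula. As a consistency check, $a^H_\lm=0$ exactly when $\lm$ cannot appear in the image of $\MC$, so every $\lm\in\widehat{\VC}$ has $d^H_\lm>0$ and the inverse $\MC^{-1}=\sum_\lm (a^H_\lm)^{-1}\PC^\VC_\lm$ is well defined. For $\lm=\triv$ the formula gives $a_\triv=1$, recovering Lemma~\ref{lem:fgood}.
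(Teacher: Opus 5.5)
Your proof is correct and takes essentially the same route as the paper: your factorization $\mathcal{A}_\mathcal{W}=\mathcal{T}_H\circ\mathcal{P}^\mathcal{D}$ is exactly Lemma~\ref{lem:2}, and it is followed by a Schur's-lemma evaluation of the $G$-twirl of $\mathcal{T}_H$. The only difference is the order of averaging: you average over $H$ first and then $G$-twirl the projector $P^\lambda_H$ onto $(V^\lambda_G)^H$, whereas the paper $G$-twirls each fixed $h$ using the conjugacy-class twirl $\mathcal{C}_h=\sum_\lambda \frac{\chi_\lambda(h)}{d_\lambda}\,\mathrm{id}_{V^\lambda}$ of Lemma~\ref{lem:conjtwirl} and then averages the characters over $H$, which by linearity is the same computation.
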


The situation is depicted in Fig.~\ref{fig:1}.
Now, having analytical access to the scalars $a_\lm^H$, we can straightforwardly invert $\MC$ over its image: 
\begin{equation}
    \mcm^{-1} =  \sum_{\lm \in \widehat{\VC}} \frac{1}{a_\lm^H} \mcp_\lm^{\VC}\,.
\end{equation}
Previously, we mentioned that if $\mcw$ is a FB, then the visible space is contained in $\LC^\DC$. We can now refine our identification of the visible operators in a non-degenerate CSE protocol: They are those diagonal isotypics $\LC^\DC_\lm$ with $a_\lm^H > 0$
\begin{equation}
\LC^{\VC} = \bigoplus_{\lm \,:\, a_\lm^H > 0 } \LC^\DC_\lm \subseteq \LC^\DC \,.
\end{equation}
For example, an irreducible $\HC$ combined with  measurements with respect to an NDCSE may lead to invisible operators if certain $\lm$ appearing in $\LC$ have no $H$-invariants, i.e., if $d^H_\lm=a_\lm^H=0$. A particularly striking example of this occurs when taking $G$ to be the set of $n$-qubit Pauli strings (with phases in $\{\pm 1,\,\pm i\}$), acting in the usual way on $\mch=(\mbc^2)\tn$ (this turns out to be an irreducible action). Note that the computational basis is an NDCSE for $H=\{ Z^\a\}_{\a\in\{0,1\}^n}$. In this case $\LC^\DC=\LC$ decomposes into $4^n$ one-dimensional irreps, each spanned by a single Pauli string $P$; these irreps have $a_{P}= 0$ if $P$ is not composed of only $I$ and $Z$, and $a_{P}= 1$ otherwise (see Appendix~\ref{sec:appdetails}). Thus, we have an example where the image of $\mcm$ is a \textit{strict} subspace of $\mcl^\DC$. 

More generally, operator isotypics with small $a_\lm^H$ (those corresponding to irreps where the $H$-invariant subspace is $d_\lm^H\ll d_\lm$) will be suppressed by $\mcm$, suggesting observables therein will be more difficult to reconstruct, being ``less visible'' in a sense that we now make precise. Recalling the direct correlation between the sample-complexity of a CS protocol and the variance of its estimators~\cite{huang2020predicting}, the proceeding intuition is characterized by the following theorem.

\begin{restatable}{thm}{thmvar}\label{thm:var}
In the context of an NDCSE protocol, let $O\in\LC$ with components $O^\lm $ in each visible $\LC^\DC_\lm$. Given a target $o=\Tr[\rho O]$, the variance of the single-shot estimator $\hat{o}=\Tr[\hat{\rho} O]$ satisfies
\begin{equation}
\Var[\hat{o}] \leq \sum_{\lm} \frac{\|O^\lm\|^2_2}{a_\lm^H} \leq \|O\|^2_2 \max_{\lm\hspace{0.5mm}:\hspace{0.5mm}O^\lm\neq 0} \frac{1}{a_\lm^H},
\label{eq:2nb}         
\end{equation}
and additionally
\begin{equation}
\Var[\hat{o}] \leq  \left\|   \sum_{\lm} \frac{O^{\lm} }{a_\lm^H}  \right\|_\infty^2. \label{eq:inb}
\end{equation}
\end{restatable}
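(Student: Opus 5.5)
We bound the variance by the second moment. Set $\tilde O \coloneqq \MC^{-1}(O)=\sum_\lm O^\lm/a_\lm^H$, using Theorem~\ref{thm:mc} and the self-adjointness of $\MC$ with respect to the Hilbert--Schmidt inner product (it is a twirl of the self-adjoint dephasing channel $\mca_\WC$). Then the single-shot estimator is $\hat o = \bra{w}U\tilde O U\ad\ket{w}$, and
\[
\Var[\hat o]\leq \expect[\hat o^2]=\expect_{U}\sum_w \bra{w}U\rho U\ad\ket{w}\,\bra{w}U\tilde O U\ad\ket{w}^2 .
\]
The bound \eqref{eq:inb} follows immediately: each $|\bra{w}U\tilde OU\ad\ket{w}|\leq\|\tilde O\|_\infty$, and the Born weights sum to one. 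Only the invisible parts of $O$ are dropped here, since they do not affect $\Tr[\rho O]$ on the visible space.

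For \eqref{eq:2nb}, we first bound the Born weight crudely. Writing the second moment as a sum over $w$ weighted by $\bra{w}U\rho U\ad\ket{w}\leq 1$ would yield $\expect_U\sum_w \bra{w}U\tilde OU\ad\ket{w}^2$, which is not obviously what we want. The cleaner route is to rewrite this quantity as
\[
\expect_U\sum_w \bra{w}U\tilde OU\ad\ket{w}^2=\Tr\!\big[\tilde O\,\MC(\tilde O)\big]=\langle \tilde O,\MC\tilde O\rangle_{HS},
\]
which holds because $\MC(X)=\expect_U\sum_w \bra{w}UXU\ad\ket{w}\,U\ad\ketbra{w}U$. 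Then by Theorem~\ref{thm:mc},
\[
\langle \tilde O,\MC\tilde O\rangle=\sum_\lm a_\lm^H\,\frac{\|O^\lm\|_2^2}{(a_\lm^H)^2}=\sum_\lm\frac{\|O^\lm\|_2^2}{a_\lm^H},
\]
using orthogonality of the isotypic components $\LC^\VC_\lm$. For this we need the projectors $\PC^\VC_\lm$ to be orthogonal projectors, which holds because isotypics of a unitary representation on $\LC$ are mutually orthogonal. The second inequality in \eqref{eq:2nb} then follows from $\sum_\lm\|O^\lm\|_2^2\leq\|O\|_2^2$.

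The step I expect to be the main obstacle is justifying the crude bound $\bra{w}U\rho U\ad\ket{w}\leq1$, and making sure it does not cost a dimension factor. A Cauchy--Schwarz step involving $\rho$ would introduce a factor $d$, so the argument must avoid it. Here the bound works because each term $\bra{w}U\tilde OU\ad\ket{w}^2$ is nonnegative, so replacing the weight by $1$ only increases the sum. Hence no dimension factor appears, and the result is exactly $\Tr[\tilde O\MC(\tilde O)]$. I would also check the real-valuedness needed for the squares: we may take $O$ Hermitian, and then $\tilde O$ is Hermitian because each $\PC^\VC_\lm$ commutes with the adjoint operation, the conjugate of the isotypic being $\LC_{\lm^*}$ with $a_{\lm^*}^H=a_\lm^H$. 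If $\lm\neq\lm^*$, I would group these components in pairs.
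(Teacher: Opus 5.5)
Your proposal is correct and follows essentially the paper's proof. You bound the variance by the second moment and replace the Born weights by $1$. You then identify $\expect_U\sum_w \bra{w}U\tilde O U\ad\ket{w}^2$ with $\langle \tilde O,\MC\tilde O\rangle_{HS}=\sum_\lm \|O^\lm\|_2^2/a_\lm^H$; the paper does the same computation by expanding in an orthonormal eigenbasis of $\MC$. For Eq.~\eqref{eq:inb} you use H\"older's inequality together with $\sum_w\ketbra{w}=\id$, as the paper does. Your extra care about the Hermiticity of $\tilde O$ is fine. Strictly, the dagger intertwines $\PC^\VC_\lm$ with $\PC^\VC_{\lm^*}$ rather than commuting with each projector, but your pairing remark already accounts for this.
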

Theorem~\ref{thm:var} guarantees that an operator $O$ which lives in a subspace with $1/a_\lm^H \in \mco(\poly (n))$ can be estimated with a polynomial number of samples if additionally $\|O\|_\infty\in \mco(\poly\, n)$. Conversely, to accurately estimate expectation values of operators with $1/a_\lm^H\in \OC(\exp(n))$ may be intractable (note that this is not guaranteed, as the bounds of Theorem~\ref{thm:var} can be loose). As we discuss in Appendix~\ref{sec:gshaddetails}, these bounds can be tightened under various additional assumptions on the operators being measured. The proofs of the above theorems may also be found in Appendix~\ref{sec:gshaddetails}.

Now, since upon restriction, degeneracy can only increase, when looking for NDCSEs it is best to pick $H$ ``maximally\footnote{That is, not properly contained in an abelian subgroup of $G$.}''. {In general, however, a group may have many different maximal abelian subgroups, leading to shadow protocols with the same group $G$ but different measurement bases. Sometimes we have a natural candidate: For example,} if $G$ is a Lie group, there is an obvious choice of maximal commuting $H$: the \textit{Cartan subgroup} (see Appendix~\ref{sec:wc}). In this context, the CSE is the corresponding \textit{weight-basis}~\cite{fulton1991representation} and the shadows protocol is promised to be centralizing when weights on each $G$-irrep $V^\eta$ are non-degenerate.
More generally, one could consider commuting \textit{subalgebra} eigenbases, which are defined as in Definition~\ref{def:cse} except with respect to a commuting subalgebra of the group algebra $\mbc[G]$~\cite{fulton1991representation}. The well-known Gelfand-Tsetlin bases
~\cite{vershik2005new,molev2006gelfand} furnish examples of non-degenerate commuting subalgebra eigenbases. In contrast to Theorem~\ref{thm:mc}, on the other hand, we find
\begin{restatable}{lemma}{lemaba}\label{lem:aba}
 Non-degenerate commutative subalgebra eigenbases are not necessarily centralizing.
\end{restatable}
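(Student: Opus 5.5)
The plan is to exhibit a single explicit counterexample. The natural source is a representation that is not multiplicity-free on operator space, where the Cartan weight basis is degenerate, but a Gelfand--Tsetlin (GT) basis is non-degenerate. I would take $G=\SU(3)$ acting on $\HC=V^{(1,1)}\cong\mbb{C}^8$, the adjoint representation. This irrep has a two-dimensional zero-weight space, so the Cartan subgroup does not give an NDCSE and Theorem~\ref{thm:mc} does not apply. The chain $\U(1)\subset\U(2)\subset\U(3)$, however, separates the two zero-weight vectors: one is the $\U(2)$-singlet and the other lies in the $\U(2)$-triplet. So the GT basis is the eigenbasis of the commutative GT subalgebra of $\mbb{C}[G]$, and it is non-degenerate. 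Since $\HC$ is irreducible, $\LC^\DC=\LC=\mathrm{End}(\mbb{C}^8)\cong 8\otimes\bar 8=1\oplus 8\oplus 8\oplus 10\oplus\overline{10}\oplus 27$. The adjoint irrep $\lm=8$ therefore has $m^\LC_8=2$, and it suffices to show that $\MC_8$ is not proportional to $\id_2$.

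The key computational step is a Schur-lemma formula for the $2\times 2$ block $\MC_\lm$. I would choose orthonormal bases $\{O^{(i)}_k\}_{k=1}^{d_\lm}$, equivariantly matched, for the two copies $i=1,2$ of $V^\lm_G$ inside $\LC$. Averaging over $G$ in Eq.~\eqref{eq:mcinit} (a $2$-design suffices here) gives
\begin{equation}
(\MC_\lm)_{ij}=\frac{1}{d_\lm}\sum_{k=1}^{d_\lm}\Tr\!\left[\mca_\WC\big(O^{(i)}_k\big)^\dagger\,\mca_\WC\big(O^{(j)}_k\big)\right].
\end{equation}
So $\MC_\lm$ is, up to $1/d_\lm$, the Gram matrix of the dephased copies, summed over the irrep index. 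I would use the two standard copies of the octet. The ``$f$-type'' copy is spanned by $F_a=\mathrm{ad}(T_a)$, the antisymmetric generators in the adjoint representation. The ``$d$-type'' copy is spanned by $D_a$ with $(D_a)_{bc}\propto d_{abc}$, using the real (Gell-Mann) structure constants. Both copies are normalized in Hilbert--Schmidt norm, after which $(\MC_8)_{ff}$, $(\MC_8)_{dd}$ and $(\MC_8)_{fd}$ are computed in the GT basis.

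Concretely, the GT basis of the adjoint consists of the six root vectors $E_{\pm\alpha}$ together with two zero-weight vectors, $H_3$ (the $\U(2)$-triplet member) and $H_8$ (the singlet). These are exactly the Cartan--Weyl combinations $T_1\pm iT_2$, etc. Only the weight-zero operators, namely $F_3,F_8,D_3,D_8$, have nonzero diagonal. For weight reasons the remaining $O_k$ have vanishing diagonal. The computation therefore reduces to the diagonal entries of $\mathrm{ad}(H_3),\mathrm{ad}(H_8)$ (the root values, with zeros on the Cartan part) and of $D_3,D_8$ in this basis.

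These diagonals are explicit. I expect the $f$-type block to yield the sum of the squared root values, and the $d$-type block to yield a different number, because $d_{3bb}$ and $d_{8bb}$ are nonzero on the Cartan entries $b=3,8$ as well. I also expect the off-diagonal entry $\sum_k\Tr[\mca(F_k)^\dagger\mca(D_k)]$ to vanish or not depending on the relative phases, which is harmless: unequal diagonal entries already show $\MC_8\not\propto\id_2$, so the basis is not $8$-centralizing, hence not centralizing. The main obstacle is getting this bookkeeping right, specifically the normalizations and the correct GT vectors at zero weight. As a safeguard I would verify the two diagonal entries numerically by constructing $\MC$ directly from Haar-random samples, or from the exact Gram formula above.
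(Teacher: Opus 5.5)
Your proposal is correct, and it takes a genuinely different route from the paper. The paper's counterexample is the finite group $S_5$ acting on the six-dimensional irrep $V^{[3,1,1]}$. It builds every representing matrix in the Gelfand--Tsetlin basis from the adjacent-transposition formula, then evaluates the channel by a computer sum over all $120$ group elements. It finds $\mathrm{diag}(11/90,43/120)$ on $2V^{[3,2]}$ and $\mathrm{diag}(0,1/24)$ on $2V^{[2,2,1]}$. You instead take $\mathrm{SU}(3)$ on its adjoint $\mathfrak{sl}_3\cong\mathbb{C}^8$ with $\langle X|Y\rangle=\Tr[X^\dagger Y]$. There the Gelfand--Tsetlin basis $\{E_{ij}\}_{i\neq j}\cup\{H_3,H_8\}$ is the Cartan--Weyl basis, with the degenerate zero-weight space split by the $\mathrm{U}(2)$ Casimir. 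Your Gram formula $(\mathcal{M}_\lambda)_{ij}=d_\lambda^{-1}\sum_k\Tr[\mathcal{A}(O^{(i)}_k)^\dagger\mathcal{A}(O^{(j)}_k)]$ then removes the Haar integral altogether. It is valid because the $F$- and $D$-copies are images of one orthonormal basis of $\mathfrak{g}$ under equivariant maps, and they are Hilbert--Schmidt orthogonal (antisymmetric versus symmetric).

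The step you leave as an expectation does go through. Take $D_X(Y)=XY+YX-\tfrac{2}{3}\Tr[XY]\,\id$. For unit $X$ one has $\langle E_{ij}|\mathrm{ad}X|E_{ij}\rangle=X_{ii}-X_{jj}$ and $\langle E_{ij}|D_X|E_{ij}\rangle=X_{ii}+X_{jj}$. For a unit Cartan vector $H$ one has $\langle H|\mathrm{ad}X|H\rangle=0$ and $\langle H|D_X|H\rangle=2\Tr[XH^2]$. The norms are $\|\mathrm{ad}X\|_2^2=6$ and $\|D_X\|_2^2=10/3$. Consequently:
\begin{itemize}
\item each root vector contributes $|\alpha_{ij}|^2/6=1/3$ to $8(\mathcal{M}_8)_{ff}$, and each Cartan vector contributes $0$;
\item each of the eight basis vectors contributes exactly $1/5$ to $8(\mathcal{M}_8)_{dd}$;
\item the cross terms cancel between $E_{ij}$ and $E_{ji}$.
\end{itemize}
This gives $\mathcal{M}|_{2V^{8}}\cong\mathrm{diag}(1/4,1/5)\otimes\id_8$, which is not scalar.

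As for what each route buys: yours is analytic and checkable by hand. It also isolates the mechanism: refining a degenerate weight basis with non-abelian commuting data does not restore the conclusion of Theorem~\ref{thm:mc}. The paper's example needs no Lie theory and stays within the $S_n$ discussion around Lemma~\ref{lem:snndcse}. It also exhibits the failure on two isotypics at once, but it rests on a brute-force computation.
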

For example, every irrep of the symmetric group $S_n$ possesses a Gelfand-Tsetlin basis~\cite{vershik2005new}, but we show   in Appendix~\ref{sec:snshad} that there exist  $S_n$ irreps for which we can rule out the existence of NDCSEs.

\begin{table}
\centering
\begin{tabular}[b]{l ccccc}
\hline
Group $G$& $\mch$ & $H$ & $\dim(\LC^\VC)$  & $\#\lambda$ &\hspace{-2mm} Mult. free    \\
\hline
${\rm Cl}(n)$~\cite{huang2020predicting} & $(\mbb{C}^2)\tn$  & $\mbz_2^n$ & $4^n$& $2$ & Yes \\
${\rm Cl}(1)^{\times n}$~\cite{huang2020predicting} & $(\mbb{C}^2)\tn$  & $\mbz_2^n$ & $4^n$& $2^n$ & Yes \\
$\SU(d)$ [New] &  ${\rm Sym}^t(\mbb{C}^d)$  & $  \U(1)^{\times (d-1)}$ & $\binom{d+t-1}{t} ^2 $& $t+1$ & Yes \\
$\SU(2)$ [New] & $\mbb{C}^{2S+1}$ & $\U(1)$  & $(2S+1)^2$ & $2S+1$ & Yes\\
$\SU(2)$ [New] & $(\mbb{C}^2)\tn$ & $\U(1) $  & $\mco(\exp(n))$ & $n+1$  & No\\
$\SO(2n)$~\cite{zhao2021fermionic}  & $(\mbb{C}^2)\tn$ &$ \mbz_2^{n}$&$2^{2n-1}$ &$n+1$ &No   \\
$\O(d)$~\cite{west2025real} & $\mbb{C}^d$ & $\mbz_2^d$ & $d(d+1)/2 $& 3&Yes \\
$\O(2d)$ [New] & $\mbb{C}^d$ & $\U(1)^{\times d}$ & $d^2$& 3&Yes \\
$\SP(d)$~\cite{west2024random}  & $\mbb{C}^d$& $\U(1)^{\times d}$ &$d^2$ & 3&Yes \\ 
$S_n$  [New]& $\mbb{C}^n$ & $\mbz_n$&\hspace{-3mm}  $n^2-2n+2$& 5 & No \\
$G_2$  [New]& $\mbb{C}^7$ & 1& 49& 4& Yes \\
\hline
\end{tabular}
\caption{Summary of different instantiations of the results of Theorems~\ref{thm:mc} and~\ref{thm:var}. For each of the protocols addressed we indicate the reference, if already treated in the literature, or if it is new ($1$\textsuperscript{st} col). We further provide the Hilbert space on which it acts ($2$\textsuperscript{nd} col),  the NDCSE $H$ ($3$\textsuperscript{rd} col), the dimension $\dim(\LC^\VC)$ of the visible space ($4$\textsuperscript{th} col), and the number $\#\lambda$ of irreps appearing in the decomposition of $\LC$ ($5$\textsuperscript{th} col). For the latter, we further report if the decomposition is multiplicity-free  ($6$\textsuperscript{th} col.). }
\label{table}
\end{table}

\begin{figure}[t]
  \includegraphics[width=0.43\textwidth]{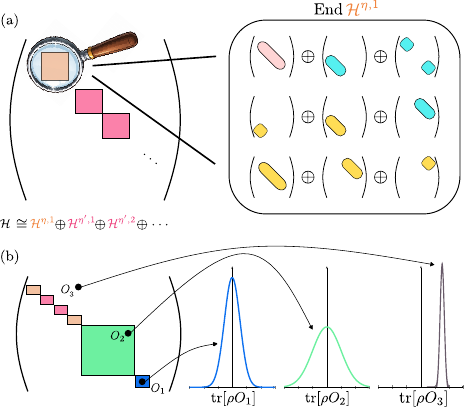}
  \caption{
(a) The subspace $\mcl^D$ of \textit{$G$-diagonal operators} are those which map the irreducible $G$-modules $\HC^{\eta,i} \subseteq \mch$ to themselves; these operator spaces  are generally reducible and decompose into irreps.  
  (b) An NDCSE shadows protocol produces unbiased estimates for the expectation values of the operators within the visible space   $\mcl^\mcv\subseteq\mcl^D$. For a given operator, the variance of the corresponding estimator (and therefore sample-complexity of the protocol) is a function of the dimension of the irrep to which the operator belongs (see Theorem~\ref{thm:var}). }
\label{fig:1}
\end{figure}

\section{Applications}

In this section we briefly discuss the application of the above theorems to several  examples. A summary of our results is presented in Table~
\ref{table}. Further details, and further examples, may be found in Appendix~\ref{sec:appdetails}.

\textit{Global Cliffords.}
Perhaps the canonical example of a CS protocol is given by sampling uniformly from the $n$-qubit global Clifford group\footnote{Equivalently, by virtue of the Clifford group being a unitary 3-design, the full unitary group $\U(d)$.} ${\rm Cl}(n)$, and measuring in the computational basis~\cite{huang2020predicting} of $\mch=(\mbc^2)\tn$. We begin by noting that the computational basis is indeed an NDCSE (with respect, for example, to the subgroup $H_{{\rm Cl}(n)}=\{I,Z\}\tn\subset{\rm Cl}(n)$). Now, as we discuss in Appendix~\ref{sec:appdetails}, the Cliffords act irreducibly on $\mch$, from which we conclude $\mcl=\mcl^\DC$; that is, in this case \textit{all} operators are  $G$-diagonal, and therefore potentially\footnote{At this point, it remains to be seen that all of the $a_\lm^H$ of Theorem~\ref{thm:mc} are non-zero; this will however turn out to be true.} visible. Next, we would like to understand the structure of $\mcl$ as a $G$-module. For global Cliffords, one obtains~\cite{zhu2016clifford,helsen2018representations} the decomposition $\mcl\cong\mcl^0\oplus\mcl^{\rm ad}$ into
a single one-dimensional (trivial) irrep $\mcl^0$ (spanned by the identity) and a second, $(d^2-1)$-dimensional irrep, the \textit{adjoint representation} of $\mfsu(d)$. In the notation of Theorem~\ref{thm:mc}, $a_0=1,$ $a_{\rm ad}=(d-1)/(d^2-1)=1/(d+1)$; to see the later case note that the adjoint action of $H_{{\rm Cl}(n)}$ on $\mcl^{\rm ad}$ is trivial exactly on the $(d-1)$-dimensional subspace spanned by $\{I,Z\}\tn\setminus\{I\tn\}$. From Theorem~\ref{thm:mc} we then see $
\mcm(\rho) =
(a_{0}\mcp_0 + a_{\rm ad}\mcp_{\rm ad})(\rho)= (d+1)^{-1}( \rho+\id\Tr[\rho]).$
This reproduces the result of  Ref.~\cite{huang2020predicting}, without necessitating the use of the Weingarten calculus~\cite{mele2023introduction} employed in that work. 

\textit{Local Cliffords.} In the case ${\rm Cl}(1)^{\times n}$ of local Cliffords (which again act irreducibly on $\mch=(\mbc^2)\tn$), the irreps of $\mcl=\mcl^\DC$ are in bijection with the choices of qubit subsets $S\subseteq [n]$; each such irrep $\mcl^S$ is the span of the Pauli strings that act non-trivially exactly on the qubits in $S$. We can take $H_{\rm comp}=\{I,Z\}\tn$  to conclude that the computational basis is again centralizing. As each $\mcl^S$ has a single diagonal Pauli string (acting as $Z$ on each of the selected qubits, and the identity elsewhere),  we can immediately see {$a_S^{H_{\rm Comp}}=3^{-|S|}$,} so that 
$\mcm  = \sum_{S\subset [n]}3^{-|S|} \mcp_S,$
with $\mcp_S$ the projector onto $\mcl^S$. We conclude that the local Pauli measurement channel suppresses operators exponentially in their support. {This reproduces another result of Ref.~\cite{huang2020predicting}. 

More interestingly, let us consider (taking $n$ to be even, say) the (patently abelian) subgroup $H_{\rm Bell}$ generated by the set $\{X_{2i-1}X_{2i},\, Z_{2i-1}Z_{2i}\}_{i=1}^{n/2}$; an NDCSE with respect to this choice of subgroup is given by measuring in the ``local Bell basis'' $\{(\ket{00}\pm \ket{11})/\sqrt{2},\, (\ket{01}\pm \ket{10})/\sqrt{2}\}\tn$. The irreps of ${\rm Cl}(1)^{\times n}$ in $\mcl$ remain (of course) given by the $\mcl^S$, but the coefficients  $a^{H_{\rm Bell}}_S$ differ significantly from their counterparts in the previous example. For example, consider $S=\{1,2\}$. We have a non-trivial invariant subspace of $\mcl_{\{1,2\}}$ spanned by $\{X_1X_2,Y_1Y_2,Z_1Z_2\}$, so that $a^{H_{\rm Bell}}_{\{1,2\}}=3/9$ (c.f. $a^{H_{\rm Comp}}_{\{1,2\}}=1/9$). More generally, we see that
\begin{equation}
    a^{H_{\rm Bell}}_S = \begin{cases}
        3^{-t}, & S \text{\ a union of $t$ paired sites}\\
        0,& \text{otherwise}
    \end{cases}\ .
\end{equation}
Note that in the case where our target observable is supported exactly on the union of $t$ of the sites paired by Bell states we have improved significantly from $a_S^{H_{\rm comp}}=3^{-2t}$. The price we pay for this, however, is that observables which have support only on half of one the Bell states have become \textit{invisible}. This is not the last we will see of the trade-off between improving sample complexity for one set of observables, while sacrificing it for another. This example reproduces the results of Ref.~\cite{ippoliti2024classical}. 
}

\textit{Matchgate circuits.}
We next consider the case where $G$ is the group of \textit{fermionic Gaussian unitaries} (FGU), which (in the so-called matchgate circuit representation) has received attention in the CS literature as an appropriate ensemble for measuring low-degree fermionic observables~\cite{zhao2021fermionic,low2022classical,wan2023matchgate,diaz2023showcasing,braccia2025optimal}. We show in Appendix~\ref{sec:appdetails} that in the matchgate circuit representation
the computational basis (which is used in the existing fermionic shadow schemes~\cite{zhao2021fermionic}) is a centralizing basis. As in the Clifford case, employing Theorem~\ref{thm:mc} substantially reduces the technical difficulty of obtaining the channel. The FGU ensemble gives us our first example of a \textit{reducible} action on $\mch$; specifically we have $ \mch = \mch^+ \oplus \mch^-,$
where $\mch^{\pm}$ are respectively given by the span of the computational basis vectors with even or odd Hamming weight (or parity). It follows that $\mcl^\DC\subsetneq\mcl$ is a strict subspace of $\mcl$, and that the protocol is not tomographically complete.
The fact that certain operators lie outside the visible space has previously been physically interpreted as corresponding to a  restriction imposed by parity superselection~\cite{StreaterWightman2001}; 
we show in Appendix~\ref{sec:appdetails} that these
unphysical operators are exactly those that map $\mch^\pm\mapsto\mch^\mp$, and are therefore not $G$-diagonal. As is well-known, under the matchgate group $\mcl$ decomposes into modules given by the span of \textit{Majorana monomials} of a given degree~\cite{diaz2023showcasing}; denoting the span of the $k$\textsuperscript{th} order monomials by $\mcl_k$, we show in Appendix~\ref{sec:appdetails} that for an $n$ qubit system 
$a_k=\binom{n}{k}/\binom{2n}{2k} $, whence
Theorem~\ref{thm:mc} immediately characterizes the shadow protocol. \\

\textit{$\SU(2)$: spin $J$ irrep.}
As a simple first example of acting with $\SU(2)$, let us consider the task of learning properties of a spin-$J$ particle{; that is, our  Hilbert space is the $(2J+1)$-dimensional irrep of $G=\SU(2)$. Let us take as our measurement basis the eigenstates of definite spin in the $Z$-direction; that is, the NDCSE with respect to the maximal torus \(H_Z = \{\exp(-i\theta J_Z): \theta \in \mathbb{R}\}\cong \U(1)\)}. 
By the self-duality of $\SU(2)$ irreps~\cite{fulton1991representation} and the usual rules for the addition of angular momentum we have $\mcl^\DC\cong\mcl = \bigoplus_{j=0}^{2J}\mcl^j$.
The irrep $\mcl^j$ (of spin $j$) has  dimension $2j+1$ and a 1-dimensional {$H_Z$-invariant subspace corresponding to the weight-zero line; for this irrep we therefore have $a_j=1/(2j+1)$.
More generally,  it is in fact not hard to see that the irreducible modules $\HC= {\rm Sym}^t(\mbb{C}^d)$ of $G=\SU(d)$, the natural generalization of the spin irreps of $\SU(2)$, are also multiplicity-free\footnote{For any composition $\mu$ of $t$, the dimension of the weight $\mu$ subspace in ${\rm Sym}^t(\mbb{C}^d)$ is given by the \textit{Kostka number} $K^{\mu}_{[t]}$,  the number of trivial $S_t$ irreps in the permutation module induced from the Young subgroup $S_\mu\subseteq S_t$. This number is $1$ for all $\mu$.} and thus their weight-bases are centralizing bases.

\textit{$\SU$(2): tensor rep.}
As our next example we consider the natural   $\SU(2)$-representation $Q$ on $(\mbc^2)^{\otimes n}$ given by
$Q(U)=U^{\otimes n}. $
This representation is far from being multiplicity-free (see Appendix~\ref{sec:appdetails}), and so failure to measure in a centralizing basis can lead to the need for costly numerical matrix inversion~\cite{sauvage2024classical}. We show in Appendix~\ref{sec:appdetails}, however, that the \textit{Schur basis}~\cite{bacon2006efficient,bacon2007quantum,kirby2017practical,krovi2019efficient,burchardt2025high} is centralizing. The construction of a bespoke variance bound (which holds in addition to the generic variance bounds of Theorem~\ref{thm:var}, see Appendix~\ref{sec:su2}) {and a characterization of the visible space} leads to
\begin{restatable}{thm}{sutwovar} \label{thm:su2var}
{For any permutation-invariant operator $O$, $\SU (2)$-shadows produces unbiased estimators with variance bounded as}
\begin{equation}
\Var[\hat{o}]\leq \frac 43 (n+1)^4\|O\|_\infty^2    
\end{equation}
when $\SU(2)$ acts via the $n$-qubit tensor representation.
\end{restatable}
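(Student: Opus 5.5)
The plan is to feed the operator-norm bound Eq.~\eqref{eq:inb} of Theorem~\ref{thm:var} into the Schur--Weyl structure of $(\mbb{C}^2)^{\otimes n}$, reducing everything to a single $\SU(2)$ irrep $V_J$ with $J\leq n/2$. The Hilbert-space bound Eq.~\eqref{eq:2nb} is of no use here: the Hilbert--Schmidt norm of a permutation-invariant operator carries the $S_n$ multiplicity factors, which are exponential in $n$. The spectral norm, by contrast, only sees a single block.

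\textbf{Step 1 (structure and visibility).} By Schur--Weyl duality, $\mch\cong\bigoplus_J V_J\otimes\mbb{C}^{m_J}$. Here $V_J$ is the spin-$J$ irrep of $\SU(2)$ and $\mbb{C}^{m_J}$ carries an $S_n$ irrep. A permutation-invariant $O$ lies in the commutant of $S_n$, so
\begin{equation}
O=\bigoplus_J O_J\otimes \id_{m_J}, \qquad O_J\in{\rm End}(V_J).
\end{equation}
It follows that $O$ maps each copy $\HC^{J,i}$ to itself, i.e.\ $O\in\LC^\DC$. As in the spin-$J$ example, ${\rm End}(V_J)\cong\bigoplus_{j=0}^{2J}V_j$. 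Each $V_j$ has a one-dimensional weight-zero ($H=\U(1)$-invariant) line, so by Theorem~\ref{thm:mc} (with the Schur basis as an NDCSE) we get $a_j=1/(2j+1)>0$. Hence $O\in\LC^\VC$ and the estimator is unbiased.

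\textbf{Step 2 (reduce to one block).} The isotypic projectors $\PC_j^\VC$ act blockwise. Concretely, $O^j=\bigoplus_J \PC_j^{(J)}(O_J)\otimes\id_{m_J}$, where $\PC_j^{(J)}$ is the projector onto the spin-$j$ component of ${\rm End}(V_J)$. Eq.~\eqref{eq:inb} then gives
\begin{equation}
\Var[\hat o]\leq \max_J \Big\|\sum_{j=0}^{2J}(2j+1)\,\PC^{(J)}_j(O_J)\Big\|_\infty^2 ,
\end{equation}
because the spectral norm of a direct sum (tensored with identities) is the maximum over blocks.

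\textbf{Step 3 (the key estimate).} This is the step I expect to be the main obstacle, since naive bounds on $\|\PC_j(X)\|_\infty$ via the character-integral formula for $\PC_j$ lose too much (they give $n^6$). Instead I would pass through the Hilbert--Schmidt norm inside a single block, where dimensions are only polynomial. Using $\|\cdot\|_\infty\leq\|\cdot\|_2$, the triangle inequality, Cauchy--Schwarz, and orthogonality of the $\PC_j^{(J)}$, we get
\begin{equation}
\Big\|\sum_j(2j+1)\PC_j^{(J)}(X)\Big\|_\infty\leq \sum_j (2j+1)\|\PC_j^{(J)}X\|_2\leq \Big(\sum_{j=0}^{2J}(2j+1)^2\Big)^{1/2}\|X\|_2 .
\end{equation}
Next, $\|X\|_2^2\leq(2J+1)\|X\|_\infty^2$. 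Set $K=2J$; then
\begin{equation}
\sum_{j=0}^{K}(2j+1)^2=\frac{(K+1)(2K+1)(2K+3)}{3}\leq\frac43(K+1)^3 .
\end{equation}
Combining these, the square of the block norm is at most $\frac43(2J+1)^4\|O_J\|_\infty^2$. Finally, $2J+1\leq n+1$ and $\|O_J\|_\infty\leq\|O\|_\infty$. This yields $\Var[\hat o]\leq\frac43(n+1)^4\|O\|_\infty^2$, as claimed.
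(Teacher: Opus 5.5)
Your proposal is correct and follows essentially the same route as the paper. Permutation invariance gives the block form $O=\bigoplus_J O_J\otimes\id_{m_J}$, which places $O$ in $\LC^\DC=\LC^\VC$ (all $a_j=1/(2j+1)>0$). The bound Eq.~\eqref{eq:inb} then reduces to a maximum over single blocks, and the same chain of estimates follows: the triangle inequality, $\|\cdot\|_\infty\le\|\cdot\|_2$, Cauchy--Schwarz with Hilbert--Schmidt orthogonality of the spin components, and $\|A\|_2\le\sqrt{d}\,\|A\|_\infty$. This yields $\frac43(2J+1)^4\|O\|_\infty^2\le\frac43(n+1)^4\|O\|_\infty^2$.
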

That is, we are able to sample- and computationally-efficiently estimate any permutation-invariant operators whose spectral norm does not increase superpolynomially with system size (some numerical results attesting to this are presented in Appendix~\ref{sec:su2}), an important case with widespread applications throughout both quantum information theory and physics more generally~\cite{schatzki2022theoretical,sauvage2024classical,toth2010permutationally,anschuetz2022efficient}.

\textit{The symmetric group.} As our final example in the main text (more examples may be found in Appendix~\ref{sec:appdetails}) let us consider the permutation representation $\HC=\mbb{C}^n$ of the symmetric group $S_n$. By taking $H=\langle (12\cdots n)\rangle \cong \mbz_n$, the subgroup generated by an $n$-cycle, we can readily see that the corresponding CSE, {the Fourier basis} $\WC=\{ \ket{\chi_k} \}_{k}$ with $\ket{\chi_k} \coloneqq \sum_j e^{2\pi ijk/n} \ket{j}$, is non-degenerate. Hence, we can obtain an $S_n$-shadow protocol inheriting the properties guaranteed by Theorems~\ref{thm:mc} and~\ref{thm:var}. As is well-known, $\mch$ decomposes into a single copy of the trivial representation $[n]$, and the standard representation $[n-1,1]$, which leads to a decomposition $\mcl^\DC \cong 2[n]\oplus [n-1,1]\oplus [n-2,2]\oplus[n-2,1,1].$
Direct calculations yield $a_{[n]}=1,\ a_{[n-1,1]}=0$ and $ a_{[n-2,2]},a_{[n-2,1,1]}\in\mco(1/n)$ (see  Appendix~\ref{sec:appdetails} for details, explicit constants and proofs). 

Although we have just seen that the permutation representation admits an NDCSE, this need not be the case for arbitrary $S_n$ irreps. Indeed, we can leverage well-known properties of the representation theory of the symmetric group to show that the following result holds:
\begin{restatable}{lemma}{snndcse}\label{lem:snndcse}
Not all $S_n$ irreps have NDCSEs.
\end{restatable}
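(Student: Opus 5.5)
The plan is to reduce the existence of an NDCSE to a purely numerical condition, and then show that this condition fails for $S_n$ when $n$ is large. The whole argument rests on a counting bound.

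\emph{Step 1 (necessary condition).} Suppose an irrep $V^\eta$ of $G$ with $d_\eta=\dim V^\eta$ admits an NDCSE with respect to an abelian subgroup $H\subseteq G$. Since $H$ is abelian, every irrep of $H$ is one-dimensional. Definition~\ref{def:cse} then says that $V^\eta|_H$ is a direct sum of $d_\eta$ \emph{pairwise distinct} characters of $H$. A finite abelian group has exactly $|H|$ distinct characters, so we must have
\[
d_\eta\leq |H|\leq \max\{|A| : A\subseteq G \text{ abelian}\}.
\]
Hence it suffices to exhibit an $S_n$ irrep whose dimension exceeds the order of every abelian subgroup of $S_n$.

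\emph{Step 2 (abelian subgroups are small).} I would invoke the classical bound on abelian subgroups of $S_n$ (Burns--Goldsmith; Kov\'acs--Praeger): every abelian $A\subseteq S_n$ satisfies $|A|\leq 3^{n/3}$ (exactly $3^{\lfloor n/3\rfloor}$ up to small corrections when $3\nmid n$). If I wanted to avoid citing this, a cruder self-contained bound would do just as well. Decompose $A$ into its orbits $O_1,\dots,O_k$ with sizes $n_1+\dots+n_k=n$. An abelian group acting transitively on a set acts regularly, so the image of $A$ on each orbit $O_i$ has order $n_i$. Since $A$ embeds in the product of these images, $|A|\leq\prod_i n_i\leq e^{n/e}$. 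Either way, $|A|$ grows only exponentially in $n$.

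\emph{Step 3 (some irrep is huge).} From $\sum_{\lambda\vdash n} d_\lambda^2=n!$ and the fact that the number of irreps is the partition number $p(n)\leq e^{\pi\sqrt{2n/3}}$, we get
\[
\max_\lambda d_\lambda\geq\sqrt{n!/p(n)}.
\]
This grows superexponentially, like $(n/e)^{n/2}$ up to subexponential factors. Therefore, for all sufficiently large $n$, the largest irrep satisfies $d_\lambda>e^{n/e}\geq|A|$ for every abelian $A\subseteq S_n$, and Step~1 rules out any NDCSE for it.

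The main obstacle is making the threshold on $n$ explicit, or better, producing a concrete small counterexample. Small cases do not work: for $n=4$ every irrep has an NDCSE, e.g.\ $\mathbb{Z}_4$ for the standard representation and $\langle(12),(34)\rangle$ for the two-dimensional one. I would compare $\sqrt{n!/p(n)}$ against $3^{n/3}$ numerically, which already succeeds around $n\approx 10$. Alternatively, I would compare a specific large irrep, computed with the hook length formula, against $3^{\lfloor n/3\rfloor}$ at a modest $n$. The asymptotic argument already suffices to prove the lemma as stated.
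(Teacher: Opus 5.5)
Your proof is correct and has the same skeleton as the paper's. Both use the necessary condition $\dim V^\eta\le|H|$, obtained by counting the $|H|$ characters of a finite abelian group, and play it off against an exponential bound on abelian subgroups of $S_n$.

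The two quantitative inputs are obtained differently:
\begin{itemize}
\item \emph{The subgroup bound.} The paper cites $|H|\le 3^{n/3}$. You prove a bound yourself: a transitive abelian permutation group is regular, so $|A|\le\prod_i n_i$ over orbit sizes. If you maximize this product over integer partitions of $n$, rather than relaxing to $e^{n/e}$, you recover the $3^{n/3}$ bound the paper cites.
\item \emph{The large irrep.} The paper exhibits a \emph{specific} irrep, the balanced two-row partition $(m,m)\vdash 2m$. By the hook length formula its dimension is the Catalan number $\frac{(2m)!}{m!\,(m+1)!}\sim 2^n/\mathrm{poly}(n)$. You obtain a large irrep non-constructively from $\sum_\lambda d_\lambda^2=n!$, giving $\max_\lambda d_\lambda\ge\sqrt{n!/p(n)}$, which grows superexponentially.
\end{itemize}

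Your route is more self-contained and yields a much wider gap. Making the threshold explicit is then easy: already $\sqrt{7!/p(7)}=\sqrt{336}>18>3^{7/3}$. However, your argument does not say \emph{which} irreps fail. The paper's argument pins down a concrete, natural family, the balanced two-row irreps, for which no NDCSE exists. That is the more informative statement if one wants to know which $S_n$ shadow protocols are ruled out.
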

The proof, given in Appendix~\ref{sec:appdetails}, is in fact a relatively simple counting argument: the number of inequivalent irreps of an abelian group is equal to its order, and there are (in the limit of large $n$, anyway) simply no abelian subgroups of $S_n$ big enough to cover the largest $S_n$ irreps in a multiplicity-free fashion.

\section{Discussion}
Although many distinct instantiations of the  CS theory that work well under various circumstances~\cite{huang2020predicting,aaronson2018shadow,cotler2020quantum,sugiyama2013precision,elben2020mixed,rath2021quantum,zhao2021fermionic,paini2021estimating,low2022classical,van2022hardware,wan2023matchgate,ippoliti2024classical,zhao2024group,king2024triply,jerbi2023shadows,koh2022classical,chen2021robust,hearth2024efficient,bertoni2024shallow,chan2022algorithmic,sauvage2024classical,sack2022avoiding,Koh_2022,Helsen_2023,kunjummen2023shadow,denzler2023learningfermioniccorrelationsevolving,Wu_2024,west2024random,grier2024sample,brandao2020fast,bertoni2022shallow,vitale2024estimation,somma2024shadow,west2025real,bringewatt2025classical} have been developed, a more general theory that explains the performance of individual protocols in a simple and unified manner has remained lacking. Some hints to the essence of our results, however, have appeared in previous literature. For example, the connection between the sample efficiency of a protocol and the dimension of the subspace of operators to be learnt has previously been noticed in the CS  context~\cite{zhao2021fermionic, king2024triply, west2025real}. More specifically, the importance of the \textit{ratio} of that dimension to the dimension of a maximally commuting subspace of operators has been recognised in the case of free fermions~\cite{zhao2021fermionic}. Intuitively, measuring in the basis in which such a set of operators is diagonal allows one to unbiasedly estimate their expectation values; repeating the process a number of times comparable to the aforementioned ratio then accounts for the other operators of interest. A key contribution of our framework is to generalize this from the free-fermionic case to an arbitrary compact group $G$, where the notion of maximally commuting sets of operators is naturally encapsulated by weight-zero operator subspaces, and the aforementioned ratio by the $a_\lm^H$ of Theorems~\ref{thm:mc} and~\ref{thm:var}. Indeed, as we have found in our main theorems, the results of Ref.~\cite{zhao2021fermionic} generalize smoothly to this setting, modulo the complication that, at this level of generality, the  variance bounds may be weaker than those attainable when exploiting specific properties of a given $G$. 

Although the results presented here constitute a significant step towards a unified understanding of group-based CS, the story is not yet complete. Perhaps the main remaining open question is
\begin{center}
    \textit{Given a group acting in some representation, does a centralizing basis necessarily exist?}
\end{center}
We have shown that the existence of an NDCSE is a sufficient condition for the existence of a centralizing basis -- however, it is not necessary. Indeed, consider global Clifford shadows. One the one hand, as the Cliffords form a unitary 3-design, the channel will take its central form regardless of which basis we measure in; on the other, a generic basis is certainly not going to be a non-degenerate simultaneous eigenbasis for some abelian subgroup of the Clifford group. In particular, the nonexistence of NDCSEs for certain $S_n$ irreps guaranteed by Lemma~\ref{lem:snndcse} therefore does not preclude the existence of a centralizing basis for those irreps, and more generally the above question remains open.

On a different note, in Appendix~\ref{sec:tf}, we generalize {a technical result}  of Ref.~\cite{zhao2021fermionic} -- {obtained in their (free-fermionic) context via} the invocation of the machinery of 
\textit{tight frames}~\cite{vale2004tight,cotfas2010finite} -- to arbitrary compact groups. We will find that this method is highly compatible with our formalism, with the generalization very natural. Moreover, this result   allows us, in some instances, to improve upon the variance bounds generically guaranteed by Theorem~\ref{thm:var}. 

Beyond its theoretical appeal as a unifying framework, our work also offers a concrete suggestion for the development of new protocols: {pick an abelian subgroup $H$ of a group $G$ and measure in an NDCSE such that the $a_\lm^H$ of the observables of interest are as large as possible}. 
An example of the difficulties one can face {when failing to measure in a centralizing basis is} afforded by Ref.~\cite{sauvage2024classical}, wherein a sample efficient protocol based on sampling from $\SU(2)$ and acting with its $n$-qubit tensor representation is developed, but suffers from significant classical post-processing costs that can be thought of as arising due to measuring in a ``bad'' basis. As we have seen, a key property of measuring in a centralizing basis is the ease with which one can invert the measurement channel, and therefore the absence of such costs. In the case of the orthogonal group, for example, the suggestion to measure in a (non-degenerate) weight basis leads to the discovery of a shadow protocol distinct from those already considered~\cite{west2025real,liang2024real} (see Appendix~\ref{sec:appdetails}). In Appendix~\ref{sec:appdetails} we further see the benefits of the concreteness of our procedure on display in the somewhat exotic case of the exceptional Lie group $G_2$, where the recipe afforded by Theorem~\ref{thm:mc} easily and mechanically characterizes a novel shadow protocol. More generally, this framework adds to an ever-increasing list~\cite{heinrich2022randomized,wilkens2024benchmarking,arienzo2024bosonic,helsen2019new} of  examples of the ability of representation theory to facilitate and clarify quantum information-theoretic protocols.

Next, we remark that we have considered here only the setting of independent, non-adaptive measurements. In  more general settings one can imagine either making entangling measurements in multiple copies of the input state~\cite{huang2021information,grier2024sample,chen2021exponential,king2024exponential}, or measuring only one copy of the state at a time, but in a fashion where the choice of latter measurements are allowed to depend on the results of previous measurements~\cite{chen2023does}. There are scenarios in which single-copy measurements provably fail to be sample-efficient, but for which many-copy measurements can provide sample efficient protocols~\cite{king2024triply,chen2021exponential}; likewise there are scenarios in which adaptive measurements are provably sample-efficient, but non-adaptive measurements are not~\cite{chen2024adaptivity}. The extension of the representation-theoretic ideas of this work to these regimes constitutes an exciting avenue for future investigation.

{A further interesting avenue for generalization is to consider \textit{non-abelian} analogues of our NDCSE construction. From a representation-theoretic point of view, the simplicity of Theorem~\ref{thm:mc} can in some sense be traced to the fact that when $\mcw$ is an NDCSE the dephasing channel $\mca_\mcw$ of Eq.~\eqref{eq:mcinit} can be shown to be equivalent to the twirl of the group action over the corresponding $H$ (see Appendix~\ref{sec:gshaddetails}). In the case of a non-abelian subgroup $K$, then, one could naturally consider replacing the dephasing channel with the twirl over $K$, which will \textit{not} be a rank-one dephasing channel, as any non-abelian group possesses at least one irrep with dimension greater than one~\cite{fulton1991representation}. Nonetheless, the proof of Theorem~\ref{thm:mc} in Appendix~\ref{sec:gshaddetails} (relying as it does largely on Schur's lemma) reveals that an analogous result will hold in this more general case.  }

Finally, we note that while the main focus of our work is a mathematical characterization of CS protocols, we leave a thorough exploration of the implications of our main theorems for future work. For instance, recent works have uncovered a universal behavior across multiple quantum resource theories where Lie groups  constitute free operations~\cite{chitambar2019quantum,diaz2025unified}. Namely,  low resource states tend to live in subspaces with $1/a_\lambda^H\in\OC(\poly(n))$, whereas highly resourceful ones have support in subspaces with $1/a_\lambda^H\in\OC(\exp(n))$~\cite{bermejo2025characterizing,coffman2026group}. A direct application of Theorem~\ref{thm:var} implies that extracting information via CS from highly resource states could be significantly more expensive than from low resource states, thus making an explicit connection between worst-case shot-cost and the hardness of learning resourceful states. We expect that additional deep connections will be unearthed as our CS results are applied to other areas of quantum information.

\section{Acknowledgments}

ML, MW and MC acknowledge support by the Laboratory Directed Research and Development (LDRD) program of LANL under project number 20230049DR and 20260043DR, and by the LANL's ASC Beyond Moore’s Law project. MW, AS and RTF were supported by the U.S. Department of Energy through a quantum computing program sponsored by the Los Alamos National Laboratory Information Science \& Technology Institute. MW acknowledges the support of the Australian Government Research Training Program Scholarship, the Dr. Albert Shimmins Fund,  and the IBM Quantum Hub  at the University of Melbourne. This work was also supported by the Quantum Science Center (QSC), a National Quantum Information Science Research Center of the U.S. Department of Energy (DOE). DG acknowledges support by NWO grant NGF.1623.23.025 (“Qudits in theory and experiment”). FS thanks Colin Krawchuk and Jed Burkat for comments on the manuscript.

\bibliography{refs,quantum}

\newpage

\onecolumngrid
\appendix

\addtocontents{toc}{\protect\AppendixTOCMark}

\tableofcontentsappendixonly

\section{Classical shadows}\label{sec:cs}
We begin by briefly reviewing the formalism of CS, as introduced in~\cite{huang2020predicting}. CS have become a major area of research within quantum computation and information, and more details may be found in a rapidly growing list of sources~\cite{huang2020predicting,aaronson2018shadow,cotler2020quantum,sugiyama2013precision,elben2020mixed,rath2021quantum,zhao2021fermionic,paini2021estimating,low2022classical,van2022hardware,wan2023matchgate,ippoliti2024classical,zhao2024group,king2024triply,jerbi2023shadows,koh2022classical,chen2021robust,rath2021importance,hearth2024efficient,bertoni2024shallow,chan2022algorithmic,sauvage2024classical,sack2022avoiding,Koh_2022,Helsen_2023,kunjummen2023shadow,denzler2023learningfermioniccorrelationsevolving,Wu_2024,west2024random,grier2024sample,brandao2020fast,bertoni2022shallow,vitale2024estimation,barthe2024gate,somma2024shadow,diaz2023showcasing,west2025real}. \\

Given access to copies of an (unknown) $d$-dimensional quantum state $\rho$ and a set of target observables $\{O_i\}_{i=1}^M$, a CS protocol produces estimates of the corresponding expectation values, $ {o}_i=\Tr [\rho O_i]$. The procedure begins by evolving $\rho$ via a randomly sampled unitary $U\sim \mcu$, where $\mcu$ is a specified ensemble of unitaries determined by the protocol, and then measuring the resulting state in a chosen basis $\mc{W}=\{\ket{w_i}\}_{i=1}^d$, obtaining some result $\ket{w}$. The combined choice of unitary distribution and measurement basis characterizes the shadow protocol. One considers the output of the process to be the state $U^\dagger\ketbra{w}U\equiv U^\dagger\Pi_{w}U$, and classically stores the pair $(U,w)$, also known as a ``snapshot''. The result of this is to effect the  channel
\begin{equation}
\mcm(\rho)= \expect_{U\sim \mcu}  \sum_w  \Tr\left[\rho U^\dagger\Pi_wU\right] U^\dagger\Pi_w U =: \mathbb{E}_{U,w|\rho}\left[ U^\dagger\Pi_w U\right].
\end{equation}
The next step is to (pseudo-)invert $\mcm$; the result of the channel inversion on a given instance,
\begin{equation}\label{eq:shad}
\hat{\rho} = \mcm^{-1}\left(U^\dagger\Pi_wU\right)
\end{equation}
is termed a \textit{classical shadow} of $\rho$. When the protocol is  tomographically complete,  the shadows constitute an unbiased estimator of  $\rho$, as one has in expectation
\begin{align*}
\mathbb{E}_{U,w|\rho}\left[\hat{\rho}\right]&= \expect_{U\sim \mcu}  \sum_w p(U,w|\rho) \hat{\rho} \\
&=\expect_{U\sim \mcu}  \sum_w  \Tr\left[\rho U^\dagger\Pi_wU\right] \mcm^{-1}\left(U^\dagger\Pi_wU\right)\\
&=\mcm^{-1}\left(\expect_{U\sim \mcu}  \sum_w  \Tr\left[\rho U^\dagger\Pi_wU\right] U^\dagger\Pi_wU\right)\\
&=\mcm^{-1}\left( \mcm\left(\rho \right) \right)\\
&=\rho\,.
\end{align*}
More generally, $\mcm$ is invertible only on its image $\Im(\mcm)$, whence (with the inverse understood to denote the Moore-Penrose pseudoinverse)
\begin{equation}
\mcm^{-1}\circ \mcm = \Pi_{\Im(\mcm)} \neq {\rm id}_{\mcl}\,.
\end{equation}
One may of course still estimate expectation values with respect to $\rho$ via $\hat{o}_i=\Tr [\hat{\rho} O_i]$, which
will be unbiased when the measured operator $O$ lives in the \textit{visible space} of the protocol~\cite{van2022hardware}, 
\begin{align}\label{eq:vis}
\mathsf{VisibleSpace}\left(\mcu, \mc{W}\right) &= {\rm span }\left\{ U^\dagger \Pi_{w} U \right\}_{U\sim\hspace{0.3mm}\mcu, \ket{w} \in\mc{W}}\\
&=\Im(\mcm)\,.
\end{align}
In the non-tomographically complete case the visible space will be a strict subspace of $\eh$, necessitating assessing whether the target observables belong to the visible space. When this is indeed the case, the selection of a tailored ensemble can lead to drastic improvements in sample-complexity over a generic  tomographically complete choice~\cite{sauvage2024classical}. \\

Having verified that one's estimates are unbiased (or having accepted that they are not~\cite{van2022hardware}), a key remaining question is the number of samples of $\rho$ needed to guarantee that the empirical average of estimators  obtained from  multiple independent CS is close to the true expectation value of the underlying state.
This number depends critically on the variance of the single-shot estimator,
\begin{align}
\mathrm{Var }\left[ \hat{o} \right] &= \mbe_{U,w|\rho}[\hat{o}^2]-\left( \mbe_{U,w|\rho}[\hat{o}]\right)^2\nonumber\\
&= \mbe_{U,w|\rho}\left[\Tr \left[  \mcm^{-1}\left(U^\dagger\Pi_wU\right) O \right]^2 \right]-\left( \mbe_{U,w|\rho}[\hat{o}]\right)^2\nonumber\\
&\leq \max_\rho \mbe_{U,w|\rho}\left[\Tr \left[  \mcm^{-1}\left(U^\dagger\Pi_wU\right) O \right]^2 \right]\label{eq:shadvar}\nonumber\\
&=:\|O\|_{\rm shadow}^2 \, .
\end{align}
Specifically, one is guaranteed  (with probability at least $1-\delta$) to be able to estimate each of $\{\Tr [\rho O_i]\}_{i=1}^M$ to within a precision $\epsilon$ using~\cite{huang2020predicting} 
\begin{equation}\label{eq:shadsamp}
N_{\rm shad} = \mc{O} \left(\frac{\log(M/\delta)}{\epsilon^2} \max_{1\leq i\leq M} \|O_i\|_{\rm shadow}^2\right)
\end{equation}
copies of $\rho$. For a successful CS protocol, bounding the shadow norms for an interesting class of observables is vital.  Indeed, doing so for both general and specific choices of $G$ and $O$ will form an important part of the forthcoming analysis. 

\section{Technical preliminaries}\label{sec:wc}
Let us next briefly review some further relevant background details, starting with some elementary elements of twirling over groups~\cite{weingarten1978asymptotic,mele2024introduction,fulton1991representation}.  

\subsection{Group twirling}
\noindent
An important component of analyzing a CS protocol is the evaluation  of the \textit{$k$\textsuperscript{th}-moment operator}
\begin{equation}\label{eq:momop}
\mce^{(k)}_G (A) =\expect_{U\sim G} \ U^{\dagger \otimes k} AU^{ \otimes k} ,
\end{equation}
where $A\in{\rm End}\,(\mathcal{H}^{\otimes k} )$ is a linear operator on $\mathcal{H}^{\otimes k} $, and $k=2,3$. For example, the $k=2$ case appears in the definition of the measurement channel:
\begin{align}
\mc{M}(\rho) &=\sum_w \expect_{U\sim G} \ \Tr[ \rho U\ad \Pi_w U] U\ad \Pi_w U= \sum_w\Tr_1 \left[ (\rho\otimes\id) \expect_{U\sim G} \   U^{\dagger\otimes 2} \Pi_w^{\otimes 2} U^{\otimes 2}\right], \label{eq:k2}
\end{align}
and the $k=3$ case when calculating the estimator variance:
\begin{align}
\mathrm{Var}[\hat{o}]&=\mbe[\hat{o}^2]-\mbe[\hat{o}]^2\\
&\leq \mbe[\hat{o}^2]\\
&=\sum_w\expect_{U\sim G}\  \Tr[\rho U^\dagger \Pi_{w}  U]  \Tr[  O\mcm^{-1}\left(  U^\dagger \Pi_{w} U\right)] ^{ 2}\\
&=\sum_w\expect_{U\sim G}\  \Tr[\rho U^\dagger \Pi_{w}  U]  \Tr[  \mcm^{-1}\left(O\right)  U^\dagger \Pi_{w} U] ^{ 2}\\
&=\sum_w\Tr[(\rho\otimes\mcm^{-1}\left(O\right)\otimes\mcm^{-1}\left(O\right)) \expect_{U\sim G}\   U^{\dagger \otimes 3} \Pi_{w}^{\otimes 3}  U^{\otimes 3}]  \label{eq:k3}\,.
\end{align}
Above, we have used the self-adjointness of $\mcm^{-1}$ with respect to the Hilbert-Schmidt inner product. 
Now, as is well-known~\cite{mele2024introduction}, the $k$\textsuperscript{th}-moment operator $\mce^{(k)}_G$ orthogonally (with respect to the Hilbert–Schmidt) projects onto the $G$-invariant subspace $({\rm End}\,(\mathcal{H}^{\otimes k} ))^G$ of ${\rm End}\,(\HC^{\otimes k})\cong (\eh)\tk=\mcl^{\otimes k}$, namely
\begin{equation}
({\rm End}\,(\mathcal{H}^{\otimes k} ))^G= \big\{ B\in \mcl\left(\mathcal{H}^{\otimes k} \right) \ \vert \ [B,U^{\otimes k}]=0\ \forall U\in G \big\} \subseteq {\rm End}\,(\mathcal{H}^{\otimes k} ).
\end{equation}
So, for an operator $B \in {\rm End}\,(\mathcal{H}^{\otimes k} )$ to belong to $({\rm End}\,(\mathcal{H}^{\otimes k} ))^G$ one must have that $\forall U\in G,\ BU^{\otimes k}=U^{\otimes k}B$, and therefore $U^{\otimes k}BU^{\dagger\otimes k}=B$, i.e., the (adjoint) action of $G$ on $B$ is trivial. From this we see that $\mce^{(k)}_G $ projects onto the trivial irreps of $\mcl^{\otimes k}$ (with respect to the adjoint action of the $k$-fold tensor product of $G$), which we are therefore interested in characterizing. For large $k$ this is in general an intricate problem, and one of the main advantages of our approach over the previously existing case-by-case analysis in the literature is that we can derive the exact shadow channel and (bounds on the) variance using knowledge only of the trivial irreps at $k=1$.

\subsection{Weight-modules}\label{sec:weights}
In this section we take a brief break from CS to define the minimal properties of \textit{weight-modules} needed to understand the constructions in the main text. All of the material of this appendix is well-known and stated without justification; a general reference in which much more detail may be found is given by, for example, Ref.~\cite{fulton1991representation}. We begin by recalling the fact that any finite-dimensional complex semisimple Lie algebra $\mfg$ admits a \textit{root space decomposition}
\begin{equation}\label{eq:rsd}
\mfg=\mfh\oplus\bigoplus_{\alpha\in\Phi}\mfg_\alpha.
\end{equation}
Here $\mfh$ is a so-called Cartan subalgebra, i.e., a maximal abelian subalgebra for which the adjoint actions ${\rm ad}_h$ are simultaneously diagonalisable as $h$ ranges over $\mfh$. In the case of interest to us (finite-dimensional complex semisimple Lie algebras) the Cartan subalgebra is unique up to an automorphism of $\mfg$. For each linear functional $\alpha\in\mfh^*$ one defines the set $\mfg_\alpha=\{x\in\mfg : {\rm ad}_h(x)=\alpha(h)x\ \forall h\in\mfh \}$; the set $\Phi$ of Eq.~\eqref{eq:rsd} is the set of functionals $\alpha\in\mfh^*$ for which $\mfg_\alpha$ is non-zero; such functionals are called \textit{roots}. One can show  that if $\alpha\in\Phi$ then $\mfg_\alpha$ is one-dimensional, and moreover that $-\alpha\in\Phi$. We also introduce a partition of the roots into ``positive'' and ``negative'' roots, $\Phi=\Phi^+\cup\Phi^-$, such that if $\alpha\in\Phi^+$ then $-\alpha\in\Phi^-$. The exact choice of partition is somewhat arbitrary, but does not turn out to affect things materially. A positive (negative) root that cannot be written as the sum of two positive (negative) roots is said to be \textit{simple}. The positive simple roots $\{\alpha_i\}_{1\leq i\leq r}$ can be shown to span $\mfh^*$, and by the duality induced by the so-called \textit{Killing form} yield a basis $\{H_{\alpha_i}\}_{1\leq i\leq r}$ of $\mfh$. 
\\

Now, suppose we have a representation $\pi$ on a vector space (indeed $\mfg$-module) $V$ of a (finite-dimensional complex semisimple) Lie algebra $\mfg$, with some choice of Cartan subalgebra $\mfh$. 
One defines a weight vector $v\in V$ of weight $\lm\in\mfh^*$ to be any simultaneous eigenvector of $\pi(\mfh)$ with eigenvalues given by
\begin{equation}\label{eq:wv}
    \pi(h)v=\lm(h)v\ \ \forall h\in\mfh;
\end{equation}
the span of all such $v$ is said to be the weight-space of weight $\lm$ of the $\mfg$-module $V$.
If $V$ has a basis consisting of weight vectors, it is said to be a weight-module. The representation theory of complex semisimple Lie algebras is greatly facilitated by the following two results:

\begin{fact}
If $\mfg$ is a  finite-dimensional complex semisimple Lie algebra, then  every finite-dimensional $\mfg$-module $V$ is completely reducible.
\end{fact}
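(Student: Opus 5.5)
The approach is the algebraic proof of Weyl's theorem via the Casimir operator, which avoids analytic input such as integration over a compact real form. Complete reducibility is equivalent to the statement that every submodule $W\subseteq V$ has a $\mfg$-stable complement. The plan is to reduce this to the case where $V/W$ is the one-dimensional trivial module, and then to handle that case with a Casimir element.

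\emph{Step 1 (Casimir operator).} Let $\pi:\mfg\to\mathfrak{gl}(V)$ be the representation. Replacing $\mfg$ by $\mfg/\ker\pi$, which is again semisimple because ideals of a semisimple algebra are semisimple, we may assume $\pi$ is faithful. Then the trace form $\beta(x,y)=\Tr[\pi(x)\pi(y)]$ is nondegenerate. Its radical $S$ is an ideal, and by Cartan's criterion $\pi(S)$ is solvable; faithfulness then makes $S$ a solvable ideal of $\mfg$, hence $S=0$. Take a basis $\{x_i\}$ of $\mfg$ with $\beta$-dual basis $\{y_i\}$, and set $c_\pi=\sum_i\pi(x_i)\pi(y_i)$. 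A routine computation with structure constants shows that $c_\pi$ commutes with $\pi(\mfg)$, and that $\Tr[c_\pi]=\dim\mfg$. Moreover, since $\mfg=[\mfg,\mfg]$, every one-dimensional $\mfg$-module is trivial.

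\emph{Step 2 (codimension one).} Suppose $W\subseteq V$ with $V/W$ one-dimensional, hence trivial. We induct on $\dim W$. If $W$ is not irreducible, choose a proper nonzero submodule $W'\subset W$. By induction applied to $W/W'\subset V/W'$, there is a complement $\widetilde W/W'$, with $\widetilde W\supset W'$ of codimension one in $\widetilde W$. By induction again, $W'$ has a one-dimensional complement $X$ in $\widetilde W$, and $X$ is then a complement of $W$ in $V$.

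If instead $W$ is irreducible, we may assume $\pi(\mfg)$ acts nontrivially on $W$; otherwise $\mfg=[\mfg,\mfg]$ acts by zero on all of $V$, and any complement works. Take $c$ to be the Casimir of $\mfg/\ker(\pi|_W)$ acting on $V$. Since $\mfg$ acts trivially on $V/W$, the image of $c$ lies in $W$. By Schur's lemma, $c|_W$ is a scalar, and it is nonzero because $\Tr[c|_W]=\dim(\mfg/\ker)\neq 0$. Hence $\ker c$ is a one-dimensional submodule complementary to $W$. I expect this splitting of the reducible case by a double induction to be the main bookkeeping obstacle.

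\emph{Step 3 (general case).} Given an arbitrary $W\subseteq V$, consider the $\mfg$-module $\mathrm{Hom}(V,W)$. Inside it, let $\mathcal{V}$ be the subspace of maps whose restriction to $W$ is a scalar multiple of the identity, and let $\mathcal{W}\subset\mathcal{V}$ be those restricting to $0$. Both are submodules, and $\mathcal{V}/\mathcal{W}$ is one-dimensional and trivial. Step 2 yields a $\mfg$-invariant $f\in\mathcal{V}$ with $f|_W=\id_W$, meaning $f$ is a $\mfg$-equivariant projection of $V$ onto $W$. Then $\ker f$ is the desired complement. Iterating this on dimension gives the full decomposition of $V$ into irreducibles.
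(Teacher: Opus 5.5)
Your proof is correct. It is the standard algebraic proof of Weyl's theorem via the Casimir operator, essentially as in Humphreys' textbook. The paper does not prove this Fact: it is listed among background results ``stated without justification'' with a pointer to Fulton--Harris, so there is no paper argument to match.

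The route closest to the paper's own toolkit is Weyl's unitarian trick:
\begin{itemize}
\item restrict to a compact real form $\mathfrak{k}$ of $\mfg$;
\item integrate to the compact simply connected group $K$;
\item average an arbitrary Hermitian inner product over Haar measure on $K$ to make it invariant;
\item take orthogonal complements of submodules.
\end{itemize}
Once the structure theory is available (existence of $\mathfrak{k}$, compactness of its simply connected group, the Lie correspondence), that proof takes one line. It is also the mechanism behind the twirls and orthogonal projectors used throughout the paper. Indeed, for the unitary actions of compact groups to which the paper applies the Fact, complete reducibility follows directly from unitarity. Your argument instead needs only Cartan's criterion and Schur's lemma. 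It is purely algebraic and works over any algebraically closed field of characteristic zero.

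One step needs a sentence more than you give it. In the irreducible case of Step~2 you use ``the Casimir of $\mfg/\ker(\pi|_W)$ acting on $V$,'' built from the trace form of $W$; that is what gives $\Tr[c|_W]=\dim(\mfg/\ker(\pi|_W))$. For this to be an operator on $V$ commuting with $\pi(\mfg)$, you need $K=\ker(\pi|_W)$ to act on $V$, and in fact it acts by zero:
\begin{itemize}
\item $\pi(K)V\subseteq W$ and $\pi(K)W=0$, so $\pi([K,K])=0$;
\item $K=[K,K]$, since an ideal of a semisimple algebra is semisimple.
\end{itemize}
Alternatively, use the Casimir $c_\pi$ of the faithful module $V$ that you already built in Step~1. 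Since $c_\pi(V)\subseteq W$, you get $\Tr[c_\pi|_W]=\Tr[c_\pi]=\dim\mfg\neq 0$, which is Humphreys' version. If you do this, the faithfulness reduction must be redone for each quotient $V/W'$ arising in the induction, which is harmless.

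The rest is correct as written: the double induction for reducible $W$, the observation that one-dimensional modules are trivial, and the $\mathrm{Hom}(V,W)$ construction producing an equivariant projection in Step~3.
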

\begin{fact}\label{fact:f2}
If $\mfg$ is a  finite-dimensional complex semisimple Lie algebra, then  every finite-dimensional irreducible $\mfg$-module $V$ is a weight-module.
\end{fact}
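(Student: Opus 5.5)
The plan is to prove the stronger statement that every finite-dimensional $\mfg$-module $V$ is a weight-module; irreducibility is not needed. A basis of weight vectors is exactly a simultaneous eigenbasis of the family $\{\pi(h)\}_{h\in\mfh}$. So the argument has three parts: (i) each $\pi(h)$ is commuting with the others; (ii) each $\pi(h)$ is diagonalizable; (iii) a commuting family of diagonalizable operators on a finite-dimensional space is simultaneously diagonalizable. Step (i) is immediate: $\mfh$ is abelian and $\pi$ is a Lie algebra homomorphism, so $[\pi(h),\pi(h')]=\pi([h,h'])=0$. Step (iii) is the standard linear-algebra fact, proved by induction on $\dim V$. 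It suffices to treat a basis $H_{\alpha_1},\dots,H_{\alpha_r}$ of $\mfh$: each eigenspace of $\pi(H_{\alpha_1})$ is invariant under the remaining operators, and the restriction of a diagonalizable operator to an invariant subspace stays diagonalizable because its minimal polynomial still has distinct roots. The simultaneous eigenvalues are linear in $h$, so they define functionals $\lm\in\mfh^*$ as in Eq.~\eqref{eq:wv}.

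The substantive step is (ii). By definition of a Cartan subalgebra, ${\rm ad}_h$ is diagonalizable on $\mfg$ for $h\in\mfh$, i.e.\ $h$ is ad-semisimple. I would invoke the abstract Jordan decomposition for semisimple $\mfg$: every $x\in\mfg$ decomposes uniquely as $x=x_s+x_n$ with $[x_s,x_n]=0$, ${\rm ad}_{x_s}$ semisimple and ${\rm ad}_{x_n}$ nilpotent. Its key property is preservation under representations: $\pi(x)_s=\pi(x_s)$ and $\pi(x)_n=\pi(x_n)$. For $h\in\mfh$ we have $h_n=0$, hence $\pi(h)_n=0$ and $\pi(h)$ is diagonalizable.

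I expect this preservation theorem to be the main obstacle. The route I would take is standard. First show the abstract decomposition is well defined, using that ${\rm ad}:\mfg\to{\rm Der}(\mfg)$ is an isomorphism for semisimple $\mfg$ (all derivations are inner, via nondegeneracy of the Killing form) and that the semisimple and nilpotent parts of a derivation are again derivations. Then, for $\pi(\mfg)\subseteq\mathfrak{gl}(V)$, which is semisimple, show that the Jordan parts of $\pi(x)$ lie in $\pi(\mfg)$. They lie in the normalizer of $\pi(\mfg)$ and preserve every submodule with trace zero on it, and the algebra cut out by these conditions equals $\pi(\mfg)$; this uses Weyl complete reducibility (Fact~1). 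Compatibility with the abstract decomposition of $\pi(\mfg)\cong\mfg/\ker\pi$ then follows from uniqueness.

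If one prefers to stay within the compact-group setting of the paper, there is a shortcut for (ii). Pass to the compact real form via Weyl's unitary trick, and choose a Cartan subalgebra $\mfh=\mathfrak{t}\otimes\mbc$ with $\mathfrak{t}$ the Lie algebra of a maximal torus. Average an inner product over the compact group so that $V$ becomes a unitary module. Then $\pi(t)$ is anti-Hermitian for $t\in\mathfrak{t}$, hence diagonalizable. These operators commute, so they are simultaneously diagonalizable, and by complex linearity so is all of $\pi(\mfh)$. Since all Cartan subalgebras are conjugate under automorphisms of $\mfg$, as noted in the text, this handles any choice of $\mfh$.
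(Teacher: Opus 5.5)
Your proof is correct. The paper does not prove Fact~\ref{fact:f2}; it lists it among standard facts ``stated without justification'' and points to Ref.~\cite{fulton1991representation}. Your first route is the standard textbook proof found there: $\pi(\mfh)$ is a commuting family, each $\pi(h)$ is diagonalizable by preservation of the abstract Jordan decomposition (which rests on Weyl's complete reducibility), and the family is then simultaneously diagonalized. So there is no paper proof to compare against beyond that reference. Note that you never use irreducibility, so you have in fact shown that every finite-dimensional $\mfg$-module is a weight-module.

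Your unitary-trick alternative is also sound. One point should be made explicit in its last sentence: if $\mfh'=\phi(\mfh)$ for an automorphism $\phi$, you apply your result to the representation $\pi\circ\phi$ with Cartan subalgebra $\mfh$. This is legitimate because you proved the result for all finite-dimensional representations.

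The second route needs heavier prerequisites: a compact real form, compactness of its simply connected group, and integrating $\pi$ to that group. But it is arguably the argument the paper actually relies on. There $G$ is already a compact group acting unitarily and $H$ is a torus. Simultaneous diagonalizability of $R(H)$, and hence the weight bases used to build NDCSEs, follows directly from commutativity and unitarity, without any Lie-algebraic Jordan decomposition.
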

In light of these facts, given some irrep $\mcl^\DC_\lm$ (as encountered in the main text) upon which the Lie algebra $\mfg$ of the Lie group $G$ that constitutes the CS ensemble acts in the adjoint representation (i.e., via the correspondence $U\in G : O \mapsto UOU^\dagger = {\rm Ad}_U (O) = \exp ({\rm ad}_X)(O)$ where $X\in\mfg$ satisfies $\exp X = U$) one can define the weight-zero subspace with respect to some chosen Cartan subalgebra $\mfh$. In the notation of Eq.~\eqref{eq:wv} we have $\lm=0$ and $\pi={\rm ad}$, i.e., the weight-zero subspace is the set of operators $O$ satisfying $[X,O]=0\ \forall X\in\mfh$. The uniqueness (up to automorphism) of Cartan subalgebras guarantees that the dimension $d^H_\lm$ of the weight-zero subspace is independent of the choice of $\mfh$.\\ 

Indeed, we note that Fact~\ref{fact:f2} can be strengthened further. We need two more definitions. First, define a \textit{highest weight vector} of an irreducible $\mfg$-module to be a weight vector that is annihilated by $\mfg_\alpha$ for all $\alpha\in\Phi^+$. Call the weight of the highest weight vector  the \textit{highest weight}. Second, introduce the dual basis $\{\omega_i\}_{1\leq i\leq r}$ to the basis $\{H_{\alpha_i}\}_{1\leq i\leq r}$ of $\mfh$ induced by the simple roots, and call $\{\omega_i\}_{1\leq i\leq r}$ the \textit{fundamental weights} of the representation. We then have
\begin{fact}\label{fact:f3}
If $\mfg$ is a  finite-dimensional complex semisimple Lie algebra, then every finite-dimensional irreducible $\mfg$-module $V$ is a weight-module which possesses a highest weight $\w=\sum_{i=1}^r n_i\w_i$ where $n_i\in\mbz_{\geq 0}\ \forall\ i$. The coefficients $n_i$ are called the Dynkin labels of the weight. The highest weight uniquely characterizes the representation.
\end{fact}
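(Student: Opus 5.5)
The plan is to prove Fact~\ref{fact:f3} in three stages, taking Fact~\ref{fact:f2} (every finite-dimensional irreducible $\mfg$-module is a weight-module) as given: (i) existence of a highest weight vector, (ii) integrality and non-negativity of its Dynkin labels, and (iii) uniqueness of the irrep with a given highest weight. Throughout we normalise $H_{\alpha_i}$ to be the coroots, so that $\alpha_i(H_{\alpha_i})=2$. For each simple root we pick $E_i\in\mfg_{\alpha_i}$ and $F_i\in\mfg_{-\alpha_i}$ with $[E_i,F_i]=H_{\alpha_i}$. Each triple $(E_i,F_i,H_{\alpha_i})$ then spans a copy of $\mfsl_2$, and with this normalisation the dual basis $\{\w_i\}$ consists of the usual fundamental weights.

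For existence, we use that $V$ is finite-dimensional, so it has finitely many weights. We pick one, $\lm$, that is maximal in the partial order $\mu\preceq\nu \iff \nu-\mu\in\mbz_{\geq0}\Phi^+$; equivalently, $\lm$ maximises $\mu\mapsto\mu(h_0)$ for some $h_0$ that is strictly positive on $\Phi^+$. The standard computation $\pi(h)\pi(x)v=\pi(x)\pi(h)v+\pi([h,x])v$ shows that $\pi(\mfg_\alpha)$ maps the $\mu$-weight space into the $(\mu+\alpha)$-weight space. Hence any nonzero $v$ of weight $\lm$ is annihilated by all $\mfg_\alpha$ with $\alpha\in\Phi^+$, and is therefore a highest weight vector.

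For integrality, we restrict $V$ to the $i$-th $\mfsl_2$ copy. The vector $v$ is killed by $E_i$ and has $H_{\alpha_i}$-eigenvalue $\lm(H_{\alpha_i})$. The elementary finite-dimensional $\mfsl_2$ argument then applies: the vectors $F_i^k v$ satisfy $E_iF_i^{k}v=k(\lm(H_{\alpha_i})-k+1)F_i^{k-1}v$, and finite-dimensionality forces some $F_i^{N+1}v=0$ with $F_i^N v\neq 0$, which gives $\lm(H_{\alpha_i})=N\in\mbz_{\geq0}$. Writing $\lm=\sum_i \lm(H_{\alpha_i})\w_i$ identifies $n_i=N_i$.

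For uniqueness, we use the PBW theorem together with irreducibility to get $V=U(\mfn^-)v$, where $\mfn^-=\bigoplus_{\alpha\in\Phi^-}\mfg_\alpha$. From this, every weight of $V$ lies in $\lm-\mbz_{\geq0}\Phi^+$ and the $\lm$-weight space is one-dimensional; in particular the highest weight is an invariant of $V$. If $V,V'$ are irreducible with the same highest weight $\lm$ and highest weight vectors $v,v'$, we consider the submodule $W\subseteq V\oplus V'$ generated by $(v,v')$. Its $\lm$-weight space is one-dimensional, so $W$ is irreducible: any proper submodule would contain a highest weight vector of some weight $\mu\prec\lm$, and one argues that this would generate a proper submodule in $V$ or $V'$. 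Each projection $W\to V$ and $W\to V'$ is then a nonzero map between irreducibles, hence an isomorphism by Schur's lemma, so $V\cong V'$. This irreducibility-of-$W$ step is the one I expect to be the main obstacle. The cleanest way around it is to note that $W$ is a quotient of the Verma module $M(\lm)$, which has a unique maximal submodule (the sum of all submodules not meeting the $\lm$-weight space). Therefore $M(\lm)$ has a unique irreducible quotient, and both $V$ and $V'$ must be isomorphic to it.
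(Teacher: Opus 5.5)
The paper does not prove this statement. It is one of the background facts in the appendix on weight modules that are declared ``well-known and stated without justification,'' with a pointer to Fulton--Harris. Your argument is the standard textbook proof found there, and it is correct:
\begin{itemize}
\item a weight maximal for $\preceq$ gives a highest weight vector;
\item the $\mfsl_2$-strings through it force $\lm(H_{\alpha_i})\in\mbz_{\geq 0}$;
\item $V=U(\mathfrak{n}^-)v$ makes the $\lm$-weight space one-dimensional;
\item the diagonal submodule of $V\oplus V'$ gives uniqueness.
\end{itemize}

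Three remarks.

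First, your normalisation $\alpha_i(H_{\alpha_i})=2$ is not cosmetic. Suppose $H_{\alpha_i}$ is read literally as the Killing-form dual $t_{\alpha_i}$ of $\alpha_i$, as the paper's description suggests. Then the coefficients become $\lm(t_{\alpha_i})=\tfrac12(\alpha_i,\alpha_i)\,\lm(H_{\alpha_i})$, which need not be integers; for the defining representation of $\mfsl_2$ one gets $1/4$. So the integrality claim holds only with the coroot convention you adopt.

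Second, the step you single out as the main obstacle does not need Verma modules. One option is Fact~1 (complete reducibility). If $W=W'\oplus W''$, the one-dimensional space $W_\lm=\mbc(v,v')$ lies in one summand, and since $(v,v')$ generates $W$, that summand is all of $W$. The other option avoids irreducibility of $W$ altogether:
\begin{itemize}
\item $\ker(W\to V)=W\cap(0\oplus V')$ is a submodule of $V'$, hence $0$ or all of $0\oplus V'$;
\item the latter would put $(0,v')$ in $W_\lm=\mbc(v,v')$, which is impossible, so $W\to V$ is injective;
\item it is surjective because $v$ generates $V$, and the same holds for $V'$.
\end{itemize}
Your Verma-module route is also correct. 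It is what one would need for the converse (every dominant integral weight occurs), which the statement does not assert.

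Third, in the existence step, maximising $\mu(h_0)$ implies maximality for $\preceq$ but is not equivalent to it. It also needs the weights to take real values at $h_0$. Maximality in the finite poset of weights is all you actually use, so drop the ``equivalently.''
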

Fact~\ref{fact:f3} will play an important role in the forthcoming appendices, as we try to understand the irreps appearing in the decomposition of $\mcl^\DC$ for various examples. It follows from what we have seen that an irreducible representation is  exactly specified by the vector $(n_1, n_2,\ldots,n_r)$ of the Dynkin labels of its highest weight; we will occasionally use this notation.\\

For Lie groups, our treatment of CS will be systematic: we determine the structure of $\mch$ as a $G$-module, and find a weight basis. The weights of $\mch$ as a $G$-representation will in all our examples be degeneracy-free, Theorem~\ref{thm:mc} will imply that that weight basis is centralizing. We then identify the  space of visible operators $\mcl^\DC$, and determine its structure  as a $G$-module (including finding the dimension of the weight-zero subspaces of the irreps, which corresponds to the $d_0^\lm$ of Theorem~\ref{thm:mc}), thereby obtaining the measurement channel.

\section{Proofs of main results}\label{sec:gshaddetails}

In this appendix we give the proofs of the main theorems from the text.  Let us begin by recalling some definitions:
\deffb*
\defnlmgood*
\defngood*
\defcse*
\noindent
Let us begin by proving:
\lemfgood*
\begin{proof}
It is sufficient to show that the corresponding channel is the identity on the trivial irreps in $\mcl^\DC$ (we will moreover show that it annihilates everything not in $\mcl^\DC$). Indeed, as by Schur's lemma an arbitrary trivial irrep in $\mcl$ is spanned by $O_{i,j}=\sum_\a\ketbra{\eta,i,\a}{\eta,j,\a}$ for some pair $1\le i,j \le m_\eta^\mch$ we can simply directly calculate:
\begin{align}
    \mcm(O_{i,j})&=\expect_{g\sim G} g\ad\cdot\mca_\mcw\cdot g \cdot \left(O_{i,j}\right)\\
    &=\expect_{g\sim G} g\ad\cdot\mca_\mcw\cdot  \left(O_{i,j}\right)\\
    &=\expect_{g\sim G} g\ad\cdot\sum_{\zeta,k,\b}\bra{\zeta,k,\b}\left(O_{i,j}\right)\ket{\zeta,k,\b}\ketbra{\zeta,k,\b}\\
    &=\expect_{g\sim G} g\ad\cdot\sum_{\zeta,k,\b}\bra{\zeta,k,\b}\left(\sum_\a\ketbra{\eta,i,\a}{\eta,j,\a}\right)\ket{\zeta,k,\b}\ketbra{\zeta,k,\b}\\
    &=\expect_{g\sim G} g\ad\cdot  \sum_\a \delta_{i,j}\ketbra{\eta,i,\a}{\eta,j,\a}\\
    &=\delta_{i,j}\expect_{g\sim G} g\ad\cdot O_{i,j}\\
    &=\delta_{i,j}  O_{i,j}\,,
\end{align}
where we have used that by definition the action $g\cdot O_{i,j}=O_{i,j}$ is trivial. As $O_{i,j}\in\mcl^\DC$ iff $i=j$ we see that the measurement channel indeed acts as the identity on the visible portion of the trivial isotypic, and is therefore trivial centralizing. 

\end{proof}

\noindent
Before coming to the proof of Theorem~\ref{thm:mc}, we have two preliminary lemmas:

\begin{lemma}\label{lem:2}
If $\mcw=\{\ket{\eta,i,\a}\}_{\eta,i,\a}$ is an NDCSE with respect to an abelian subgroup $H$, then the dephasing map $\mca_\mcw = \sum_{w\in W} \braket{w|(-)}{w}\ketbra{w}$ satisfies 
\begin{equation}
    \mca_\mcw = \mct_H\circ \mcp_D,\label{eq:hdeph}
\end{equation}
where $\mct_H = \expect_{h\sim H} h$ and $\mcp_D$ is the projector onto the space $\mcl^\DC$ of $G$-diagonal operators. 
\end{lemma}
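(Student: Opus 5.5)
We will verify the identity $\mca_\mcw = \mct_H\circ\mcp_D$ by evaluating both sides on the matrix-unit basis $\{\ketbra{\eta,i,\a}{\zeta,j,\b}\}$ of $\mcl$, built from the NDCSE $\mcw=\{\ket{\eta,i,\a}\}$. Both maps are linear, so agreement on this basis suffices. Here $\mct_H$ denotes the twirl $X\mapsto \expect_{h\sim H} R(h)XR(h)\ad$ (a finite average or the Haar integral over $H$), and $\mcp_D$ denotes the projector onto $\mcl^\DC=\bigoplus_{\eta,i}{\rm End}(\HC^{\eta,i})$.

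\textbf{Left-hand side.} Since $\mca_\mcw$ dephases in $\mcw$, it keeps $\ketbra{\eta,i,\a}{\zeta,j,\b}$ when $(\eta,i,\a)=(\zeta,j,\b)$ and sends it to zero otherwise.

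\textbf{Right-hand side, first factor.} Because $\mcw$ is a FB, each basis element lies in a single $\HC^{\eta,i}$. The projector $\mcp_D$ is orthogonal with respect to the Hilbert--Schmidt product, and the matrix units are orthonormal and split into those inside $\mcl^\DC$ and those outside it. Hence $\mcp_D$ keeps $\ketbra{\eta,i,\a}{\zeta,j,\b}$ exactly when $(\eta,i)=(\zeta,j)$, and kills it otherwise.

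\textbf{Right-hand side, second factor.} Each $\ket{\eta,i,\a}$ is a simultaneous eigenvector of the abelian group $H$, so $R(h)\ket{\eta,i,\a}=\chi_{\eta,i,\a}(h)\ket{\eta,i,\a}$ for a one-dimensional character $\chi_{\eta,i,\a}$ of $H$. Then
\begin{equation}
\mct_H\big(\ketbra{\eta,i,\a}{\eta,i,\b}\big)=\Big(\expect_{h\sim H}\chi_{\eta,i,\a}(h)\overline{\chi_{\eta,i,\b}(h)}\Big)\ketbra{\eta,i,\a}{\eta,i,\b}.
\end{equation}
By orthogonality of characters of the compact group $H$, the coefficient is $1$ if $\chi_{\eta,i,\a}=\chi_{\eta,i,\b}$ and $0$ otherwise.

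\textbf{Non-degeneracy.} Non-degeneracy says the restriction of $\HC^{\eta,i}$ to $H$ is multiplicity-free. So distinct $\a\neq\b$ in the same block carry distinct characters, and the coefficient above equals $\delta_{\a\b}$. Combining the three steps gives $\mct_H\circ\mcp_D(\ketbra{\eta,i,\a}{\zeta,j,\b})=\delta_{\eta\zeta}\delta_{ij}\delta_{\a\b}\ketbra{\eta,i,\a}{\zeta,j,\b}$, which matches the left-hand side.

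\textbf{Main obstacle.} The delicate point is the need for $\mcp_D$ at all. Different copies $\HC^{\eta,i}$ and $\HC^{\eta,j}$, or even different irreps, may share $H$-characters. Then $\mct_H$ alone would fail to kill cross terms like $\ketbra{\eta,i,\a}{\eta,j,\a}$, which is exactly why non-degeneracy is imposed only within each block. A second point to check is that $\mcp_D$ is really an orthogonal projection onto $\mcl^\DC$ in the matrix-unit basis, which holds because the FB is orthonormal.
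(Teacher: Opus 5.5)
Your proof is correct and follows essentially the same route as the paper: you check both sides on the matrix units adapted to $\mcw$, and use character orthogonality for $H$ together with within-block multiplicity-freeness to see that $\mct_H$ acts as dephasing on each ${\rm End}(\HC^{\eta,i})$, noting that both sides annihilate the off-block matrix units spanning the complement of $\mcl^\DC$. Your remark on why $\mcp_D$ is needed at all (distinct blocks may share $H$-characters, so $\mct_H$ alone would not kill cross-block terms) makes explicit a point the paper leaves implicit.
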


\begin{proof}
First, note that because $\mcw$ is a CSE, every basis vector $\ket{\eta,i,\alpha}\in \mcw_{\eta,i}$ is an $H$-eigenvector, i.e. $h\,|{\eta,i,\alpha}\rangle = \chi_{\eta,i,\alpha}(h)\ket{{\eta,i,\alpha}}$
for some character $\chi_{\eta,i,\alpha}\in \widehat{H}$. Consider a matrix unit $E_{\eta,i;\alpha\beta}:=\ket{{\eta,i,\alpha}}\bra{ \eta,i,\beta}$ inside a single block
${\rm End}\,(\mathcal H_{\eta,i})$; we then have
\begin{equation}
  \mathcal{T}_H(E_{\eta,i;\alpha\beta})
  = \mathbb{E}_{h\sim H}\, \chi_{\eta,i,\alpha}(h)\,\overline{\chi_{\eta,i,\beta}(h)}\, E_{\eta,i;\alpha\beta}.
\end{equation}
If $\alpha=\beta$, the coefficient is $1$, and if
 $\alpha\neq\beta$, then by the multiplicity-freeness of   the characters guaranteed by the NDCSE condition, $\chi_{\eta,i,\alpha}$ and $\chi_{\eta,i,\beta}$ are distinct, whence
$\mathbb{E}_{h\sim H}\chi_{\eta,i,\alpha}(h)\overline{\chi_{\eta,i,\beta}(h)}=0$ by character orthogonality.
Therefore, within each block ${\rm End}\,(\mathcal H_{\eta,i})$ the $H$-twirl acts exactly as dephasing in the basis
$W_{\eta,i}$. Combining this with the immediate observation that both the left and right hand sides of Eq.~\eqref{eq:hdeph} annihilate the orthogonal complement of $\mcl^\DC$ concludes the proof.
\end{proof}

\noindent
Next we have another lemma:

\begin{lemma}\label{lem:conjtwirl}
For $h\in H\subseteq G$, the ``conjugacy-class twirl''
\begin{equation}\label{eq:cctwirl}
  \mathcal{C}_h(X) := \mathbb{E}_{g\sim G}\, (g^\dagger h g)\cdot X.
\end{equation}
satisfies
\begin{equation}\label{eq:ch}
\mathcal{C}_h  =\sum_{\lm} \frac{\chi_\lambda(h)}{d_\lambda} \id_{V^\lambda},
\end{equation}
where we sum over all irreps which appear in $\mcl$. 
\end{lemma}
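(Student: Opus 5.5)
This is a Schur's-lemma argument. We interpret the statement as follows: $\id_{V^\lm}$ denotes the projector onto the $\lm$-isotypic component $\mcl_\lm$, equivalently $\id_{\mbb{C}^{m_\lm^\LC}}\ot\id_{V^\lm_G}$ in the decomposition of Eq.~\eqref{eq:ldecomp}. The character $\chi_\lm$ is that of the irrep $V^\lm_G$.

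\emph{Step 1: equivariance.} Write $\mathrm{Ad}(k)$ for the adjoint action of $k\in G$ on $\mcl$, so that $\mathcal{C}_h=\expect_{g\sim G}\mathrm{Ad}(g^{-1}hg)$. For any $k\in G$, compute $\mathrm{Ad}(k)\,\mathcal{C}_h\,\mathrm{Ad}(k)^{-1}=\expect_{g}\mathrm{Ad}\big((gk^{-1})^{-1}h(gk^{-1})\big)$. By invariance of the Haar measure under the substitution $g\mapsto gk$, this equals $\mathcal{C}_h$. Hence $\mathcal{C}_h$ commutes with the $G$-action on $\mcl$.

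\emph{Step 2: reduction to scalars.} Using Eq.~\eqref{eq:ldecomp}, identify $\mcl_\lm\cong \mbb{C}^{m_\lm^\LC}\ot V^\lm_G$. On this space $\mathrm{Ad}(g^{-1}hg)$ acts as $\id\ot\rho_\lm(g)^{-1}\rho_\lm(h)\rho_\lm(g)$. The Haar average is therefore $\id\ot T_\lm$, where $T_\lm=\expect_g \rho_\lm(g)^{-1}\rho_\lm(h)\rho_\lm(g)$. This shows directly that $\mathcal{C}_h$ preserves each isotypic component and does not mix multiplicity copies. This is the key point: a general $G$-equivariant map could act nontrivially on the multiplicity spaces, but a conjugation average cannot, because it is built out of the group action itself. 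By the same Haar-invariance argument as in Step 1, $T_\lm$ commutes with $\rho_\lm(G)$. Since $V^\lm_G$ is irreducible, Schur's lemma gives $T_\lm=c_\lm\id_{V^\lm_G}$.

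\emph{Step 3: identifying the scalar.} Take the trace on $V^\lm_G$. By cyclicity, $\Tr[T_\lm]=\expect_g\Tr[\rho_\lm(g)^{-1}\rho_\lm(h)\rho_\lm(g)]=\chi_\lm(h)$. On the other hand, $\Tr[T_\lm]=c_\lm d_\lm$. Hence $c_\lm=\chi_\lm(h)/d_\lm$. Summing over the irreps $\lm$ appearing in $\mcl$ gives Eq.~\eqref{eq:ch}.

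The only point requiring care is Step 2. Equivariance alone is not enough there, and the proof must use the explicit factorized form of the action on the isotypic component. Everything else is routine. We also note that $h\in H$ plays no special role: the statement holds for any $h\in G$. Projectivity of $R$ causes no trouble, since the adjoint action is a genuine representation.
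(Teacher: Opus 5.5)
Your proof is correct and follows essentially the same route as the paper: the paper obtains that $\mathcal{C}_h$ acts trivially on the multiplicity spaces by noting that it is central in the commutant (being a Haar average of group operators), and you make the same point explicitly through the factorized form $\id\ot T_\lm$ on each isotypic component; both arguments then apply Schur's lemma and fix the scalar by taking a trace. Your side remark that membership $h\in H$ plays no role in the lemma is also correct.
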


\begin{proof}
To begin, note that the manifest $G$-equivariance of  $C_h$ implies that it acts non-trivially at most on the multiplicity spaces of the decomposition of $\mcl$ into $G$-irreps. Importantly, more is true: It is in fact \textit{central} in the commutant of $G$; indeed, $C_h$ evidently commutes with any map that commutes with the action of  $G$. It then follows from Schur's lemma that $C_h$ restricts to a scalar multiple $c_h(\lm)$ of the identity on each isotypic. Taking the trace of the equation $\mathbb{E}_{g\sim G}\, (g^\dagger h g)\rvert_{\lm} = c_h(\lm)\id_\lm$ then fixes the scalar coefficient and immediately yields the result.
\end{proof}

\noindent
We are now ready to prove:

\thmmc*

\begin{proof}
The combination of Lemmas~\ref{lem:2} and~\ref{lem:conjtwirl} allow us to procceed by a direct calculation. First, recalling the definition of the measurement channel and using the result of Lemma~\ref{lem:2} gives
\begin{equation}
    \mcm =\expect_{g\sim G} g\ad\cdot\mca_\mcw\cdot g = \expect_{g\sim G} g\ad\cdot(\mct_H\circ \mcp_D)\cdot g=\left(\expect_{g\sim G} g\ad\cdot \mct_H \cdot g \right) \circ \mcp_D=\left(\expect_{g\sim G} \expect_{h\sim H}g\ad\cdot h \cdot g \right) \circ \mcp_D
\end{equation}
where we have used that the projector $\mcp_D$ onto the space of $G$-diagonal operators manifestly commutes with the action of any $g\in G$. Now, by Lemma~\ref{lem:conjtwirl}, we have
\begin{equation}
    \expect_{g\sim G} \expect_{h\sim H}g\ad\cdot h \cdot g =  \expect_{h\sim H}C_  h   = \sum_\lm \frac{\expect_{h\sim H} \chi_\lm (h)}{d_\lm}\id_{V^\lm}  = \sum_\lm \frac {d_\lm^ H}{d_\lm}\id_{V^\lm}  = \sum_\lm  a_\lm^ H \id_{V^\lm}  
\end{equation}
where we have used that $\expect_{h\sim H} \chi_\lm (h) $ is exactly the dimension of the $H$-invariant subspace of the $G$-irrep $\lm$~\cite{fulton1991representation}. 
Recalling that we write $\PC^\VC_\lm$ for the projector onto the copies of the irrep $\lm$ in the image of $\mcm$ (which are exactly the irreps which occur in $\mcl^\DC$ with a non-zero value of $a_\lm^H$), we are in a position to conclude that $\mcm=\sum_\lm a_\lm^H \PC^\VC_\lm$, as desired. 
\end{proof}

Now, as small values of $a_\lm^H$ result in the corresponding module being almost entirely annihilated by $\mcm$, one would expect that observables living there are sample-wise expensive to estimate.
Recalling the connection between the variance of the estimator, the shadow norm and the sample-complexity of CS (see Eqs.~\eqref{eq:shadvar} and~\eqref{eq:shadsamp}), the previous intuition is
made rigourous by the following theorem:

\thmvar*

\begin{proof}
We begin with some general observations. First, as
the measurement channel $\mathcal{M}$ is self-adjoint it admits a set of eigenvectors $\set{E_k}_{k}$ that form an {orthonormal} basis of $\mcl$. From Theorem~\ref{thm:mc} we know that the elements of any basis $\{E^{\lm,i}\}_i$ of operators for the isotypic $\mcl^\VC_\lm$ each have $\mcm$ eigenvalue $a_\lm^H$. Taking the union over $\lm$ yields a basis for the subspace $\mcl^\VC$, which we extend to a basis for $\mcl$ (with all the new eigenvectors being annihilated by $\mcm$). By the orthonormality of the (eigen-)basis elements we have:
\begin{align}
\begin{split}
\label{eq:app_lambda_i_alt}
    a_\lambda \delta_{\lm,\nu} &= \Tr [\mathcal{M}(E_\lm) E_\nu] = \Tr \left[\left( \sum_w \expect_{U\sim G} \  \Tr[U^{\dagger} \Pi_w U E_\lm] U^{\dagger} \Pi_w U \right) E_\nu \right] \\ 
    &=\sum_w \expect_{U\sim G} \  \Tr [U^{\dagger} \Pi_w U E_\lm] \Tr[U^{\dagger} \Pi_w U E_\nu]\,.
\end{split}
\end{align}
We also note that, by H\"older's inequality and the unitary invariance of the 1-norm, 
\begin{equation}~\label{eq:holder}
\lvert\Tr[U^{\dagger} \Pi_w U A]\rvert\leq \| U^{\dagger} \Pi_w U \|_1 \| A\|_{\infty}=\|  \Pi_w  \|_1 \| A\|_{\infty}=\| A\|_{\infty},
\end{equation}
and further recall that the pseudo-inverse $\mathcal{M}^{-1}$ is only defined on the image $\mcl^\mcv$ of $\mathcal{M}$, $\mathcal{M}^{-1}(E^{\lm,i}) = (1/a_\lm^H) E^{\lm,i}  $ and annihilates the operators outside of $\mcl^\mcv$. 
Writing $O=\left( \sum_\lm O^\lm \right) + O'=\left( \sum_{\lm,i} o^{\lm,i} E^{\lm,i} \right) + O'$,
with $O'$ the component of $O$ in the orthogonal complement of $\mcl^\mcv$, we therefore have $\mcm^{-1}(O)=\sum_{\lm\in\widehat{\mcv}} (a_\lm^H)^{-1}O^\lm .$
We now turn to the proofs of the quoted bounds.
\\

\textit{First bound.}
From the definition of the variance we have
\begin{align}\label{eq:app_var_ei}
\Var[\hat{o}] &=\mbe[\hat{o}^2]-\mbe[\hat{o}]^2\\
&\leq \mbe[\hat{o}^2]\\
&=\sum_w \expect_{U\sim G} \  \Tr[U^{\dagger} \Pi_w U\rho]\Tr[U^{\dagger}\Pi_w U \mathcal{M}^{-1}(O)] ^2 \\
&\leq  \sum_w \expect_{U\sim G} \  \Tr[U^\dagger\Pi_w U \mathcal{M}^{-1}(O)] ^2 \label{eq:rhoprob} \\
&= \sum_{\lm,\lm',i,i'}\frac{o^{\lm,i} o^{\lm',i'}}{a_{\lm}a_{\lm'}} \sum_w \expect_{U\sim G} \  \Tr[U^{\dagger}\Pi_w U E^{\lm,i}]\Tr[U^{\dagger}\Pi_w U E^{\lm',i'}]\\
&= \sum_{\lm,\lm',i,i'}\frac{o^{\lm,i} o^{\lm',i'}}{a_{\lm}a_{\lm'}} a_\lm^H\delta_{\lm,\lm'}\delta_{i,i'}\label{eq:ortho}\\
&= \sum_{\lm}\frac{\|O^\lm\|_2^2}{a_{\lm}} \,.
\end{align}
Eq.~\eqref{eq:rhoprob} is obtained by using   $0\leq \Tr[U^{\dagger} \Pi_w U\rho] \leq 1$, as it corresponds to a probability. In Eq.~\eqref{eq:ortho}, we have recalled  Eq.~\eqref{eq:app_lambda_i_alt}.\\

\textit{Second bound.} We have:
\begin{align}
\mathrm{Var}[\hat{o}]&=\mbe[\hat{o}^2]-\mbe[\hat{o}]^2\\
&\leq \mbe[\hat{o}^2]\\
&=\sum_w\expect_{U\sim G} \  \Tr[\rho U^\dagger \ketbra{w}  U]  \Tr[  \mcm^{-1}\left(O\right)  U^\dagger \ketbra{w} U] ^{ 2}\\
&\leq\sum_w\expect_{U\sim G} \  \Tr[\rho U^\dagger \ketbra{w}  U]\label{eq:hi}   \| \mcm^{-1}\left(O\right)\|_\infty^2  \\
&=  \| \mcm^{-1}\left(O\right)\|_\infty^2\expect_{U\sim G} \  \Tr[\rho U^\dagger \sum_w\ketbra{w}  U]   \\
&=  \| \mcm^{-1}\left(O\right)\|_\infty^2 \label{eq:idres}\\
&=  \left\|   \sum_{\lm} \frac{O^{\lm} }{a_\lm^H}  \right\|_\infty^2\,,
\end{align}
where in Eq.~\eqref{eq:hi} we have used Eq.~\eqref{eq:holder} with $A=\mcm^{-1}\left(O\right)$, and in Eq.~\eqref{eq:idres} that $\sum_w\ketbra{w} =\id$. 

\end{proof}

In the examples we study below we find that typically $d^H_\lm\ll d_\lm$, and so 
observables living in large modules are  generally difficult to estimate, one of the main lessons of this work. Conversely, observables living in a module polynomial in the system size can, if their spectral norm is likewise polynomial in the system size, always be efficiently estimated with such shadows. Which of the bounds Eqs.~\eqref{eq:2nb} and~\eqref{eq:inb} are tighter will depend on the specific observable; we always have $\|O^\lm\|_\infty\leq\|O^\lm\|_2$ and $a_\lm^H\leq 1$, with the better bound therefore being determined on a case by case basis. We note that the existence of an analogous result to the final bound of Eq.~\eqref{eq:2nb} in the infinity norm case (i.e., $\|O\|_\infty^2 \max_{\lm}{(a_\lm^H)^{-2}}$) is less clear, as we have $\sum_\lm \|O^\lm\|_2^2=\|O\|_2^2$, but in general   $\sum_\lm \|O^\lm\|_\infty^2\neq \|O\|_\infty^2$. \\

When certain conditions are met we can improve the bounds of Theorem~\ref{thm:var}.
For example, when restricting to a specific group of interest one can potentially use the additional structure present to obtain tighter bounds than the general case (we will carry out such analysis for two of the examples below, when $G=\SU(2),\ \O(2^n) $). More generally, we additionally have:
\begin{restatable}{thm}{thmvarf}\label{thm:varf}
Let $O\in(\mcl^\DC)^\lm$. If one of the following conditions 
\begin{enumerate}[(i)]
  \item $O$ is positive or negative semi-definite
  \item $O$ is, for some $\eta,\lm,i,j$, an element of a uniform spectral norm self-adjoint basis of $({\rm End\,}\mch^{\eta,i})^{\lm,j}$ whose normaliser contains $G$, and whose members are either diagonal or entirely off diagonal with respect to the measurement basis
\end{enumerate}
is met, then we have 
\begin{align}
\Var[\hat{o}] &\leq  \frac{\|O\|^2_\infty}{a_\lm^H}\label{eq:inbf}.
\end{align}
\end{restatable}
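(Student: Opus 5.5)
In both cases I will start from the same reduction used in the proof of Theorem~\ref{thm:var}. Since $O$ lies in a single visible isotypic, $\mcm^{-1}(O)=O/a_\lm^H$. Hence
\begin{equation*}
\Var[\hat o]\le \frac{1}{(a_\lm^H)^2}\sum_w \expect_{U\sim G}\Tr[\rho U^\dagger\Pi_wU]\,\Tr[U^\dagger\Pi_wU\,O]^2 .
\end{equation*}
The task is to save one factor of $a_\lm^H$. I will do this by bounding one copy of $\Tr[U^\dagger\Pi_wU O]$ crudely by $\|O\|_\infty$ and keeping the other copy linear. The target is to show
\begin{equation*}
S:=\sum_w \expect_{U}\Tr[\rho U^\dagger\Pi_wU]\,\Tr[U^\dagger\Pi_wU\,O]^2\le \|O\|_\infty\, a_\lm^H\,\|O\|_\infty .
\end{equation*}
The identity $\sum_w\expect_U \Tr[U^\dagger\Pi_wU A]\,U^\dagger\Pi_wU=\mcm(A)$ converts a linear weight back into the channel, which acts as $a_\lm^H$ on $O$.

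\textbf{Case (i).} Take $O\succeq 0$; the negative case follows by $O\to -O$. Then each $x_{w,U}:=\Tr[U^\dagger\Pi_wU O]$ satisfies $0\le x_{w,U}\le\|O\|_\infty$, so $x_{w,U}^2\le \|O\|_\infty x_{w,U}$. The $\rho$-weight does not cooperate directly, so I will first write $\Tr[\rho U^\dagger\Pi_wU]\le 1$ and then bound
\begin{equation*}
S\le \|O\|_\infty\sum_w\expect_U x_{w,U}\,\Tr[\rho\, U^\dagger\Pi_wU]\ \text{ or }\ \le\|O\|_\infty\sum_w\expect_U x_{w,U}.
\end{equation*}
The clean route is to keep $\rho$: $\sum_w\expect_U x_{w,U}\Tr[\rho U^\dagger\Pi_wU]=\Tr[\rho\,\mcm(O)]=a_\lm^H\Tr[\rho O]\le a_\lm^H\|O\|_\infty$. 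Dividing by $(a_\lm^H)^2$ gives Eq.~\eqref{eq:inbf}. This case is straightforward.

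\textbf{Case (ii).} Here $O$ need not be definite, so the linearization trick fails pointwise. I plan a tight-frame/averaging argument in the spirit of Appendix~\ref{sec:tf}. Let $\{B_k\}_{k=1}^{K}$ be the given basis of $({\rm End}\,\mch^{\eta,i})^{\lm,j}$, all with $\|B_k\|_\infty=c$, with $O=B_{k_0}$, and with $G$ permuting the $B_k$ up to sign under conjugation. First, since $G$ is in the normaliser and the twirl is Haar, the quantity $\sum_w\expect_U \Tr[\rho U^\dagger\Pi_wU]\Tr[U^\dagger\Pi_wUB_k]^2$ depends on $k$ only through the orbit of $B_k$. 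I would like to replace it by its average over $k$, but $\rho$ breaks the symmetry. I therefore bound it with $\Tr[\rho\,\cdot]\le\|\cdot\|_\infty$: for each $(w,U)$ it suffices to control $\sum_k \Tr[U^\dagger\Pi_wUB_k]^2$ by $K$ times something small. Second, I will use the diagonal/off-diagonal hypothesis to compute this. Since $\Pi_w$ is rank one, with a vector $v=U^\dagger\ket w$ in $\mch^{\eta,i}$, each term is $\langle v|B_k|v\rangle^2$. Off-diagonal members of the basis transformed by the symmetry have zero expectation on measurement-basis vectors, and $G$-equivariance moves this to general $v$. The sum $\sum_k\langle v|B_k|v\rangle^2$ is the squared norm of the projection of $\ketbra{v}$ onto the span; by $G$-invariance of that span, this norm is constant in $v$ and equals its average, which is $K a_\lm^H c^2/K$-type by Eq.~\eqref{eq:app_lambda_i_alt}.

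The main obstacle is case (ii): making the symmetrization over $k$ legitimate despite the $\rho$-weight. I would first try expanding $\rho$ into its own isotypic components and using orthogonality, with the third-moment twirl localizing to the single orbit. Failing that, I would use the orbit-constancy together with $\sum_k$ to obtain $S\le c^2\,a_\lm^H$ directly.
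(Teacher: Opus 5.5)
Your case (i) is the paper's argument: linearize one factor using $0\le x_{w,U}\le\|O\|_\infty$, keep the $\rho$-weight, and collapse to $\Tr[\rho\,\mcm(O)]=a_\lm^H\Tr[\rho O]$.

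Case (ii) has a genuine gap. Everything rests on replacing the single term for $O=B_{k_0}$ by its average over the basis, and none of your routes justifies this.

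\begin{itemize}
\item For fixed $\rho$, the relation $B_{k'}=\pm U_gB_{k_0}U_g^\dagger$ together with Haar invariance only gives $S_{k'}(\rho)=S_{k_0}(U_g^\dagger\rho U_g)$. It does not give $S_{k'}(\rho)=S_{k_0}(\rho)$.
\item Writing $S_k(\rho)=\Tr[\rho X_k]\le\|X_k\|_\infty$, with $X_k=\sum_w\expect_U U^\dagger\Pi_wU\,\Tr[U^\dagger\Pi_wUB_k]^2$, does not help either. One has $X_{k'}=U_gX_{k_0}U_g^\dagger$, so your pointwise estimate on $\sum_k\Tr[U^\dagger\Pi_wUB_k]^2$ only controls $\frac1K\sum_kX_k$, which is a twirl of $X_{k_0}$. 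That says nothing about $\|X_{k_0}\|_\infty$.
\item The inequality you actually have, $x_{k_0}^2\le\sum_kx_k^2$, loses a full factor $K=d_\lm$.
\item Discarding $\rho$ via $\Tr[\rho U^\dagger\Pi_wU]\le 1$ only returns the first bound of Theorem~\ref{thm:var}, namely $\|O\|_2^2/a_\lm^H$.
\end{itemize}

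The missing idea is to use the normalizer hypothesis pointwise in $g$, for the single element $B_{k_0}$, rather than summed over the basis. For each $g$, $U_gB_{k_0}U_g^\dagger=\pm B_{\pi_g(k_0)}$ is a single basis element. Entirely off-diagonal elements have zero diagonal entries in the measurement basis. Hence
\begin{equation*}
\Tr[U_g^\dagger\Pi_wU_g B_{k_0}]^2=\langle w|B_{\pi_g(k_0)}|w\rangle^2\le
\begin{cases}\|O\|_\infty^2, & B_{\pi_g(k_0)}\text{ diagonal},\\ 0, & B_{\pi_g(k_0)}\text{ off-diagonal}.\end{cases}
\end{equation*}
This bound is independent of $w$, so the $\rho$-weights simply sum to $\sum_w\Tr[\rho U_g^\dagger\Pi_wU_g]=1$. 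The $\rho$-dependence disappears because the bound is uniform in $w$, with no symmetrization in $k$ needed.

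What remains is $\|O\|_\infty^2$ times the probability over $g$ that $B_{k_0}$ is sent to a diagonal element. Lemma~\ref{lem:orbit} evaluates this probability to $a_\lm^H$, by Schur's lemma applied to the projector onto the diagonal basis elements, which span the $d_\lm^H$-dimensional $H$-invariant subspace of the irrep. Dividing by $(a_\lm^H)^2$ gives Eq.~\eqref{eq:inbf}, and this is the paper's proof.
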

The first condition occurs, for example, when the observable one is measuring is a projector with support on a single isotypic, or the fidelity with respect to another state.
An interesting example of the second condition being met is when $G$ is the set of matchgate Cliffords, and the observables Majorana fermions of fixed even degree (upon which $G$ simply acts as a permutation), the setting considered in Ref.~\cite{zhao2021fermionic}. The proof of this theorem may be found in Appendix~\ref{sec:tf}, where we discuss and generalize the formalism (based on \textit{tight frames}) of Ref.~\cite{zhao2021fermionic}.

\section{Details of applications}\label{sec:appdetails}
We now give a series of examples of shadows in an NDCSE basis (some of which briefly appeared in the main text). We begin by comparing the protocols obtained by taking various representations of the unitary, Clifford,  orthogonal and symplectic groups to the existing classical shadow protocols in the literature that employ those representations~\cite{huang2020predicting,zhao2021fermionic,low2022classical,wan2023matchgate,west2025real,liang2024real,west2024random}. In all cases   we find that the derived shadows protocol is in agreement with the previous protocols -- i.e., that they involve (if implicitly) measuring in a centralizing basis. Of particular interest is the case of global orthogonal shadows, where we are naturally led to discover a genuinely new protocol. Following this, we move on to the novel settings of the spin-$S$ and tensor representations of $\SU(2)$, as well as, more exotically, a CS protocol in which one samples unitaries from the exceptional Lie group $G_2$.

\subsection{Global unitaries and global Cliffords}\label{sec:gu}
We begin with the important case of $G$ being the full unitary group $\U(d)$ on $n$ qubits,
where $d=2^n$.
Our first step will always be to ask how $G$ acts on $\mch$; the answer in this instance is \textit{irreducibly},  from which we conclude that \textit{all} operators $O\in\eh$ are $\U(d)$-diagonal, in agreement with the known fact that global unitaries yield a tomographically complete protocol~\cite{huang2020predicting}. Next, we need to know how $\mcl^\DC=\mcl$ decomposes into $\U(d)$-irreps. Here  we have a single one-dimensional (trivial) irrep $\mcl^0$ (spanned by the identity) and a second, $(d^2-1)$-dimensional irrep. This second irrep is the \textit{adjoint representation} $\mcl^{\rm ad}$ of $\mfsu(d)$, although for our purposes this is not particularly important. In any event, we have
\begin{equation}~\label{eq:gud}
\mcl=\mcl^\DC \cong \mcl^0 \oplus \mcl^{\rm ad}.
\end{equation} 
Next we pick a Cartan subalgebra of $\mfu(d)$. First, we recall that the Lie algebra of the unitary group is given by (with $M_{d,d}(\mbc)$ the set of all $d\times d$ complex matrices)
\begin{equation}
    \mfu(d) = \big\{ X\in M_{d,d}(\mbc)\hspace{0.7mm} :\hspace{0.7mm} X^\dagger = -X\big\}.
\end{equation}
Let us take (with respect to, say, the computational basis) the Cartan subalgebra $\mfh$ to be given by the diagonal elements of $\mfu(d)$; it follows immediately that the computational basis\footnote{Which of course consists of simultaneous eigenvectors of matrices that are diagonal in the computational basis.} is, for this choice of $\mfh$, a (non-degenerate) weight basis, and therefore an NDCSE with respect to the maximal torus $T=\exp(i\mfh)\subset\mathsf{U}(d)$. Evidently, $T$ consists of the diagonal unitaries, and plays the role of our abelian subgroup $H$. 
So, for this choice of $G$ and $\mfh$, shadows  reduces to the original proposal of Huang et al~\cite{huang2020predicting}: evolve with global unitaries  (alternately, as we momentarily discuss, global Cliffords, which produce the equivalent effect to global unitaries by virtue of their being a unitary 3-design) and measure in the computational basis. Let us explicitly verify that our expression for the measurement channel (Theorem~\ref{thm:mc}) agrees with the result derived in Ref~\cite{huang2020predicting}: 
\begin{equation}\label{eq:rh}
    \mcm(\rho) = \frac{\rho+\Tr[\rho]\id}{d+1}.
\end{equation}
To this end, we need to find the weight-zero subspaces of the representations  of Eq.~\eqref{eq:gud}. The case of $\mcl^0$ is trivial; we have (in the notation of Theorem~\ref{thm:mc}) $a_0=1$. The $T$-invariant subspace of $\mcl^{\rm ad}$, on the other hand, is spanned by the (non-identity) operators that commute with $T$ (i.e., all of the diagonal traceless operators); it is therefore  a $d-1$ dimensional subspace spanned by all tensor products of $I$ and $Z$ (except for $I^{\otimes n}$), and we have $a_{\rm ad}=(d-1)/(d^2-1)=1/(d+1)$. We can now readily verify that, in agreement with Eq.~\eqref{eq:rh},
\begin{align}
\mcm(\rho)&=
\left(a_{0}\mcp_0 + a_{\rm ad}\mcp_{\rm ad}\right)\rho\\
&=\left(\mcp_0 + \frac{\mcp_{\rm ad}}{d+1}\right)\rho\label{eq:gcc}\\
&=\frac{\Tr[\rho]\id}{d} + \frac{\rho-\Tr[\rho]\id/d}{d+1}\\
&= \frac{\rho+\Tr[\rho]\id}{d+1}.
\end{align}
Although the two protocols are  identical, the variance bounds derived through Theorem~\ref{thm:var} are, in this instance, weaker than obtained in Ref.~\cite{huang2020predicting}. In particular, by Theorem~\ref{thm:var}, we get $\mathrm{Var\ }\hat{o}\leq (d+1)\|O\|_2^2$ for any $O\in\mcl$, to be compared to $\mathrm{Var\ }\hat{o}\lesssim \|O\|_2^2$ from Ref.~\cite{huang2020predicting}. This discrepancy may be understood as a consequence of Theorem~\ref{thm:var} applying to \textit{any} group $G$; when specializing to a particular group improvements may be possible (indeed, we will see some examples of this throughout this section). \\

By virtue of their forming a unitary 3-design~\cite{mele2024introduction}, much of the proceeding analysis goes through identically in the case of taking 
$G$ to be the Clifford group on $n$ qubits, i.e., the normaliser of the set $P_n$ of $n$-qubit Pauli strings, 
\begin{equation*}
\mathrm{Cl}(n)=\{U\in\U(d) : UP_nU^\dagger = P_n\},    
\end{equation*}
where $d=2^n$.   Nonetheless, let us see how things plays out within our  framework. First of all, the Clifford's status as a unitary 3-design ensures that their action on $\mch$ is irreducible and that the  decomposition 
\begin{align}
\mcl^\DC = \LC &\cong \mcl^0 \oplus \mcl^{\rm ad}\label{eq:gcld}
\end{align}
of $\mcl^\DC$ into $\mathrm{Cl}(n)$-irreps is the same\footnote{In fact, this requires only the weaker notion of a unitary 2-design.} as that of the full unitary group (Eq.~\eqref{eq:gud}). As we saw in the main text, the computational basis is centralizing, and as in the case of the full unitary group, the variance bounds afforded by directly applying the generic Theorem~\ref{thm:var} are loose by a factor of $d$. As an example of the tightening of the bounds possible when more structure is imposed, consider the case where we are interested in a Pauli string $P$. Here we can employ Theorem~\ref{thm:varf}, as condition (ii) applies (because the Cliffords normalize the Paulis), and  we  conclude  
\begin{equation}
\mathrm{Var\ }\hat{p}\leq a_{\rm ad}^{-1}\|P\|_\infty^2=d+1,
\end{equation} 
recovering in this instance the scaling of Ref.~\cite{huang2020predicting}. 

\subsection{Local Unitaries}
Sticking with the case of an $n$-qubit system, we next consider $G=\U(2)^{\times n}$, i.e., the unknown state is evolved with (independent) random single qubit unitaries\footnote{To keep the amount of repetition tolerable we will not explicitly go through the very similar case of local Cliffords.}.  First,
as for the global unitaries,  we note that $\U(2)^{\times n}$ acts irreducibly on $\mch$, this time as the weight-module with highest weight $(1/2, 1/2, \ldots, 1/2)$ (weights are reviewed in Appendix~\ref{sec:weights}). So we again have $\mcl^\DC=\mcl$ and are in the tomographically complete case. Differences appear, however, when we turn to the structure of $\mcl$ as a $\U(2)^{\times n}$-module. Under the action of $\U(2)^{\times n}$, $\mcl$ decomposes into $d$ irreps, conveniently indexed by choices of subsets of qubits~\cite{zhao2024group}. That is, we have
\begin{equation}
\mcl^\DC \cong \bigoplus_{S\subseteq [n]} \mcl_S\,,
\end{equation}
where 
\begin{equation}
\mcl_S \coloneqq {\rm span}_{\hspace{0.5mm}\mbc}\left\{P\in P_n \ \vert\ {\rm supp}(P)=S\right\}
\end{equation}
is the $3^{|S|}$-dimensional space of operators that act non-trivially exactly on the qubits belonging to $S\subseteq [n]$. 
In the depiction of Fig.~\ref{fig:1}(b), therefore, global and local unitaries take the same form on the left hand side (one single block) but differ on the right hand side, decomposing respectively into two and $d$ blocks. \\

The Lie algebra of the local unitary ensemble is the ($4n$-dimensional) $\mfg\cong\mfu(2)^{\oplus n}$; for the Cartan subalgebra let us take the subspace spanned by the generators of $Z$ rotations on each qubit. The computational basis is then again a degeneracy-free weight basis, and therefore centralizing. This time, however, we only have a one-dimensional invariant-subspace of a given irrep $(\mcl^\DC)_S$ that commutes with these elements, namely that spanned by the Pauli string which acts as $Z$ on each of the $|S|$ qubits associated to the irrep. As the dimension of the irrep is $3^{|S|}$ we immediately obtain $a_S=3^{-|S|}$ and a measurement channel given by
\begin{equation}
\mcm = \sum_{S\subset [n]}3^{-|S|} \mcp_S,
\end{equation}
with $\mcp_S$ the projector onto the space $\mcl_S$ of operators that act non-trivially exactly on the qubits in $S$. By means of comparison with the result obtained for the measurement channel corresponding to local unitaries obtained in Ref.~\cite{huang2020predicting} we obtain the identity
\begin{equation}
    \mcd_{2/3}^{\otimes n} = \sum_{S\subset [n]}3^{-|S|} \mcp_S,
\end{equation}
where 
\begin{equation}\label{eq:depol}
\mcd_{p}(A) = \frac{p\Tr[A]\id}{2}+(1-p)A
\end{equation}
is a single-qubit depolarizing channel. So a Pauli string of weight $k$, for example, is by the measurement channel suppressed exponentially in its weight, by a factor of $3^{-k}$. This can be understood intuitively by means of the equivalence between local unitary and local Clifford shadows~\cite{huang2020predicting}: there are $3^k$ Pauli strings in the orbit of a  weight-$k$ Pauli string under the local Cliffords, with the probability that it is diagonalised by a random element of that group being $3^{-k}$~\cite{king2024triply}. Finally, from Theorem~\ref{thm:var}
we obtain, for an operator $O\in\mcl_S$ (say with $|S|=k$) variance bounds of $\Var\ \hat{o} \leq 3^{k}\|O\|_2^2$ and $\Var\ \hat{o} \leq 3^{2k}\|O\|_\infty^2$. Analogously to the previous examples, these bounds are quite loose  compared to those  derivable by performing a bespoke calculation tailored to the local unitary group explicitly, namely $\Var\ \hat{o}\leq 4^k \|O\|_\infty^2$~\cite{huang2020predicting}. 

\subsection{Pauli strings}\label{sec:paulis}
As our next example we take $G$ to be the group $P_n$ of $n$-qubit Pauli strings (including phase factors of $\pm 1,\ \pm i$), with measurements in the computational basis (which we see to be an NDCSE with respect to, say,  $H=\{I,Z\}\tn$). As we shall see, the induced CS protocol is in a sense trivial, possessing a visible space of mutually commuting operators. Nonetheless, from our perspective it is somewhat notable, constituting an extreme example of the framework.
First of all, we note that the action of $G$ on  $\mch$ is irreducible, corresponding to the well-known fact that the Pauli strings form a 1-design~\cite{mele2024introduction}. Their action on $\mcl=\mcl^\DC$, on the other hand, is maximally reducible: every Pauli string spans a one-dimensional representation of the adjoint action of $G$. Indeed, for all Pauli strings $P$ and $Q$, we have $P^\dagger QP\propto Q$. The action of $H$ on a given $P$ is trivial iff $P\in H$ (up to phases) so that the coefficients of Eq.~\eqref{eq:gb} are either zero or one, and  $\mcm  = \mcp_{{\rm span}\hspace{0.5mm} \{I,Z\}^{\otimes n}}$ (i.e., a fully dephasing channel).
In fact, we see that the additional assumptions of Theorem~\ref{thm:varf} apply, and so the variance of the estimator of any Pauli string are upper bounded by one. Of course, and as alluded to above, we see that this protocol is nonetheless  useless from a practical point of view, as its visible space is simply given by ${\rm span}\hspace{0.5mm} \{I,Z\}^{\otimes n}$; indeed if the operators in which one is interested are contained within  this space, one could simply measure the unknown state in the computational basis.

\subsection{Free fermions}\label{sec:ff}
We next consider the case where $G$ is the group of \textit{fermionic Gaussian unitaries} (FGU), which has received attention in the CS literature as an appropriate ensemble for measuring low degree fermionic observables~\cite{zhao2021fermionic,low2022classical,wan2023matchgate}. FGU contains unitaries of the form~\cite{zhao2021fermionic}
\begin{equation}
U_{\rm FGU} = \exp\left(-\frac{1}{4}\sum_{\mu,\nu=1}^{2n} A_{\mu,\nu}\gamma_\mu\gamma_\nu\right)\label{eq:fgu}\,,
\end{equation}
where $A=-A^\mathsf{T}$ is an anti-symmetric real matrix, and $\{\gamma_\mu\}_{1\leq\mu\leq 2n}$ is a set of \textit{Majorana fermions} satisfying $\gamma_\mu=\gamma_\mu^\dagger$ and the canonical anti-commutation relations $\{\gamma_\mu,\gamma_\nu\}=2\delta_{\mu,\nu}$. For example, one can take~\cite{diaz2023showcasing}
\begin{align}
\gamma_1 = XI\ldots I,\ \gamma_3 = ZXI\ldots I&,\ \gamma_{2n-1} = Z\ldots ZX \nonumber\\
\gamma_2 = YI\ldots I,\ \gamma_4 = ZYI\ldots &I,\ \gamma_{2n} = Z\ldots ZY\label{eq:majoranas} \,.
\end{align}
We will sometimes refer to the set of unitaries of the form Eq.~\eqref{eq:fgu} as \textit{matchgate circuits}. 
The FGU ensemble gives us our first example of a reducible action on $\mch$; specifically we have
\begin{equation}
\mch = \mch^+ \oplus \mch^-,
\end{equation}
where $\mch^{\pm}$ are respectively given by the span of the computational basis vectors with even or odd Hamming weight. Indeed, we can see that these subspaces are invariant under the dynamics of Eq.~\eqref{eq:fgu}; expanding the exponential gives a sum of terms which are a product of an even number of Majoranas, which from Eq.~\eqref{eq:majoranas} leads to a sum of Pauli strings which act as either $X$ or $Y$ on an even number of qubits (and $I$ or $Z$ on the remainder),  therefore preserving the parity of the Hamming weight of any computational basis state. $\mch^{\pm}$ are moreover irreducible, corresponding respectively to the $\mathfrak{so}(2n)$ modules\footnote{Strictly speaking, the fermionic Gaussian unitaries furnish the spinor representation of
$\mathrm{Spin}(2n)$, and their adjoint action on the Majorana operators $\{\gamma_\mu\}_{\mu=1}^{2n}$
factors through the standard $\mathrm{SO}(2n)$ action via the double cover $\mathrm{Spin}(2n)\to\mathrm{SO}(2n)$.} with highest weights $(1/2,1/2,\ldots,\pm1/2)$.
Let us take the Cartan subalgebra to be $\mfh=\spn_{\hspace{0.15mm}\mbr} \{\g_{2\mu-1}\g_{2\mu}\}_{1\leq\mu\leq n}$, i.e., the generators diagonal in the computational basis, $\{Z_i\}_{i=1}^n$.  
It follows from this choice and the description of the irreps of $\mch$ that the computational basis constitutes an NDCSE for $\mch$ {with respect to the corresponding maximal torus $H$}, and is therefore once again centralizing. \\

\noindent
The FGU measurement channel has previously been analytically diagonalised~\cite{zhao2021fermionic}; for a product
\begin{equation}
\Gamma_{\boldsymbol{\mu}} = (-i)^k \gamma_{\mu_1}\gamma_{\mu_2}\cdots\gamma_{\mu_{2k}}    
\end{equation}
of an even  number $2k$ of Majoranas one has
\begin{equation}\label{eq:fguc}
    \mcm(\Gamma_{\boldsymbol{\mu}}) = a_{\boldsymbol{\mu}}^H \Gamma_{\boldsymbol{\mu}}
\end{equation}
with
\begin{equation}\label{eq:fa}
    a_{\boldsymbol{\mu}}^H =\binom{n}{k} \bigg/ \binom{2n}{2k},
\end{equation}
while products of odd degree are annihilated. As polynomials in the Majoranas span the entire operator space $\mcl$~\cite{diaz2023showcasing},  this completely characterizes the action of the channel. The fact that the odd degree products lie outside the visible space corresponds to physical fermionic operators being of even degree due to parity superselection~\cite{StreaterWightman2001};  the FGU protocol is therefore tomographically complete with respect to the physically realisable operators. This is also readily interpreted within viewpoint of Theorem~\ref{thm:mc}: we note that, given a product of an odd number of Majoranas, there is necessarily an odd number of qubits which are acted upon by either an $X$ or a $Y$ gate, therefore flipping the parity of the Hamming weight of any computational basis state. Such a product is therefore not in ${\rm End}\,\,\mch^+$ or ${\rm End}\,\,\mch^-$, mapping as it does $\mch^\pm\mapsto\mch^\mp$. \\

Turning to the case of $\Gamma_{\boldsymbol{\mu}}$ a product of an even number $2k$ of Majoranas, we wish to verify that the action Eq.~\eqref{eq:fguc} of the FGU channel is consistent with our expression Eq.~\eqref{eq:gb}. As in the previous cases, for this we need to understand the decomposition of $\mcl^\DC$ into irreducible $G$ modules. In general, it is known~\cite{diaz2023showcasing} that we have
\begin{equation}
    \mcl = \bigoplus_{\kappa=0}^{2n} \mcl_\kappa\,,
\end{equation}
where $\mcl_\kappa = \spn\{ \g^\a \}_{\a \subseteq \binom{[2n]}{\kappa}}$ is the $\binom{2n}{\kappa}$-dimensional subspace spanned by the set of homogeneous polynomials of degree $\kappa$ in the Majoranas. These subspaces are irreducible with the exception of $\mcl_n$, which decomposes into irreps as $\mcl_n=\mcl_n^+\oplus\mcl_n^-$ (both of dimension $\frac{1}{2}\binom{2n}{n}$), corresponding to the $\pm 1$ eigenspaces of the parity operator ($Z^{\otimes n}$ in the representation specified by our choice Eq.~\eqref{eq:majoranas} of Majoranas). 
\begin{figure}
    \includegraphics[width=0.45\linewidth]{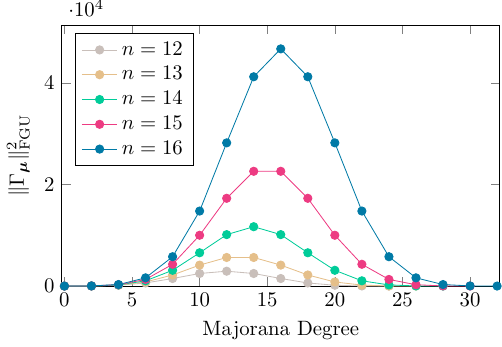}
    \caption{The (squared) shadow norm for Majorana monomials of fixed (even) degree for various values of $n$, when sampling uniformly from the distribution of fermionic Gaussian unitaries. As can be seen by applying Stirling's inequality to Eqs.~\eqref{eq:fa} and~\eqref{eq:fgusn}, monomials of degree  $\sim n$ will possess an intractably large shadow norm of order $\mco(2^n)$. Monomials of degree of order $\mco(1)$ (mod $2n$), on the other hand, can be efficiently learnt. }
    \label{fig:maja}
\end{figure}
From the previous discussion we then have the decomposition
\begin{equation}\label{eq:ffldirreps}
    \mcl^\DC = \begin{cases}
        \bigoplus_{\kappa=0}^{n} \mcl_{2\kappa}, \quad &n {\rm\ odd}\\
        \left(\bigoplus_{\kappa=0,\kappa\neq n/2}^{n} \mcl_{2\kappa}\right)\oplus \mcl_n^+\oplus\mcl_n^-, \quad &n {\rm\ even}
    \end{cases} 
\end{equation}
of $\mcl^\DC$ into $G$-irreps. Each irreducible subspace $\mcl_{2\kappa}$ (with $2\kappa\neq n$) contains an $\binom{n}{\kappa}$-dimensional weight-zero subspace consisting of the (simultaneously diagonalisable) products of $\kappa$ distinct occupation number operators, leading to (in the notation of Eq.~\eqref{eq:gb})
\begin{equation}
    a_{\mcl_{2\kappa}}^H = \frac{d^H_{\mcl_{2\kappa}}}{d_{\mcl_{2\kappa}}} =\binom{n}{\kappa} \bigg/ \binom{2n}{2\kappa},
\end{equation}
exactly as in Eq.~\eqref{eq:fa}. Similarly, in the case of $\mcl_n^\pm$ we have a weight-zero subspace consisting of the $\frac{1}{2}\binom{n}{n/2}$-dimensional subspace given by operators of the form $P\pm Z^{\otimes n}P$, where $P$ is a product of $n/2$ distinct occupation number operators, yielding
\begin{equation*}
    a_{\mcl_{n}^\pm}^H = \frac{d^H_{\mcl_{n}^\pm}}{d_{\mcl_{n}^\pm}} =\frac{1/2}{1/2}\binom{n}{n/2} \bigg / \binom{2n}{n}=\binom{n}{n/2} \bigg / \binom{2n}{n}.
\end{equation*}
In the case of a Majorana monomial $\Gamma_{\boldsymbol{\mu}}$ of (even) degree $n$ we therefore also recover the expected behaviour, 
\begin{align*}
\mcm(\Gamma_{\boldsymbol{\mu}})&=\mcm\left(\frac{1}{2}(\Gamma_{\boldsymbol{\mu}}+Z^{\otimes n}\Gamma_{\boldsymbol{\mu}})+\frac{1}{2}(\Gamma_{\boldsymbol{\mu}}-Z^{\otimes n}\Gamma_{\boldsymbol{\mu}})\right)\\
&=\frac{a_{\mcl_{n}^+}^H}{2}(\Gamma_{\boldsymbol{\mu}}+Z^{\otimes n}\Gamma_{\boldsymbol{\mu}})+\frac{a_{\mcl_{n}^-}^H}{2}(\Gamma_{\boldsymbol{\mu}}-Z^{\otimes n}\Gamma_{\boldsymbol{\mu}})\\
&=\binom{n}{n/2} \bigg / \binom{2n}{n} \Gamma_{\boldsymbol{\mu}}\,.
\end{align*}
So our approach to calculating the measurement channel agrees with the FGU channel as previously derived on all Majorana monomials, and therefore, by linearity, on all operators.\\

\noindent
Finally, it has been shown~\cite{zhao2021fermionic} that under the FGU channel the shadow norm is (see Fig.~\ref{fig:maja})
\begin{equation}\label{eq:fgusn}
    \|\Gamma_{\boldsymbol{\mu}}\|^2_{\rm FGU} =(a_{\boldsymbol{\mu}}^H) ^{-1}\,,
\end{equation}
with $a_{\boldsymbol{\mu}}^H$ as in Eq.~\eqref{eq:fa}. This scaling is not captured by Theorem~\ref{thm:var}, which gives a bound of $(a_{\boldsymbol{\mu}}^H) ^{-2}$ (note that $\|\Gamma_{\boldsymbol{\mu}}\|^2_{\infty}=1 $). Happily, however, the action of FGUs on Majoranas is equivalent to the action of \textit{Clifford} FGUs~\cite{wan2023matchgate}, which satisfies (for $\kappa\neq n/2$) condition (\textit{ii}) of Theorem~\ref{thm:varf} (as the Majoranas are Pauli strings that span their respective modules) allowing us to recover in this instance the tighter bound of $(a_{\boldsymbol{\mu}}^H) ^{-1}$. This is discussed in Appendix~\ref{sec:tf}.  In the case of even $n$ and a product $\Gamma_{\boldsymbol{\mu}}$ of Majoranas in the $\kappa=n/2$ component we recover (up to a constant factor) the same scaling; defining $\Gamma_{\boldsymbol{\mu}}^\pm = \Gamma_{\boldsymbol{\mu}}\pm Z^{\otimes n}\Gamma_{\boldsymbol{\mu}}$ we have:
\begin{align*}
\|\Gamma_{\boldsymbol{\mu}}\|_{{\rm shad}}^2&=\left\|\frac{\Gamma_{\boldsymbol{\mu}}^++\Gamma_{\boldsymbol{\mu}}^-}{2}\right\|_{{\rm shad}}^2\\
&\leq\frac{1}{4}\left(\|\Gamma_{\boldsymbol{\mu}}^+\|_{{\rm shad}}+\|\Gamma_{\boldsymbol{\mu}}^-\|_{{\rm shad}}\right)^2\\
&=\frac{1}{4}\binom{2n}{n} \Big / \binom{n}{n/2}\left(\|\Gamma_{\boldsymbol{\mu}}^+\|_{\infty}+\|\Gamma_{\boldsymbol{\mu}}^-\|_{\infty}\right)^2\\
&\leq 4\binom{2n}{n}\Big / \binom{n}{n/2}\,,
\end{align*}
where we have applied the triangle inequality followed by Theorem~\ref{thm:varf} to the $G$-irreps $\mcl_n^\pm$.  Finally, by the submultiplicativity of the spectral norm and again by the triangle inequality, we have used $\|\Gamma_{\boldsymbol{\mu}}^\pm\|_{\infty}\leq\|\Gamma_{\boldsymbol{\mu}}\|_{\infty}+\|Z^{\otimes n}\|_{\infty}\|\Gamma_{\boldsymbol{\mu}}\|_{\infty}= 2$. So in this case our bound on the variance is only weaker by a factor of four compared to the bespoke calculation of Ref.~\cite{zhao2021fermionic}.

In this subsection we restrict the $\SO(2n)$ free-fermion discussion of Sec.~\ref{sec:ff} to the particle-number–preserving subgroup $U(n)$ (i.e., Gaussian unitaries generated by quadratic Hamiltonians with only $a^\dagger a$ terms).

\subsection{Symplectic shadows}\label{sec:symp}

\noindent 
For our next example we take the ensemble of unitaries to be the uniform measure over the unitary symplectic group 
\begin{equation}\label{eq:symp_def}
\SP(d/2) = \{ U\in\U(d) : U^{\mathsf{T}} \Omega U = \Omega\},
\end{equation}
where $\Omega$ is a skew-symmetric bilinear form that we take to be
\begin{equation} \label{eq:omega}
	\Omega=iY\otimes I^{\otimes(n-1)}=\begin{pmatrix} 0& \id_{d/2} \\ - \id_{d/2} & 0\end{pmatrix}\,,
\end{equation}
with $\id_{d/2}$ the $d/2 \times d/2$ identity matrix. The symplectic shadows protocol (with measurement in the computational basis\footnote{By the transitivity~\cite{albertini2001notions,oszmaniec2017universal} of the action of $\SP(d/2)$ on $\mch$ and the invariance of the Haar measure, the choice of basis is immaterial.})  has been investigated in Ref.~\cite{west2024random} and found to be equal to that obtained from global Cliffords; as in the previous sections, our goal is to derive this from the NDCSE perspective. First we note that by the transitivity~\cite{albertini2001notions,oszmaniec2017universal} of the action of $\SP(d/2)$ on $\mch$, $\mch$ is certainly irreducible as an $\SP(d/2)$-module, so that we are back in the tomographically complete case $\mcl^\DC=\mcl$. As we shall presently see, the decomposition of $\mcl$ into $\SP(d/2)$-irreps takes the form 
\begin{equation}\label{eq:symp_decompm}
\mcl^\DC=\mcl = {\rm Sym}^2\hspace{0.5mm}\mch\hspace{0.7mm} \oplus \hspace{0.7mm}W\hspace{0.7mm}\oplus\hspace{0.7mm} \mbc.
\end{equation}
where $W\oplus\mbc=\Lambda^2\hspace{0.5mm}\mch$, and  ${\rm Sym}^2\hspace{0.5mm}\mch$ and $\Lambda^2\hspace{0.5mm}\mch$ are simply\footnote{Strictly, $\mcl\cong\mch\otimes\mch^*$, whereas ${\rm Sym}^2\hspace{0.5mm}\mch, \Lambda^2\hspace{0.5mm}\mch\subset \mch\otimes\mch$. However, in this case $\mch\cong\mch^*$ (via the identification offered by the bilinear form $\Omega$, namely $v\in\mch\mapsto \Omega(v,-)\in\mch^*$); when we write (e.g.) ${\rm Sym}^2\hspace{0.5mm}\mch\subset\mcl$, we refer technically to its image under this isomorphism.  } the familiar (anti-)symmetric second order tensor products of multilinear algebra,
\begin{equation}
{\rm Sym}^2\hspace{0.5mm}\mch = \frac{ \mch\otimes\mch}{\langle v\otimes w-w\otimes v : v,w\in \mch\rangle}
\end{equation}
and 
\begin{equation}
\Lambda^2\hspace{0.5mm}\mch = \frac{ \mch\otimes\mch}{\langle v\otimes w+w\otimes v : v,w\in \mch\rangle}.
\end{equation}
At first glance, comparing Eqs.~\eqref{eq:gcld} and~\eqref{eq:symp_decompm} appears to suggest different shadow protocols, in contradiction with Ref.~\cite{west2024random}. The resolution comes when we calculate the coefficients $a_\lm^H$ for the irreps of Eq.~\eqref{eq:symp_decompm}; leaving the details for the end of the section, we find\footnote{Here and henceforth $d^0_\lm$ will denote the weight-zero subspace of the irrep $\lm$, i.e. the invariant space of the corresponding maximal torus, our $H$ when measuring in a weight basis.}
\begin{equation}\label{eq:asym}
a_{{\rm Sym}^2\hspace{0.15mm}\mch}=\frac{d_{{\rm Sym}^2\hspace{0.15mm}\mch}^0}{d_{{\rm Sym}^2\hspace{0.15mm}\mch}}=\frac{d/2}{d(d+1)/2}=\frac{1}{d+1}
\end{equation}
and
\begin{equation}\label{eq:aw}
    a_W = \frac{d_{W}^0}{d_W}=\frac{d/2-1}{d(d-1)/2-1}= \frac{1}{d+1},
\end{equation}
resolving the confusion: as in the global Clifford case, the symplectic shadows measurement channel indeed acts trivially on a one-dimensional subspace of $\mcl$, and compresses the remainder by a factor of $d+1$ (c.f. Eq.~\eqref{eq:gcc}). From a representation-theoretic perspective this might be attributed to the ``accidental'' equality of the terms in Eqs.~\eqref{eq:asym} and~\eqref{eq:aw}; from the perspective of Ref.~\cite{west2024random} it is guaranteed by the fact that the state ensembles generated by $\U(d)$ and $\SP(d/2)$ are \textit{indistinguishable}.\\

Let us now turn to the  technical details of the calculation.
We are considering the CS protocol obtained by choosing the ensemble of unitaries to be the (unitary) symplectic group $\SP$  equipped with its uniform measure, and the measurement basis to be the computational basis. As discussed in Appendix~\ref{sec:wc} (Eqs.~\eqref{eq:k2},~\eqref{eq:k3}), the protocol is completely determined by the $k=2,3$ state-moments 
\begin{equation}
\mce_{\SP(d/2)}^{(k)}(\Pi_w\tk)=\int_{U\in {\SP(d/2)}}   d\mu_{\SP(d/2)}(U)\   U^{\dagger\otimes k} \Pi_w^{\otimes k} U^{\otimes k};
\end{equation}
as these have recently been shown to be equal to the corresponding integrals over $\U(d)$~\cite{west2024random}, we know \textit{a priori} that symplectic shadows will simply reproduce the usual unitary shadows\footnote{Note that by the transitivity of the action of $\SP(d/2)$ on $\mch$~\cite{albertini2001notions,oszmaniec2017universal} and the invariance of the Haar measure, the choice of measurement basis cannot affect the protocol.}. The goal of the remainder of this section will be to make use of the representation theory of the symplectic Lie algebra to derive this within our framework. Many more details of this representation theory may be found in (e.g.) Ref.~\cite{fulton1991representation}. \\

\noindent
Now, differentiating Eq.~\eqref{eq:symp_def} yields the condition for belonging to the symplectic Lie algebra\footnote{Well its complexification, anyway},
\begin{equation}
   \mfsp(d/2) = \{ X\in {\rm End\hspace{0.5mm} \mch} : X^{\mathsf{T}} \Omega + \Omega X = 0\}.
\end{equation}
One readily checks that under the choice Eq.~\eqref{eq:omega} of $\Omega$ this translates to the condition 
\begin{equation}
    X=\begin{pmatrix}
        A&B\\C&-A^{\mathsf{T}}
    \end{pmatrix}
\end{equation}
for $A,\ B=B^{\mathsf{T}}$ and $C=C^{\mathsf{T}}$ some $d/2\times d/2$ matrices. 
In this basis a natural choice of Cartan subalgebra is
\begin{equation}\label{eq:symp_csa}
    \mfh = {\rm span}_{\hspace{0.5mm}\mbc} \{\hspace{0.5mm} H_i = E_{i, i}-E_{d/2+i, d/2+i}   \hspace{0.5mm}\} _{1\leq i\leq d/2}.
\end{equation}
We also introduce the basis $\{L_j\}_j$ of $\mfh^*$ dual to the above spanning set of $\mfh$, i.e., satisfying $L_j(H_i)=\delta_{i,j}$.\\ 

Equipped with these definitions, we begin by considering the standard representation $\mch$ of $\mfsp(d/2)$ as the endomorphism algebra acting on $\mch$. One of the nice things about our choice Eq.~\eqref{eq:symp_csa} of Cartan subalgebra is that a basis of degeneracy-free weight vectors is clear; indeed we can simply take the computational basis (i.e., that basis with respect to which $\Omega$ takes the form Eq.~\eqref{eq:omega} that it does). Instead of the usual binary notation, it will be convenient to label the elements of this basis as $e_1,\hspace{0.5mm} e_2,\hspace{0.5mm} \ldots,\ e_d$. Further, from Eq.~\eqref{eq:symp_csa} we can immediately read off the weights; the weight $\lm_i$ of $e_i$ is simply given by
\begin{equation}\label{eq:symp_weights}
    \lm_i=\begin{cases}
        L_i & i\leq d/2\\
        -L_{i-d/2} & i>d/2
    \end{cases}\ \ ,
\end{equation}
so that the total set of weights for the representation $\mch$ is given by $\{\hspace{0.5mm}\pm L_i\hspace{0.5mm}\}_{1\leq i\leq d/2}$. A convenient graphical representation of the weights of the representations encountered in this and the next appendix in the case $d=4$ may be found in Fig.~\ref{fig:weights}.
Now, the fact that every weight appears with its negative implies that the standard representation of $\mfsp(d/2)$ is \textit{self-dual} (because the weights of the dual representation $V^*$ of a representation $V$ are the negative of the weights of $V$~\cite{fulton1991representation}). This is immediately useful: as the action of $\SP$ on $\mch$ is irreducible (moreover, transitive~\cite{albertini2001notions,oszmaniec2017universal}) what we care about for the purposes of $G$-shadows is the representation $\mcl^\DC=\mcl=\endh$, which is therefore related to $\mch$ by
\begin{equation}
\mcl=\mch^*\otimes\mch=\mch\otimes\mch = {\rm Sym}^2\hspace{0.5mm}\mch\hspace{0.7mm} \oplus\hspace{0.5mm}\Lambda^2\hspace{0.5mm}\mch \,,
\end{equation}
where the final equality is simply the usual decomposition from multilinear algebra\footnote{Note that the symplectic form (an element of $(\mch^*)^{\otimes 2}$) induces an isomorphism $\mch\cong\mch^*$; i.e., denoting  in relativistic notation elements of $\mch\ (\mch^*)$ as objects with upper (lower) indices, one can take $v_a=\Omega_{ab}v^b$.}. It turns out that the representation $\Lambda^2\hspace{0.5mm}\mch $ is \textit{not} irreducible, but rather contains a copy of the trivial representation $\mbc$ (corresponding to the trivial action of $\SP$ on $\Omega\in\Lambda^2\hspace{0.5mm}\mch^*\cong\Lambda^2\hspace{0.5mm}\mch$~\cite{fulton1991representation}); denoting by $W$ its complement in $\Lambda^2\hspace{0.5mm}\mch $ the decomposition of $\mcl$ into $\mfsp(d/2)$ irreps is 
\begin{equation}\label{eq:symp_decomp}
\mcl = {\rm Sym}^2\hspace{0.5mm}\mch\hspace{0.7mm} \oplus \hspace{0.7mm}W\hspace{0.7mm}\oplus\hspace{0.7mm} \mbc.
\end{equation}
What we would like to do, then, is calculate the dimensions of ${\rm Sym}^2\hspace{0.5mm}\mch$ and $W$, as well as their weight-zero subspaces. Let's begin with ${\rm Sym}^2\hspace{0.5mm}\mch$. We have a basis of weight vectors given by $e_i\otimes e_j$, $1\leq i\leq j\leq d$, with corresponding weights $\pm L_i\pm L_j$ (where from Eq.~\eqref{eq:symp_weights} the signs are determined by whether or not the index is greater than $d/2$). With a zero weight therefore resulting from pairing each positive weight $L_i$ of $\mch$ with its negative, we obtain a $d/2$-dimensional weight-zero space (see Fig.~\ref{fig:weights}). Combined with ${\rm dim \hspace{0.5mm}}\left({\rm Sym}^2\hspace{0.5mm}\mch\right) = d(d+1)/2$ we conclude 
\begin{equation}
a_{{\rm Sym}^2\hspace{0.15mm}\mch}^{H}=\frac{d/2}{d(d+1)/2}=\frac{1}{d+1}.
\end{equation}
The case of $\Lambda^2\hspace{0.5mm}\mch$ is highly similar; we have a basis of weight vectors given by $e_i\otimes e_j$, $1\leq i< j\leq d$, with corresponding weights $\pm L_i\pm L_j$, and again a $d/2$-dimensional weight-zero space. One of these is of course the trivial representation $\mbc$; $W$ then possesses a $(d/2 -1)$-dimensional weight-zero space, while itself being of dimension ${\rm dim\hspace{0.5mm}}W=d(d-1)/2 -1$. This yields
\begin{equation}
    a_W^{H} = \frac{d/2-1}{d(d-1)/2-1}= \frac{1}{d+1}.
\end{equation}
Combined with (of course) $a_\mbc = 1/1=1$ we find that the action of the symplectic shadows  measurement channel on the non-trivial component of $\mcl$ is simply to divide by $d+1$, agreeing as expected with the case of unitary shadows.

\begin{figure}
\centering
\includegraphics[width=0.9\linewidth]{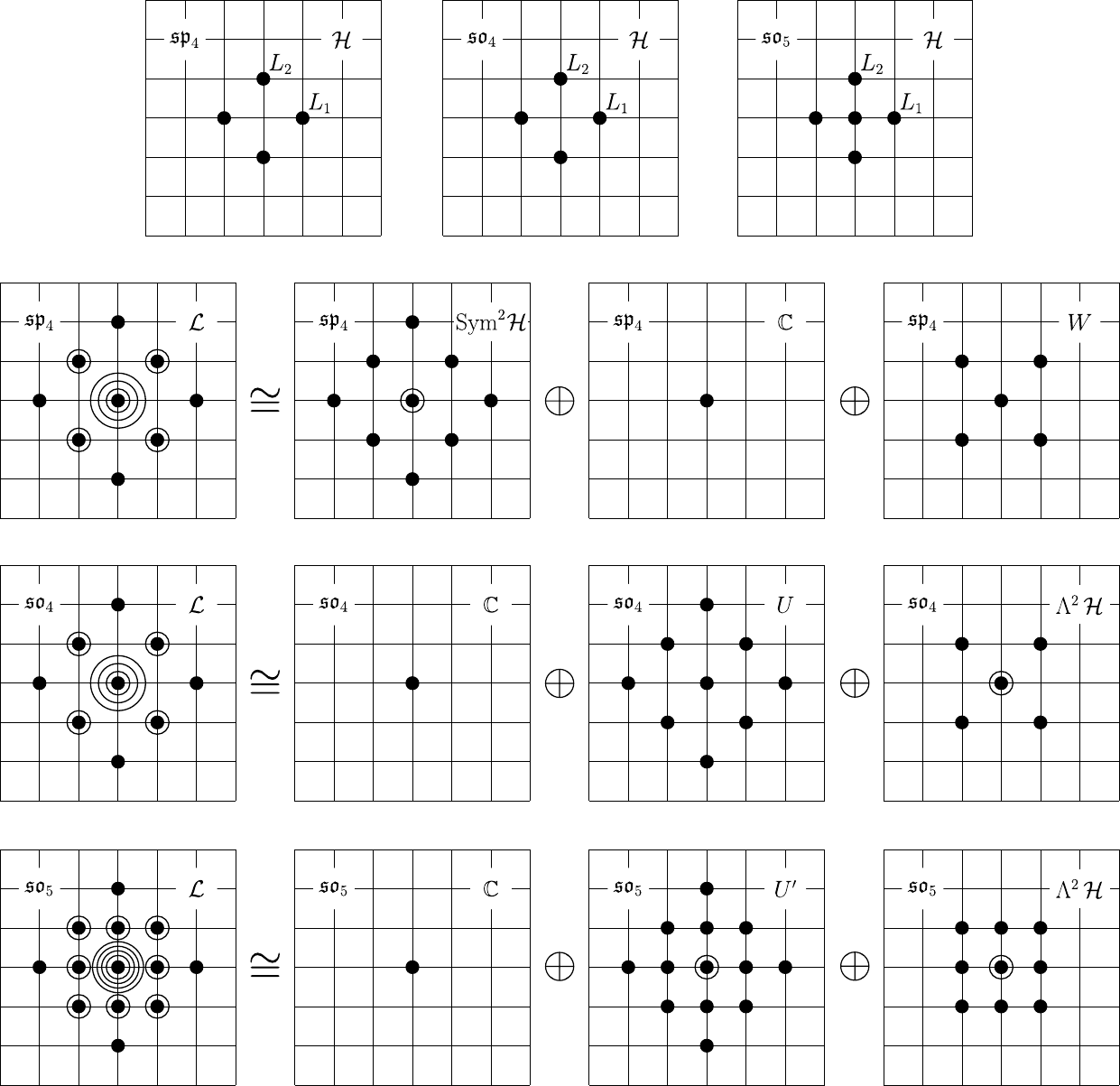}
\caption{In the low-dimensional cases of $d=4,5$ we can visualize the representations that appear in the study of symplectic and orthogonal shadows by means of \textit{weight-diagrams} (in general, for $\mfsp_d$ and $\mfso_d$ the diagrams will live in a ${\rm dim\ \mfh^*}=\floor{d/2}$ dimensional space, so that the extent to which they clarify the situation decreases sharply with $d$). The fact that the geometry (in particular, lengths and angles) implied by the diagrams to exist on $\mfh^*$ behaves sensibly follows from the general theory of representations of Lie algebras~\cite{fulton1991representation}. In the top row we plot the weights of the standard representations $\mch$ of $\mfsp_4$, $\mfso_4$ and $\mfso_5$, and in the following rows the decomposition into irreps of $\mcl^\DC=\mcl$. Degeneracies are denoted by circles, and the centres of the diagrams represent the weight-zero spaces. In the $\mfsp_4$ row we find $a_{{\rm Sym}^2\hspace{0.15mm}\mch}=2/10=1/5=a_W$ (with the equality as expected from Appendix~\ref{sec:symp}); for $\mfso_4$, $a_{\Lambda^2\hspace{0.15mm}\mch}=2/6\neq 1/9=a_U$, and for $\mfso_5$, $a_{\Lambda^2\hspace{0.15mm}\mch}=2/10\neq 2/14=a_{U'}$. One can easily check that these agree with our general formulae. }
    \label{fig:weights}
\end{figure}

\subsection{Orthogonal shadows}\label{sec:mortho}
We now come to the case of uniformly sampling from $G=\O(d)$. Orthogonal CS were recently explored in Ref.~\cite{west2025real}, where it was found that random global orthogonal unitaries coupled with measurement in the computational basis yields the measurement channel
\begin{equation}\label{eq:gomc}
  \mcm(A)  = \frac{\Tr[A]\mathds{1}+A+ A^\mathsf{T} }{d+2} = (\mcd_{d/(d+2)}\circ\mcp_{\rm sym})(A)\,,
\end{equation}
i.e., a depolarizing channel (see Eq.~\eqref{eq:depol}) on the symmetric component $\mcp_{\rm sym}(A) = (A+ A^\mathsf{T})/2$ of its input. In this case (and for the first time in the examples we have presented so far)  the computational basis is \emph{not} a weight basis for the ensemble considered in Ref.~\cite{west2025real}. To see this, recall that the orthogonal group is defined as the operators that preserve a symmetric bilinear form $Q$, 
\begin{equation}\label{eq:ortho_def}
   \O(d) = \{ O\in\U(d) : O^{\mathsf{T}} Q O = Q\}.
\end{equation}
Typically, of course, one simply takes $Q=\id$ (which indeed was done in Ref.~\cite{west2025real}) but let us leave it unspecified for now. At the Lie algebra level, the condition Eq.~\eqref{eq:ortho_def} becomes (via differentiating)
\begin{equation}\label{eq:oalg}
   \mfso(d) = \{ X\in {\rm End\hspace{0.5mm} \mch} : X^{\mathsf{T}} Q + Q X = 0\},
\end{equation}
so that, if $Q=\id$, the Lie algebra is represented by skew-symmetric matrices. This means, however, that there are no (non-zero) diagonal matrices in this instantiation of $\mfso(d)$, and that our previous strategy of taking the Cartan subalgebra to be the diagonal elements of $\mfg$ (immediately implying that the standard basis is a weight basis) fails. So, to employ our framework, we need to work a little harder than in the previous examples\footnote{We will not use it here, but in fact one can show (with $d$ even, say) that the linear combinations $\{\ket{i}\pm\ket{i+d/2}\}_{0\leq i<d/2}$ of the standard basis elements is an NDCSE for the choice $Q=\id$ of bilinear form.}. One approach is to change our choice of $Q$ so that the standard basis \textit{is} a (non-degenerate) weight basis of the resulting copy of $\O$; a $Q$ that does the job is (we assume for the moment that $d$ is even; the (very similar) case of $d$ odd is discussed at the end of this section)
\begin{equation} \label{eq:q_def}
	Q=\begin{pmatrix} 0& \id_{d/2} \\  \id_{d/2} & 0\end{pmatrix},
\end{equation}
where we relegate the details (and the justification of the next  sentence) to the end of this section. The action of $\O(d)$ on $\mch$ is irreducible and self-dual;  the decomposition of the $\O(d)$-diagonal operators into $\O(d)$-irreps is given by (with $U$ the orthogonal complement of the identity in space corresponding to ${\rm Sym}^2\hspace{0.5mm}\mch$ under the isomorphism\footnote{The isomorphism between $\mch$ and $\mch^*$ is facilitated by the bilinear form $Q$ itself; $v\in\mch\mapsto Q(v,-)\in\mch^*$.} $\mcl\cong\mch\otimes\mch^*\cong\mch\otimes\mch$)
\begin{equation}\label{eq:mortho_decomp}
\mcl^\DC= \mcl=\mbc \hspace{0.7mm} \oplus \hspace{0.7mm}U\oplus\hspace{0.7mm} \Lambda^2\hspace{0.5mm}\mch,
\end{equation}
with $a_\mbc=1$,
\begin{equation}
a_{\Lambda^2\hspace{0.15mm}\mch}=\frac{d_{\Lambda^2\hspace{0.15mm}\mch}^0}{d_{\Lambda^2\hspace{0.15mm}\mch}}=\frac{d/2}{d(d-1)/2}=\frac{1}{d-1},
\end{equation}
and
\begin{equation}
    a_U =\frac{d_U^0}{d_U}= \frac{d/2-1}{d(d+1)/2-1}= \frac{d-2}{(d+2)(d-1)}.
\end{equation}
By Theorem~\ref{thm:mc} the measurement channel is 
\begin{equation}
\mcm(A) = \frac{\Tr[A]}{d}\id + \frac{d-2}{(d+2)(d-1)}\mcp_{U} + \frac{1}{d-1}\mcp_{\Lambda^2\hspace{0.15mm}\mch}\,,
\end{equation}
and therefore meaningfully different from the channel in Eq.~\eqref{eq:gomc} (for example, the kernel of one is trivial, and the other not).  From Theorem~\ref{thm:var} we obtain the variance bound (assuming $d\gg 1$ for simplicity)
$\Var[\hat{o}] \lesssim d \|O\|^2_2$.
As usual, one expects to be able to do better by taking into account some properties of the group; indeed, later in this section we will evaluate the relevant integrals via Weingarten calculus, obtaining (for $d\gg 1$) the tighter bound
\begin{equation}
\Var[\hat{o}] \lesssim  \|O\|^2_2,
\end{equation}
the same scaling as for global unitaries.
We note in passing that, keeping the subleading terms in $d$, the bound is very slightly better than the unitary case for operators in ${\Lambda^2\hspace{0.15mm}\mch}$ and very slightly worse for those in $U$. In contrast, the bound of the real shadows protocol of Ref.~\cite{west2025real} is the improved $\Var[\hat{o}] \lesssim (1/2) \|O\|^2_2$; the price paid in that instance is that only the symmetric components of operators are visible, corresponding to a visible space of dimension approximately half that of the full operator space.\\

Next we note that although, as discussed, it does not involve measuring in an weight basis, the real shadows protocol of Ref.~\cite{west2025real} does still employ measurements in an NDCSE. Indeed, consider the subgroup $H=\{{\rm diag}\,(\varepsilon_1,\varepsilon_2,\ldots, \varepsilon_d)\,|\,\varepsilon_i\in\{\pm 1\}\ \forall i\}\subset \O(d)$. Within the irrep of traceless symmetric operators we have a $(d-1)$-dimensional $H$-invariant subspace (the traceless diagonal operators), and within $\Lambda^2\,\mch$ there are no $H$-invariants. By Theorem~\ref{thm:mc} we therefore obtain (with tildes simply to distinguish this instance from the previous calculation):
\begin{align*}
&\mcm (A) =  \widetilde{a}_\mbc \mcp_{\mbc}(A)  + \widetilde{a}_U\mcp_{U}(A) + \widetilde{a}_{\Lambda^2\hspace{0.15mm}\mch}\mcp_{\Lambda^2\hspace{0.15mm}\mch}(A)\\
&= \mcp_{\mbc}(A)+ \frac{d-1}{d(d+1)/2-1}\mcp_{U}(A) + \frac{0}{d(d-1)/2}\mcp_{\Lambda^2\hspace{0.15mm}\mch}(A)\\
&=  \frac{\Tr[A]}{d}\id + \frac{d-1}{d(d+1)/2-1}\left(\frac{A+A^{\mathsf{T}}-2\Tr[A]\id/d}{2}\right) \\
&=\frac{\Tr[A]\mathds{1}+A+ A^\mathsf{T} }{d+2}\,,
\end{align*}
in agreement with Eq.~\eqref{eq:gomc} and Ref.~\cite{west2025real} (after exerting substantially less effort than required in that reference).\\

Let us now come to the technical derivations of the above statements; 
we will essentially repeat the calculation of the symplectic case for the case of the CS protocol resulting from sampling Haar randomly from the orthogonal group and measuring in a weight basis; the clear similarities with the former calculation will reflect the close connection between the representation theory of the orthogonal and symplectic algebras~\cite{fulton1991representation}. 
One readily checks that under the choice Eq.~\eqref{eq:q_def} of $Q$ Eq.~\eqref{eq:oalg} translates to the condition 
\begin{equation}
    X=\begin{pmatrix}
        A&B\\C&-A^{\mathsf{T}}
    \end{pmatrix}
\end{equation}
for $A,\ B=-B^{\mathsf{T}}$ and $C=-C^{\mathsf{T}}$ some $d/2\times d/2$ matrices. 
In this basis a natural choice of Cartan subalgebra is (c.f. Eq.~\eqref{eq:symp_csa}
\begin{equation}\label{eq:ortho_csa}
    \mfh = {\rm span}_{\hspace{0.5mm}\mbc} \{\hspace{0.5mm} H_i = E_{i,i}-E_{d/2+i,d/2+i}   \hspace{0.5mm}\} _{1\leq i\leq d/2}\, ;
\end{equation}
the existence of such a favourable basis of $\mch$ justifies our choice of $Q$; things would not be so nice were $Q=\id$. Completing the setup,
we again introduce the basis $\{L_j\}_j$ of $\mfh^*$ dual to the above spanning set of $\mfh$, i.e., satisfying $L_j(H_i)=\delta_{i,j}$.\\ 

From the equivalence of Eq.~\eqref{eq:ortho_csa} and Eq.~\eqref{eq:symp_csa} we can copy the next part of the analysis from the symplectic case all but verbatim. We have an NDCSE $e_1,\hspace{0.5mm} e_2,\hspace{0.5mm} \ldots,\ e_d$ of $\mch$ given by the standard basis with respect to which  $Q$ takes the form Eq.~\eqref{eq:q_def} that it does. The weights are again (the non-degenerate)
\begin{equation}\label{eq:ortho_weights}
    \lm_i=\begin{cases}
        L_i & i\leq d/2\\
        -L_{i-d/2} & i>d/2
    \end{cases}\ \  ,
\end{equation}
so that the total set of weights for the representation $\mch$ is given by $\{\hspace{0.5mm}\pm L_i\hspace{0.5mm}\}_{1\leq i\leq d/2}$.
As before, the fact that every weight appears with its negative implies that the standard representation of $\mfso(d)$ is self-dual; combined with the irreducibility (but \textit{not}, this time, transitivity) of the action of $\O$ on $\mch$, we have
\begin{equation}
\mcl^\DC=\mcl=\mch^*\otimes\mch=\mch\otimes\mch = {\rm Sym}^2\hspace{0.5mm}\mch\hspace{0.7mm} \oplus\hspace{0.7mm}\Lambda^2\hspace{0.5mm}\mch. 
\end{equation}
This time it is the representation ${\rm Sym}^2\hspace{0.5mm}\mch $ that is not irreducible, but rather contains a copy of the trivial representation $\mbc$ (corresponding to the trivial action of $\O$ on $Q\in{\rm Sym}^2\hspace{0.5mm}\mch^*\cong {\rm Sym}^2\hspace{0.5mm}\mch$; denoting by $U$ its complement in ${\rm Sym}^2\hspace{0.5mm}\mch $ the decomposition of $\mcl$ into $\mfso(d)$ irreps is 
\begin{equation}\label{eq:ortho_decomp}
\mcl = \mbc \hspace{0.7mm} \oplus \hspace{0.7mm}U\oplus\hspace{0.7mm} \Lambda^2\hspace{0.5mm}\mch.
\end{equation}
The next step is of course to calculate the dimensions of $\Lambda^2\hspace{0.5mm}\mch$ and $U$, as well as their weight-zero subspaces. 
Let's begin with $\Lambda^2\hspace{0.5mm}\mch$. We have a basis of weight vectors given by $e_i\otimes e_j$, $1\leq i< j\leq d$, with corresponding weights $\pm L_i\pm L_j$ (where from Eq.~\eqref{eq:ortho_weights} the signs are determined by whether or not the index is greater than $d/2$). With a zero weight therefore resulting from pairing each positive weight $L_i$ of $\mch$ with its negative, we obtain a $d/2$-dimensional weight-zero space. Combined with ${\rm dim \hspace{0.5mm}}\left(\Lambda^2\hspace{0.5mm}\mch\right) = d(d-1)/2$ we conclude 
\begin{equation}\label{eq:apal}
a_{\Lambda^2\hspace{0.15mm}\mch}=\frac{d/2}{d(d-1)/2}=\frac{1}{d-1}.
\end{equation}
The case of ${\rm Sym}^2\hspace{0.5mm}\mch$ is highly similar; we have a basis of weight vectors given by $e_i\otimes e_j$, $1\leq i\leq j\leq d$, with corresponding weights $\pm L_i\pm L_j$, and again a $d/2$-dimensional weight-zero space. One of these is of course the trivial representation $\mbc$; $U$ then possesses a $(d/2 -1)$-dimensional weight-zero space, while itself being of dimension ${\rm dim\hspace{0.5mm}}W=d(d+1)/2 -1$. This yields
\begin{equation}\label{eq:apau}
    a_U = \frac{d/2-1}{d(d+1)/2-1}= \frac{d-2}{(d+2)(d-1)}.
\end{equation}
Combined with the immediate $a_\mbc=1$, Theorem~\ref{thm:mc} now characterizes the measurement channel. \\

Next we turn to the question of bounding the variance of the orthogonal shadow protocol. As usual, by exploiting the specific structure of the group we will be able to outperform the generic bounds of Theorem~\ref{thm:var}. We begin by noting that, for a generic target observable $A$, for the purposes of calculating the variance we can assume without loss of generality that it is traceless~\cite{huang2020predicting}. We then have:
\begin{align}
\mathrm{Var}[\hat{a}]&= \mbe[\hat{a}^2]-\left(\mbe[\hat{a}]\right)^2\\
&= \mbe[\hat{a}^2]\\
&=\sum_w\int_{O\in \O(d)}   d\mu_{\O(d)}(O)\  \Tr[\rho O^\dagger \ketbra{w}  O]  \Tr[  \mcm^{-1}\left(A\right)  O^\dagger \ketbra{w} O] ^{ 2}\,.
\end{align}
In principle one can evaluate this integral exactly through a conceptually straightforward but algebraically involved calculation; to simplify things we will work in the limit $d\gg 1$ and retain only leading order terms in $d$, which will be enough to reveal the important aspects of the behaviour. The first result of this simplification is that (recalling Eqs~\eqref{eq:apal} and~\eqref{eq:apau}) the inverse of the measurement channel acts on traceless operators as multiplication by $d$, so that
\begin{align}
\mathrm{Var}[\hat{a}]&\sim d^2\sum_w\int_{O\in \O(d)}   d\mu_{\O(d)}(O)\  \Tr[\rho O^\dagger \ketbra{w}  O]  \Tr[  A  O^\dagger \ketbra{w} O] ^{ 2}\\
&=d^2 \Tr[(\rho\otimes A\otimes A) \sum_w\int_{O\in \O(d)}   d\mu_{\O(d)}(O)\  \left(O^\dagger \ketbra{w}  O\right)^{\otimes 3}]
\end{align}
\,.\begin{figure}
    \centering
    \includegraphics[width=0.75\linewidth]{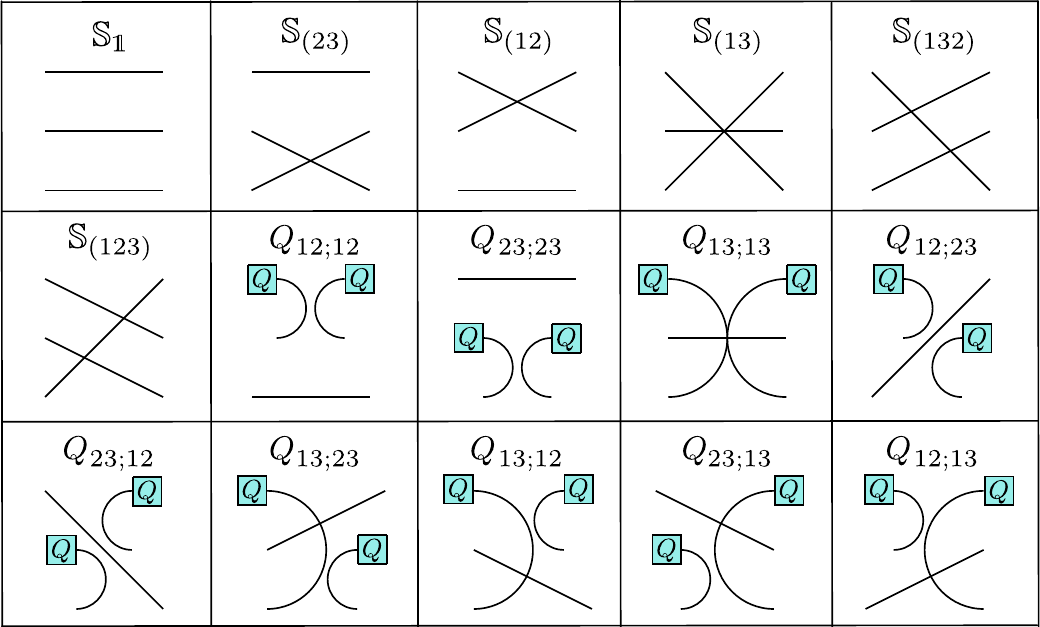}
    \caption{The elements of a spanning set of the third-order commutant of the orthogonal group defined by Eq.~\eqref{eq:ortho_def}. This figure is based on (and in fact slightly generalizes) Fig. 2 of Ref.~\cite{west2025real}, which depicts the set obtained under the choice $Q=\id$. In this appendix we take $Q$ to be defined by Eq.~\eqref{eq:q_def}.  }
    \label{fig:oc3}
\end{figure}

To evaluate the integral, we need a spanning set of the third-order commutant of our copy of $\O(d)$; happily, such a set is known: it is given by the elements of a certain representation of the \textit{Brauer algebra} $\mfb_k(d)$~\cite{collins2006integration,collins2009some,garcia2024architectures}. This bijection has been discussed in some detail in the quantum information literature~\cite{garcia2024architectures,west2024random,west2025real} so we will not belabour it here beyond noting that, as our copy of $\O(d)$ preserves the bilinear form $Q\neq\id$, the explicit representation of $\mfb_k(d)$ is modified slightly compared to that found in Refs.~\cite{garcia2024architectures,west2024random,west2025real}. The resulting elements of the spanning set, 
\begin{equation}\label{eq:oc3}
\{ \mbs_\id, \mbs_{(12)}, \mbs_{(13)}, \mbs_{(23)}, \mbs_{(123)}, \mbs_{(132)}, Q_{12;12}, Q_{23;23}, Q_{13;13}, Q_{12;23}, Q_{23;12}, Q_{13;23}, Q_{13;12}, Q_{23;13}, Q_{12;13}\} \,, 
\end{equation}
are depicted graphically in Fig.~\ref{fig:oc3}.  From the general theory of Weingarten calculus (and mirroring particularly closely the corresponding calculation in Ref.~\cite{west2025real}), we now have
  \begin{equation}
  \int_{O\in \O(d)} d\mu_{\O(d)}(O)\  \left(O^\dagger \ketbra{w}  O\right)^{\otimes 3} = \sum_{\pi\in S_3}a_\pi \mbs_{\pi}+\sum_x b_x Q_x 
    \label{eq:c3span1}
  \end{equation}
  for some coefficients $a_\pi$ and $b_x$ to be determined (suppressing the $d$-dependence of the coefficients). By the permutation symmetry of Eq.~\eqref{eq:c3span1} we immediately have
 \begin{align*}
   &a_{(12)} = a_{(13)} = a_{(23)};\qquad a_{(123)} = a_{(132)};\qquad \\ &
   b_{12;12} = b_{13;13} = b_{23;23};\hspace{4mm} b_{12;13} = b_{12;23} = b_{13;12} = b_{13;23} =b_{23;12} =b_{23;13} 
  \end{align*}
 and so 
  \begin{align}
    \int_{O\in \O(d)} d\mu_{\O(d)}(O)\  \left(O^\dagger \ketbra{w}  O\right)^{\otimes 3}  =&\ a_{\id}\id + a_{(12)}\left(\mbs_{(12)} + \mbs_{(13)} + \mbs_{(23)}\right) + a_{(123)}\left( \mbs_{(123)}+ \mbs_{(132)}\right)\nonumber\\
    &+b_{12;12}\left( Q_{12;12}+ Q_{23;23}+ Q_{13;13}\right)\nonumber\\
    &+ b_{12;13}\left(Q_{12;23}+Q_{23;12}+ Q_{13;23}+ Q_{13;12}+ Q_{23;13}+ Q_{12;13}\right)\,.
    \label{eq:c3span}
  \end{align}
 Following a fairly standard recipe (see e.g., Ref.~\cite{west2025real}) we evaluate the coefficients by multiplying Eq.~\eqref{eq:c3span} by various operators appearing on the right hand side ($\id,\mbs_{(12)},\mbs_{(123)},Q_{12;12}$ and $Q_{12;13}$), and then taking the trace. This yields the system of equations
\begin{align}
    1 &= a_{\id}d^3 + 3a_{\mbs_{12}}d^2 + 2a_{\mbs_{123}}d +2b_{Q_{12;12}}d^2 +4b_{Q_{12;13}}d\\
    1 &= a_{\id}d^2 + a_{\mbs_{12}}\left( d^3+2d\right) + 2a_{\mbs_{123}}d^2 +b_{Q_{12;12}}\left(d^2+2d\right) +b_{Q_{12;13}}\left(2d^2+4d\right)\\
    1 &= a_{\id}d + 3a_{\mbs_{12}} d^2 + a_{\mbs_{123}}\left(d^3+d \right) +3b_{Q_{12;12}}d +3b_{Q_{12;13}}\left(d^2+d\right)\\
    0 &= a_{\id}d^2 + a_{\mbs_{12}}\left( d^2+2d\right) + 2a_{\mbs_{123}}d +b_{Q_{12;12}}\left(d^3+2d\right) +b_{Q_{12;13}}\left(4d^2+2d\right)\\
    0 &= a_{\id}d + a_{\mbs_{12}}\left( d^2+2d\right) + a_{\mbs_{123}}\left( d^2+d\right) +b_{Q_{12;12}}\left(2d^2+d\right) +b_{Q_{12;13}}\left(d^3+2d^2+3d\right)\,.
\end{align}
This system may be easily solved by a computer algebra system; to leading order in $d$ we have
\begin{equation}
a_{\id} = a_{\mbs_{12}} = a_{\mbs_{123}} = d^{-3};\ \ \ \  b_{Q_{12;12}}= b_{Q_{12;13}} = -2d^{-4},
\end{equation}
whence
\begin{align}
\mathrm{Var}[\hat{a}]&\lesssim d^2 \Tr[(\rho\otimes A\otimes A) \sum_w \left(d^{-3}\sum_{\pi\in S_3} \mbs_{\pi}-2d^{-4}\sum_x  Q_x \right)]\\
&= \Tr[(\rho\otimes A\otimes A)  \left(\sum_{\pi\in S_3} \mbs_{\pi}-2d^{-1}\sum_x  Q_x \right)]\label{eq:g27}\,.
\end{align}
The terms in Eq.~\eqref{eq:g27} can be evaluated one by one (perhaps most easily in the graphical notation of Fig.~\ref{fig:oc3}); altogether we have (recalling that $A$ is traceless):
\begin{equation}
\mathrm{Var}[\hat{a}]\sim \Tr[A^2]+2\Tr[\rho A^2]-2d^{-1}\Big(\Tr[A^{\mathsf{T}}QAQ]+2\Tr[Q(A^{\mathsf{T}})^2Q\rho]+\Tr[A^{\mathsf{T}}QAQ\rho]+\Tr[QA^{\mathsf{T}}QA\rho]+2\Tr[AQA^{\mathsf{T}}Q\rho]\Big)\label{eq:ovarsim}
\end{equation}
where Eq.~\eqref{eq:ovarsim} is valid for $d\gg 1$. Typically this will reduce to $\mathrm{Var}[\hat{a}]\sim \Tr[A^2]=\|A\|_2^2$, as by H\"older's inequality we have, for any operator $O$ and density matrix $\rho$, $\Tr[O\rho]\leq\|O\|_\infty\|\rho\|_1=\|O\|_\infty$, which is generally much smaller than $\|O\|_2$~\cite{west2025real}. \\

Without repeating the entire analysis (which would look very similar),
let us briefly remark that one can also work out the details of the orthogonal $G$-shadows protocol in the case where the dimension of the 
Hilbert space $\mch$ of the states is odd (the fact that one must distinguish between odd and even dimensions at all is a reflection of the fact that the representation theory of the orthogonal group behaves differently depending on this parity~\cite{fulton1991representation}). Choosing instead (c.f. Eq.~\eqref{eq:q_def}) the bilinear form 
\begin{equation} 
	Q=\begin{pmatrix} 0& \id_{(d-1)/2} &0 \\  \id_{(d-1)/2} &0& 0\\0&0&1\end{pmatrix},
\end{equation}
one finds that the standard basis is a weight basis. The usual calculations yield
\begin{equation}
    \mcl^\DC=\mcl = \mbc \hspace{0.7mm} \oplus \hspace{0.7mm}U'\oplus\hspace{0.7mm} \Lambda^2\hspace{0.5mm}\mch\,,
\end{equation}
with
\begin{equation}
    a_{U'} = \frac{(d-1)/2}{d(d+1)/2-1}= \frac{1}{d+2}\,,
\end{equation}
and
\begin{equation}
a_{\Lambda^2\hspace{0.15mm}\mch}=\frac{(d-1)/2}{d(d-1)/2}=\frac{1}{d}.
\end{equation}

\subsection{Diagonal SU(2) shadows}\label{sec:su2}
We now come to perhaps one of our main examples, the shadows protocol on $(\mbc^2)^{\otimes n}$ induced by the tensor representation of $\SU(2)$. This will allow us to efficiently learn \textit{any} permutation-invariant operator on a collection of $n$ qubits, an important case with widespread applications throughout both quantum information theory and physics more generally~\cite{schatzki2022theoretical,sauvage2024classical,toth2010permutationally,anschuetz2022efficient}.
In addition to the generic variance bounds of Theorem~\ref{thm:var}, we will find that
\begin{equation}\label{eq:su2nvbound}
\Var_{\SU(2)}[\hat{o}] \leq \frac 43 (n+1)^4 \left\|  O  \right\|_\infty^2.
\end{equation}
Although this bound (Lemma~\ref{lem:pivs}) is often loose in practice, it guarantees -- for classes of operators whose spectral norm does not increase superpolynomially with system size -- a polynomial scaling of the estimator variance, even for operators spread across many isotypics, which cannot be guaranteed by Theorem~\ref{thm:var} alone . Indeed, although Theorem~\ref{thm:var} allows one to make an analogous statement with the spectral norm replaced by the 2-norm, there are natural classes of operators (e.g., Pauli strings) whose 2-norm grows exponentially with $n$. 
This reproduces the polynomial sample-complexity of Ref.~\cite{sauvage2024classical}, while (due to the simplicity of inverting Eq.~\eqref{eq:gb}) avoiding the considerable associated classical post-processing costs.\\ 

\begin{figure*}
\includegraphics[width=\textwidth]{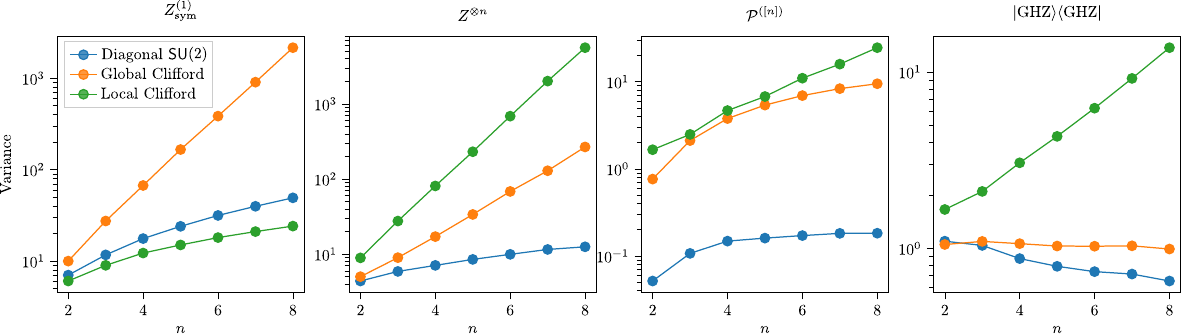}
\caption{Here we numerically investigate the performance of $\SU (2)$-shadows for permutation symmetric $n$-qubit operators, taking $G=\SU(2)$ and acting with the diagonal (tensor) representation. We compare the resulting variance with that of the  local/global Clifford schemes of Ref.~\cite{huang2020predicting}, taking random input states and target observables $O\in\{Z^{(1)}_{\mathrm{sym}},Z^{\otimes n}, \mc{P}^{([n])},\ketbra{\mathrm{GHZ}}\}$,
where $Z_{\mathrm{sym}}^{(1)} = \sum_{i=1}^n Z_i$ and $\mc{P}^{(\lm)}$ is the projector onto the $\lm$ isotypic. In all cases we find that $\SU (2)$-shadows displays at most polynomial growth in the variance, allowing for the efficient estimation of those observables. This is backed up analytically by the guarantees of Theorem~\ref{thm:var} with the exception of $Z^{\otimes n}$, which, both possessing an exponentially large 2-norm and being spread over many isotypics, is bounded only very loosely. }
    \label{fig:su2_channel}
\end{figure*}

\noindent
We will consider the natural  $\SU(2)$-representation $Q$ on $(\mathbb{C}^2)^{\otimes n}$, defined by  
\begin{equation*}
Q(U)\cdot \bigotimes_{i=1}^n \ket{\psi_i}= \bigotimes_{i=1}^n U\ket{\psi_i}.
\end{equation*}
Importantly, this commutes with the natural action $P$ of the permutation group $S_n$ on $(\mathbb{C}^2)^{\otimes n}$, itself given by
\[P(\pi)\cdot \bigotimes_{i=1}^n \ket{\psi_i}= \bigotimes_{i=1}^n \ket{\psi_{\pi^{-1}(i)}},\]
and induces a decomposition
\begin{equation}
(\mathbb{C}^2)^{\otimes n} \cong \bigoplus_{\substack{\lm\vdash n\\\ell(\lm)\leq 2}}Q_\lm \otimes P_\lm 
\end{equation}
block diagonalizing the action of $\SU(2)\times S_n$, the well-known \textit{Schur-Weyl duality}~\cite{bacon2006efficient}. 
Here the direct sum is over the partitions of $n$ of length at most two (i.e., $\lm=[\lm_1,\lm_2]$ with $\lm_1,\lm_2\in\mbz_{\geq 0},\ \lm_1\geq \lm_2,\ \lm_1+\lm_2=n$), and $Q_\lm$ is isomorphic to the spin-$(\lm_1-\lm_2)/2$ ($=:s_\lm$) irrep of $\SU(2)$. We will be concerned solely with the action of $\SU(2)$, expressed in the same basis as
\begin{equation}
(\mathbb{C}^2)^{\otimes n} \cong \bigoplus_{\substack{\lm\vdash n\\\ell(\lm)\leq 2}}Q_\lm \otimes \mbc^{m_\lm} \,,
\end{equation}
with $m_\lm = {\rm dim\ }P_\lm = \frac{n!}{(\lambda_1+1)! \lambda_2!}(\lambda_1 - \lambda_2+1)$ for a given partition $\lm=[\lm_1,\lm_2]$. We denote the elements of this basis (the \textit{Schur basis}) as $\ket{\lm,q_\lm,t_\lm}$. The $t_\lm$ are indexed by the \textit{standard Young tableaux} of $\lm$, and the $q_\lm$ by the \textit{semistandard Young tableaux} of $\lm$~\cite{fulton1991representation}. 
For example, suppose $n=3$. There are two partitions of three of length at most two:  $[3]$ and $[2, 1]$. The corresponding standard Young tableaux are $[[1,2,3]]$ and $\{[[1,2],[3]],\ [[1,3],[2]]\}$ respectively. We therefore get a decomposition
\begin{equation*}
(\mathbb{C}^2)^{\otimes 3} \cong \left(Q_{[3]} \otimes \mathds{1}_{1} \right)\oplus \left(Q_{[2,1]} \otimes \mathds{1}_{2} \right)\cong \mch_{3/2}\oplus\mch_{1/2}\oplus\mch_{1/2}
\end{equation*}
consistent with the usual rules for the addition of angular momenta, $(\mathbf{1/2})^{\otimes 3}=\mathbf{1/2}\otimes(\mathbf{0}\oplus\mathbf{1})=\mathbf{3/2}\oplus\mathbf{1/2}\oplus\mathbf{1/2}$. Importantly for practical applications, the change of basis matrix that relates  the computational basis to the Schur basis can be efficiently implemented on a quantum computer by means of the quantum Schur transform~\cite{bacon2006efficient,bacon2007quantum,kirby2017practical,krovi2019efficient,burchardt2025high}. As every $\SU(2)$-irrep is isomorphic to that of a (half-)integer spin particle, with therefore a non-degenerate spectrum with respect to the generator $J_z$ of $\SU(2)$, we see that by Theorem~\ref{thm:mc}  the Schur basis is centralizing.  \\

As $\mch$ is reducible under the action of $G$, the visible space is a strict subspace of $\mcl$, and, as discussed above, one can only produce unbiased estimates of the expectation values of operators inside of it.  As we show at the end of this appendix, for example, the visible space contains all of the symmetric operators on $n$ qubits, furnishing us with a rich and important set of examples.
The space of $G$-diagonal operators is readily characterised:

\begin{align}
\mcl^\DC &= \bigoplus_{\substack{\lm\vdash n\\\ell(\lm)\leq 2}} \bigoplus_{i=1}^{m_{\lm} } {\rm End}(Q^{\lm})\nonumber\\
&\cong \bigoplus_{\substack{\lm\vdash n\\\ell(\lm)\leq 2}} \bigoplus_{i=1}^{m_{\lm} }  \bigoplus_{s=0}^{2s_\lm} V_s \label{eq:ldirreps}
\,,
\end{align}
where $V_s$ is the spin-$s$ irrep of $\SU(2)$.
So the irreps of $\mcl^\DC$ (Eq.~\eqref{eq:ldirreps}) are   isomorphic to that of integer spin particles. Each of them has a one-dimensional weight-zero subspace, with the $\mcm^{-1}$-eigenvalue of an operator with support only on one such irrep then given by its dimension (i.e., $1/a_\lm^H=d_\lm$). It follows that all of $\mcl^D=\mcl^\mcv$ is visible.\\

\noindent
We now consider in some detail two examples of operators that can be efficiently estimated by $\SU(2)$-tensor shadows.
\subsubsection{A symmetrised Z operator}
\noindent
As a natural first example, we consider the symmetrised $Z$ operator, i.e., for any $1\leq j \leq n$ the
twirl 
\begin{align}
Z_{\rm sym}^{(1)} &:= n\mct_{S_n}[Z_j]=\sum_{i=1}^nZ_i\label{eq:ztwirl}\,,
\end{align}
where $Z_j$ is the Pauli $Z$ operator acting on the $j$th qubit.  This operator occurs frequently, for example, in the study of permutation symmetric quantum systems~\cite{anschuetz2022efficient,schatzki2022theoretical,sauvage2024classical}; 
as it is permutation-symmetric, it lies within the $\SU(2)$-tensor rep visible space (see Lemma~\ref{lem:pivs}). 
In the Schur basis $Z_{\rm sym}^{(1)}$ takes the form~\cite{sauvage2024classical}
\begin{align}
Z_{\rm sym}^{(1)}&=\sum_{\eta,t_{\eta},q_{\eta}} 2q_{\eta} \ketbra{\eta,q_{\eta},t_{\eta}} = \bigoplus_{\substack{\eta\vdash n\\\ell(\lm)\leq 2}}  \sum_{q_{\eta}=-s_{\eta}}^{s_{\eta}} 2q_{\eta} \ketbra{q_{\eta}}{q_{\eta}} \ot \mathds{1}_{m_{\eta}} \label{eq:zsymd}\,.
\end{align}
In order to employ the bounds of Theorem~\ref{thm:var} we need to know the decomposition of $Z_{\rm sym}^{(1)}$ into irreps of $\mcl^\DC$. 
For a given $\eta\vdash n$ we have a basis for the corresponding $G$-diagonal operators in ${\rm End\,}\mch^{\eta,i}$ given by $\{O_{s_{12}m_{12}}^{\eta, i}\}_{s_{12},m_{12}},\ s_{12}=0,1,\ldots, 2{s_{\eta}},\ m_{12}=-s_{\eta},s_{\eta}+1,\ldots,s_{\eta}$, where
(seeing Appendix~\ref{sec:exact} for some explicit examples)
\begin{equation}\label{eq:obasis}
O_{s_{12}m_{12}}^{\eta, i}=\sum_{m=-s_{\eta}}^{m=s_{\eta}}(-1)^{m+s_{\eta}}C^{s_{\eta},s_{\eta},s_{12}}_{m,m_{12}-m,m_{12}}\ketbra{\eta,m}{\eta,m-m_{12}}\ot \ketbra{i}.
\end{equation}
Comparing this with our expression (Eq.~\eqref{eq:zsymd})
for $Z_{\rm sym}$ we see that there is no contribution for  non-zero weight $O_{s_{12}m_{12}}$. In the weight-zero case we have
\begin{align*}
\Tr [O_{s_{12},0}^{\eta,i)\dagger} Z_{\rm sym}] &= \sum_{m=-s_{\eta}}^{m=s_{\eta}}2m(-1)^{m+s_{\eta}}C^{s_{\eta},s_{\eta},s_{12}}_{m,-m,0}=-\delta_{s_{12},1}\sqrt{2\binom{d_{\eta} + 1}{  3}}\,,
\end{align*}
where the last step may be verified by a computer algebra system. So we have
\begin{align}
Z_{\rm sym}&=-\sum_{\substack{\eta\vdash n\\\ell(\lm)\leq 2}} \sum_{i=1}^{m_\eta} \sqrt{2\binom{d_{\eta} + 1}{3}}O_{1,0}^{\eta,i},\label{eq:zsymd2}
\end{align}
an expression for $Z_{\rm sym}$ as a sum of operators each individually contained in a single irrep isomorphic to the spin-1 irrep of $\SU(2)$. From the second bound of Theorem~\ref{thm:var} we now have (recalling $a_{\mathbf{s}=1}=1/3$)
\begin{equation}
\Var_{\SU(2)}[\hat{z}_{\rm sym}^{(1)}] \leq 9\|Z_{\rm sym}^{(1)}\|_\infty^2 =9n^2.
\end{equation}
Investigating numerically, we find in Fig.~\ref{fig:su2_channel} that this bound is reasonably tight, with the numerical scaling of the variance for random initial states super-linear but sub-quadratic.
In fact, the simplicity of Eq.~\eqref{eq:zsymd2} renders the exact computation of the variance for any input state tractable (although admittedly still somewhat involved). This calculation is carried out in  Appendix~\ref{sec:exact}.\\

\subsubsection{Projecting onto an isotypic}

As a second example we  consider the projector $\Pi^{(\eta)}$ onto the $\eta$ isotypic (corresponding as above to some partition of $n$ of length at most two). 
In this case we have a particularly straightforward decomposition into weight vectors,
\begin{align}
\Pi^{(\eta)} &=\bigoplus_{i=1}^{m_\eta}\id_{Q_\eta}= \sum_{i=1}^{m_\eta}\sqrt{d_{\eta}} O^{(\eta,i)}_{0,0}\label{eq:projdecomp}\,.
\end{align}
In fact this is a decomposition of $\Pi^{(\eta)}$ into weight-zero vectors living in trivial irreps;  the variance bound is therefore (independently of both $d_{\eta}$ and $m_{\eta}$)
\begin{equation}\label{eq:pvbound}
\Var_{\SU(2)}[\hat{\pi}^{(\eta)}]\leq \frac{1}{1} \| \Pi^{(\eta)}\|_\infty^2 =1.
\end{equation}
Note that $\Pi^{(\eta)}$ is positive semi-definite, allowing us to employ the improved bounds of Theorem~\ref{thm:varf} (although in this case there is no benefit over the Theorem~\ref{thm:var}, as $1/a_0^2=1/a_0=1$). In this case too we can go beyond the bounds and evaluate the variance exactly for an arbitrary initial state $\rho$ by means of a Clebsch–Gordan decomposition; the result, derived in Appendix~\ref{sec:exact}, is
\begin{equation*}
\mathrm{Var}_{\SU(2)}[\hat{\pi}^{(\eta)}] = \sqrt{d_{\eta}}\sum_{i}\rho^{\eta,i}_{0,0} -{d_{\eta}}\left[\sum_{i}\rho^{\eta,i}_{0,0}\right]^2 \leq \frac{1}{4}\,,
\end{equation*}
where we have used that for any $x\in\mbr$ we have $x-x^2\leq 1/4$. Happily, this slightly improves upon the general bound Eq.~\eqref{eq:pvbound} given by Theorem~\ref{thm:varf}. This projector is highly non-local and therefore unsuitable for estimation by local Cliffords, and has $\|\Pi^{(\eta)}\|_2^2={\rm rank\ }\Pi^{(\eta)}=d_{\eta}m_{\eta}$, which is generically $\sim\mco(\exp n)$, making measuring it via global Cliffords likewise intractable. This is explored numerically in Fig.~\ref{fig:su2_channel}, where the advantage afforded by $\SU (2)$-shadows is clear. \\

Next we prove that diagonal $\SU (2)$-shadows allows for the efficient estimation of any permutation-invariant operator on $n$ qubits. This is a two-step process: first we must show that the permutation-invariant operators live within the corresponding visible space, so that our estimates are unbiased, and second we must bound the variance of the resulting estimators. We begin with step one:

\begin{lemma}\label{lem:pivs}
The visible space of diagonal $\SU (2)$-shadows contains the permutation-invariant operators.
\end{lemma}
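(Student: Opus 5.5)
Our plan is to combine the description of the visible space obtained just above with Schur--Weyl duality. We showed that in the Schur basis every irrep of $\mcl^\DC$ is an integer-spin $\SU(2)$-irrep with a one-dimensional weight-zero subspace. Hence by Theorem~\ref{thm:mc} all $a^H_\lm=1/d_\lm>0$, and $\mcl^\VC=\mcl^\DC=\bigoplus_{\eta}\bigoplus_{i=1}^{m_\eta}{\rm End}(\HC^{\eta,i})$. Here $\HC^{\eta,i}=Q_\eta\ot\ket{i}$ are the copies of the spin-$s_\eta$ irrep, indexed by the chosen basis $\{\ket{i}\}$ of the multiplicity space $\mbc^{m_\eta}\cong P_\eta$ (the standard Young tableaux). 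It therefore suffices to show that every permutation-invariant operator $O$ lies in this direct sum. Equivalently, $O$ must be block diagonal with respect to the decomposition $\bigoplus_{\eta,i}\HC^{\eta,i}$, i.e. $O\,\HC^{\eta,i}\subseteq\HC^{\eta,i}$.

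By permutation-invariant we take $O$ to mean $P(\pi)OP(\pi)^\dagger=O$ for all $\pi\in S_n$, so $O$ lies in the commutant of $P(S_n)$. By Schur--Weyl duality (double commutant theorem), this commutant is the algebra generated by $Q(\U(2))$. Equivalently, under $(\mbc^2)\tn\cong\bigoplus_\eta Q_\eta\ot P_\eta$ it is $\bigoplus_\eta {\rm End}(Q_\eta)\ot\id_{P_\eta}$. Thus $O=\bigoplus_\eta O_\eta\ot\id_{m_\eta}$. Such an operator acts as $O_\eta$ on each $Q_\eta\ot\ket{i}$ and has no matrix elements between different $(\eta,i)$. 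This is exactly membership in $\bigoplus_{\eta,i}{\rm End}(\HC^{\eta,i})=\mcl^\DC$. Note that this holds for \emph{any} choice of basis of the multiplicity spaces, which matters because the decomposition into $\HC^{\eta,i}$ is not canonical.

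The one point needing care is that Schur--Weyl gives the commutant of $S_n$ as ${\rm End}(Q_\eta)\ot\id$ only after identifying the $S_n$-irreps $P_\eta$ with the multiplicity spaces $\mbc^{m_\eta}$ in the same basis used to define the Schur basis. We will check that the Schur basis $\ket{\lm,q_\lm,t_\lm}$ realizes the joint $\SU(2)\times S_n$ decomposition, which it does by construction. The inclusion is then immediate. Combined with $\mcl^\DC=\mcl^\VC$ (all $a^H_\lm\neq0$), this yields the lemma. Conversely, $\mcl^\VC$ is strictly larger, since it also allows $O_{\eta,i}$ that vary with $i$, but that is not needed here.
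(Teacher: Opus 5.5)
Your proposal is correct and follows essentially the same route as the paper's proof. All of $\mcl^\DC$ is visible because every spin-$s$ summand of ${\rm End}(Q_\lm)$ has a nonzero weight-zero subspace, and Schur's lemma (via Schur--Weyl duality) puts every permutation-invariant operator in the form $\bigoplus_\lm A_\lm\ot\id_{P_\lm}$, which is block diagonal on each copy $Q_\lm\ot\ket{T_\lm}$. Your closing aside that $\mcl^\VC$ is strictly larger than the permutation-invariant operators holds only for $n\geq 3$, since for $n\leq 2$ all multiplicities $m_\lm$ equal one; as you say, it is not needed for the lemma.
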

{
\begin{proof}
Consider as before the Schur-Weyl decomposition
\begin{equation}\label{eq:su2hd}
\HC=(\mbb{C}^2)\tn   \cong \bigoplus_{\substack{\lm\vdash n\\\ell(\lm)\leq 2}}Q_\lm \otimes P_\lm \,,
\end{equation}
where the $Q_\lm$ and $P_\lm$ respectively furnish the irreps of $\SU(2)$ and $S_n$ corresponding to the partition $\lm$. Letting $\{\ket{T_\lm}\}_T$ denote a basis for each $S_n$-irrep $P_\lm$, the space of $\SU (2)$-diagonal operators is given as usual by 
\begin{equation}\label{eq:su2diag}
\mc{L}^D \cong \bigoplus_{\lm,T_\lm} {\rm End}(Q_{\lm})\ot \ketbra{T_\lm}\,.
\end{equation}
In this case, the entire space $\mcl^D$ is visible; indeed, we have seen that ${\rm End}(Q_{\lm})\cong \bigoplus_{s=0}^{2s_\lm} V_s$ with each $V_s$ having a non-zero weight-zero subspace, thereby escaping annihilation by the shadows channel. Meanwhile, by Schur's lemma  the permutation-invariant operators are given exactly by the elements of the space 
\begin{equation}
\mcl^{S_n} \cong \bigoplus_\lm A_\lm \ot \id_{P_\lm}\,, 
\end{equation}
for arbitrary operators $A_\lm\in {\rm End}(Q_{\lm})$. But $\id_{P_\lm}$ is of course nothing more than $\sum_{T_\lm}\ketbra{T_\lm}$, so that $\mcl^{S_n}\subset \mcl^D=\mcl^\mcv$. 
\end{proof}

For the sake of clarity, we remark that  $\SU(2)$-invariant operators are \textit{not} necessarily visible to $\SU(2)$-shadows; indeed note that with respect to the decomposition of Eq.~\eqref{eq:su2hd} the space of $\SU(2)$-invariant operators is given by 
$\mcl^{\SU(2)} \cong \bigoplus_\lm   \id_{Q_\lm}\ot B_\lm$,  
for arbitrary operators $B_\lm\in {\rm End}(P_{\lm})$, which need not sit inside the visible space Eq.~\eqref{eq:su2diag}. \\

Next we bound the variance of the diagonal $\SU(2)$-estimators of permutation-invariant operators:}
\sutwovar*

\begin{proof}
{That  the estimators are unbiased is the content of Lemma~\ref{lem:pivs}.} Now, let $O = \bigoplus_\lm A_\lm \ot \id_{m_\lm}$ (with $A_\lm\in{\rm End}(Q_\lm)$) be a permutation-invariant operator. Let us further decompose each $A_\lm$ into its components with respect to the decomposition ${\rm End}(Q_\lm)\cong \bigoplus_{r=0}^{2s_\lm} Q_r$ as $A_\lm = \sum_{r=0}^{2s_\lm} A_{\lm, r}$, so that the measurement channel acts as $\mcm^{-1}(O) = \bigoplus_\lm \left(\sum_{r=0}^{2s_\lm} (2r+1) A_{\lm, r}\right) \ot \id_{m_\lm}$. Now, by Eq.~\eqref{eq:idres} we have
\begin{equation}
    \Var_{\SU(2)}[\hat{o}] \le \|\mcm^{-1}(O) \|_\infty^2 = \max_\lm \left\|\sum_{r=0}^{2s_\lm} (2r+1) A_{\lm, r}\right\|_\infty^2\,;
\end{equation}
let us bound this later quantity. For any $\lm$ we have
\begin{align}
   \left\|\sum_{r=0}^{2s_\lm} (2r+1) A_{\lm, r}\right\|_\infty &\le \sum_{r=0}^{2s_\lm} (2r+1) \left\|A_{\lm, r}\right\|_\infty \\
   &\le \sum_{r=0}^{2s_\lm} (2r+1) \left\|A_{\lm, r}\right\|_2 \\
   &\le \left(\sum_{r=0}^{2s_\lm} (2r+1)^2\right)^{1/2} \left(\sum_{r=0}^{2s_\lm} \left\|A_{\lm, r}\right\|_2^2\right)^{1/2}  \\
   &= \left(\sum_{r=0}^{2s_\lm} (2r+1)^2\right)^{1/2}  \left\|A_{\lm}\right\|_2  \\
   &=\sqrt{\frac{d_\lm (4d_\lm^2-1)}{3} } \left\|A_{\lm}\right\|_2  \\
   &\le\sqrt{\frac{d_\lm (4d_\lm^2-1)}{3} } \sqrt{d_\lm} \left\|A_{\lm}\right\|_\infty \\
   &\le\sqrt{\frac{d_\lm (4d_\lm^2-1)}{3} } \sqrt{d_\lm} \left\|O\right\|_\infty \,,
\end{align}
where we have used the Cauchy–Schwarz inequality, the pairwise orthogonality (with respect to the Hilbert-Schmidt) of the $A_{\lm, r}$, the well-known chain of inequalities $\|A\|_\infty \le\|A\|_2 \le\sqrt d\|A\|_\infty $, and (in the final step) the direct-summand relationship between the $A_\lm$ and $O$. It follows that
\begin{equation}
    \Var_{\SU(2)}[\hat{o}] \le \frac{4d_\lm^4}{3} \left\|O\right\|_\infty ^2 \le \frac{4}{3} (n+1)^4 \left\|O\right\|_\infty ^2 \,.
\end{equation}

\end{proof}
Summarizing, Lemma~\ref{lem:pivs} shows that permutation-invariant operators lie in the visible space, hence the estimators are unbiased. Theorem~\ref{thm:su2var} then follows by combining this visibility statement with the generic bound Eq.~(C28) and the decomposition of \(\MC^{-1}\) in the Schur basis.

\subsection{Symmetric group shadows}\label{sec:snshad}
\noindent
In this appendix we fill in the details from the $S_n$-shadows examples from the main text. We begin with

\begin{lemma}\label{lem:Sn-perm-NDCSE}
Let $G = S_n$ act on $\mch = \mathbb{C}^n$ by the permutation representation
\begin{equation}
R(\sigma)\ket{i} = \ket{\sigma(i)}, \qquad \sigma \in S_n,\ i \in [n].
\end{equation}
Then there exists an abelian subgroup $H \subseteq S_n$ and a basis $\mcw$ of $\mch$
such that $\mcw$ is an NDCSE in the sense of Definition~\ref{def:cse}.
\end{lemma}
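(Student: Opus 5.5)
I take $H=\langle c\rangle\cong\mbz_n$ with $c=(1\,2\,\cdots\,n)$, and $\mcw=\{\ket{\chi_k}\}_{k=0}^{n-1}$ with $\ket{\chi_k}=n^{-1/2}\sum_j e^{2\pi i jk/n}\ket{j}$, as suggested in the main text. Definition~\ref{def:cse} asks for three things: (a) $\mcw$ is a Fourier basis for the irrep decomposition $\mch\cong[n]\oplus[n-1,1]$; (b) each block of $\mcw$ is a simultaneous eigenbasis of $H$; (c) each irrep restricted to $H$ is multiplicity-free. I will check these in the order (b), (a), (c).

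For (b), $R(c)\ket{j}=\ket{j+1}$ with indices mod $n$. A reindexing of the sum gives $R(c)\ket{\chi_k}=e^{-2\pi ik/n}\ket{\chi_k}$. So every $\ket{\chi_k}$ is an eigenvector of every element of $H$, and $\ket{\chi_k}$ carries the character $\chi_k(c^m)=e^{-2\pi ikm/n}$. These $n$ characters are pairwise distinct, so they are exactly the $n$ distinct characters of $\mbz_n$.

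For (a), recall that $\mch=\mch^{[n]}\oplus\mch^{[n-1,1]}$. Here $\mch^{[n]}=\spn\{\sum_j\ket{j}\}$, and $\mch^{[n-1,1]}=\{v:\sum_j v_j=0\}$ is its orthogonal complement. Both are $S_n$-invariant, and the standard representation is irreducible; I will cite this as well known. Now $\ket{\chi_0}\propto\sum_j\ket{j}$ spans $\mch^{[n]}$. For $k\neq0$, the coefficients of $\ket{\chi_k}$ sum to $\sum_j e^{2\pi ijk/n}=0$, so $\ket{\chi_k}\in\mch^{[n-1,1]}$. The $\ket{\chi_k}$ are orthonormal by the same geometric-sum identity, so $\{\ket{\chi_k}\}_{k\ne0}$ is an orthonormal basis of the $(n-1)$-dimensional standard irrep. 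Hence $\mcw=\mcw^{[n]}\cup\mcw^{[n-1,1]}$ is a Fourier basis. Each irrep occurs with multiplicity one, so there is no ambiguity about which copy each block belongs to.

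For (c), the restriction of $\mch^{[n]}$ to $H$ is one-dimensional, so it is trivially multiplicity-free. The restriction of $\mch^{[n-1,1]}$ to $H$ decomposes as $\bigoplus_{k=1}^{n-1}\chi_k$, and these characters are pairwise distinct by (b). This establishes non-degeneracy.

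I do not expect a real obstacle. The only points needing care are the irreducibility of the standard representation, which I will quote, and a sign convention: whether $R(c)$ acts on $\ket{\chi_k}$ with eigenvalue $e^{-2\pi ik/n}$ or $e^{+2\pi ik/n}$ depends on the convention, but the eigenvalues are distinct either way. For $n=1$ and $n=2$ the statement is trivial, since $\mch$ is then a sum of distinct one-dimensional irreps, and the same argument covers these cases.
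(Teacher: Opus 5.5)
Your proposal is correct and follows essentially the same route as the paper: take $H=\langle(1\,2\,\cdots\,n)\rangle\cong\mbz_n$ and the discrete Fourier basis, identify $\ket{\chi_0}$ with the trivial summand and the remaining vectors with the standard representation, and get non-degeneracy from the distinctness of the characters of $\mbz_n$. Your eigenvalue $e^{-2\pi i k/n}$ is in fact the correct one under the convention $R(\sigma)\ket{i}=\ket{\sigma(i)}$; the paper writes $\omega^{k}$, but as you note, this sign convention does not affect the argument.
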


\begin{proof}
It is well-known that
$\mch \cong \mathbf{1} \oplus V$,
where $\mathbf{1}$ is the trivial irrep, spanned by
$\ket{u} = \frac{1}{\sqrt{n}} \sum_{i=1}^n \ket{i}$, and $V= ({\rm span}\,\ket{u})^{\perp}$ is the $(n-1)$-dimensional \textit{standard representation}~\cite{fulton1991representation}. Now, let $c = (1\ 2\ \dots\ n)$ and define $H \coloneqq \langle c \rangle \cong C_n$. Further let $\omega = e^{2\pi i/n}$ and define
\begin{equation}
\ket{w_k} = \frac{1}{\sqrt{n}} \sum_{j=0}^{n-1} \omega^{jk} \ket{j+1},
\qquad k = 0,\dots,n-1,
\end{equation}
so that $c \ket{w_k} = \omega^k \ket{w_k}$, and $\{\ket{w_k}\}_{k=0}^{n-1}$ is an orthonormal eigenbasis for the abelian group $H$.
We note that
$\ket{w_0} = \ket{u}$
spans the trivial subrepresentation $\mathbf{1}$, and
$V = \mathrm{span}\{\ket{w_k} : 1 \le k \le n-1\}$, so that $
\mcw \coloneqq  \{\ket{w_0},\dots,\ket{w_{n-1}}\}$
is a Fourier basis in the sense of Definition~\ref{def:fb}.

Now, as an $H$-representation we have
\begin{equation}
V \downarrow_H \cong \bigoplus_{k=1}^{n-1} \chi_k,
\qquad \chi_k(c) = \omega^k.
\end{equation}
The characters $\chi_k$ are all distinct, so $V \downarrow_H$ is multiplicity-free.
By Definition~\ref{def:cse}, $\mcw$ is a commuting subgroup eigenbasis (CSE) for $(S_n,H)$,
and the multiplicity-freeness implies that it is   an NDCSE.
\end{proof}
\noindent Next we turn to the decomposition of $\mcl^\DC$, finding:
\begin{lemma}
Let $\mch = \mathbb{C}^n$ be the permutation representation of $S_n$ as above,
and let $\mcl = \mathrm{End}(\mch)$ carry the adjoint $S_n$-action
$X \mapsto R(\sigma) X R(\sigma)^{-1}$.
Then the diagonal submodule
\begin{equation}
\mcl^\DC \coloneqq \bigoplus_{\eta} \mathrm{End}(\mch_\eta)
= \mathrm{End}(\mathbf{1}) \oplus \mathrm{End}(V)
\end{equation}
decomposes as
\begin{equation}
\mcl^\DC \cong 2\,V_{(n)} \oplus V_{(n-1,1)} \oplus V_{(n-2,2)} \oplus V_{(n-2,1,1)}
\end{equation}
for $n \ge 4$.
\end{lemma}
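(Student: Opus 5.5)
Since $\mathrm{End}(\mathbf{1})\cong\mathbf{1}=V_{(n)}$, it suffices to show that $\mathrm{End}(V)\cong V\otimes V^*\cong V\otimes V$ decomposes as $V_{(n)}\oplus V_{(n-1,1)}\oplus V_{(n-2,2)}\oplus V_{(n-2,1,1)}$; adding the extra trivial summand then gives the claim. The plan is to work with the full space $\mcl=\mathrm{End}(\mch)\cong\mch\otimes\mch$ (all $S_n$ representations are real, so $\mch\cong\mch^*$), decompose it, and then subtract the off-diagonal blocks $\mathrm{Hom}(\mathbf{1},V)\oplus\mathrm{Hom}(V,\mathbf{1})\cong 2V$.

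\textbf{Step 1: decompose $\mch\otimes\mch$.} The space $\mch\otimes\mch$ is the permutation module on ordered pairs $(i,j)$. Its orbits are the diagonal pairs, a copy of $\mch=M^{(n-1,1)}$, and the off-diagonal pairs, which form the permutation module induced from $S_{n-2}\times S_1\times S_1$, i.e.\ $M^{(n-2,1,1)}$. Young's rule (Kostka numbers) gives $M^{(n-1,1)}\cong V_{(n)}\oplus V_{(n-1,1)}$ and $M^{(n-2,1,1)}\cong V_{(n)}\oplus 2V_{(n-1,1)}\oplus V_{(n-2,2)}\oplus V_{(n-2,1,1)}$ for $n\ge4$. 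Hence
\[
\mcl\cong 2V_{(n)}\oplus 3V_{(n-1,1)}\oplus V_{(n-2,2)}\oplus V_{(n-2,1,1)}.
\]

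\textbf{Step 2: remove the off-diagonal part.} We have $\mcl=\mathrm{End}(\mathbf{1})\oplus\mathrm{End}(V)\oplus\mathrm{Hom}(\mathbf{1},V)\oplus\mathrm{Hom}(V,\mathbf{1})$, and the last two summands are each isomorphic to $V=V_{(n-1,1)}$. Subtracting $2V_{(n-1,1)}$ from the decomposition in Step 1 yields
\[
\mcl^\DC\cong 2V_{(n)}\oplus V_{(n-1,1)}\oplus V_{(n-2,2)}\oplus V_{(n-2,1,1)},
\]
as claimed.

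\textbf{Cross-check.} Alternatively, one can use the standard decompositions $\mathrm{Sym}^2V\cong V_{(n)}\oplus V_{(n-1,1)}\oplus V_{(n-2,2)}$ and $\Lambda^2V\cong V_{(n-2,1,1)}$ (the exterior powers of the standard representation are the hook irreps). Dimension counting confirms consistency: $1+(n-1)+\tfrac{n(n-3)}{2}=\tfrac{n(n-1)}{2}$ and $\binom{n-1}{2}$ sum to $(n-1)^2$, and adding the trivial summand gives $n^2-2n+2$, matching Table~\ref{table}.

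The main obstacle is justifying the Young's rule decomposition of $M^{(n-2,1,1)}$, which requires the Kostka numbers $K_{\lambda,(n-2,1,1)}$. These are $1,2,1,1$ for $\lambda=(n),(n-1,1),(n-2,2),(n-2,1,1)$ respectively, and $0$ otherwise. The hypothesis $n\ge4$ is exactly what makes $(n-2,2)$ a valid partition distinct from the others. If I prefer to avoid Kostka numbers, I would instead verify the multiplicities by character inner products using fixed-point counts: $\langle\chi_{\mch}^2,1\rangle=2$ (the number of orbits on pairs) and $\langle\chi_{\mch}^2,\chi_{\mch}\rangle=5$ (the number of orbits on triples), which pins down $3V_{(n-1,1)}$; the remaining dimension is then split between $V_{(n-2,2)}$ and $V_{(n-2,1,1)}$ by the symmetric/antisymmetric split.
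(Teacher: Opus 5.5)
Your argument is correct, but your main route differs from the paper's.

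The paper stays inside $\mathrm{End}(V)\cong V\otimes V\cong\mathrm{Sym}^2V\oplus\Lambda^2V$. It then simply quotes the decompositions $\mathrm{Sym}^2V\cong V_{(n)}\oplus V_{(n-1,1)}\oplus V_{(n-2,2)}$ and $\Lambda^2V\cong V_{(n-2,1,1)}$, which is exactly your ``cross-check'' paragraph.

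Your main argument goes through the whole operator space instead. You observe that $\mathrm{End}(\mch)$, in the basis $|i\rangle\langle j|$, is literally the permutation module on ordered pairs, and split it as $M^{(n-1,1)}\oplus M^{(n-2,1,1)}$. You then apply Young's rule; your Kostka numbers $1,2,1,1$ are right, and $n\ge 4$ is precisely what permits the semistandard tableau with rows $1^{n-2}$ and $2,3$. Finally you cancel the two off-diagonal blocks $\mathrm{Hom}(\mathbf 1,V)\cong\mathrm{Hom}(V,\mathbf 1)\cong V$, which is legitimate because irrep multiplicities are additive.

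What your route buys is self-containedness: every step reduces to an explicit tableau count rather than to cited facts about $\mathrm{Sym}^2V$ and $\Lambda^2V$. What it does not give is which irreps lie in the symmetric part and which in the antisymmetric part. You obtain only the multiset of irreps in $\mcl^\DC$. The paper's very next lemma relies on that split to determine $d^H_{(n-2,2)}$ and $d^H_{(n-2,1,1)}$ individually, since it only has direct access to $\dim(\mathrm{Sym}^2V)^H$ and $\dim(\Lambda^2V)^H$.

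Your character-theoretic fallback is not self-sufficient. The orbit counts on pairs and triples fix $2V_{(n)}$ and $3V_{(n-1,1)}$. Identifying the remainder as $V_{(n-2,2)}\oplus V_{(n-2,1,1)}$ then needs either the same $\mathrm{Sym}^2$/$\Lambda^2$ facts or further input. For example, the $15$ orbits on $4$-tuples show the remainder consists of two distinct multiplicity-one irreps, but those two irreps must still be named.
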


\begin{proof}
Since $\mch \cong \mathbf{1} \oplus V$,
we have
$\mcl^\DC = \mathrm{End}(\mathbf{1}) \oplus \mathrm{End}(V)
\cong \mathbf{1} \oplus (V \otimes V^\ast).$
The standard representation $V$  is self-dual (as indeed are all $S_n$ irreps), so $V^\ast \cong V$ and
$V \otimes V \cong \mathrm{Sym}^2 V \oplus \Lambda^2 V$.
For $n \ge 4$ one furthemore has the well-known decompositions
\begin{align}
\mathrm{Sym}^2 V &\cong V_{(n)} \oplus V_{(n-1,1)} \oplus V_{(n-2,2)},\\
\Lambda^2 V &\cong  V_{(n-2,1,1)},
\end{align}
whence
\begin{equation}
\mathrm{End}(V) \cong V_{(n)} \oplus V_{(n-1,1)} \oplus V_{(n-2,2)} \oplus V_{(n-2,1,1)},
\end{equation}
and adding the trivial copy $\mathrm{End}(\mathbf{1}) \cong V_{(n)}$ gives the claim.
\end{proof}
\noindent
We next turn to the calculation of the corresponding $a_\lm$:

\begin{lemma} 
Let $\mcw$ be the NDCSE from Lemma~\ref{lem:Sn-perm-NDCSE}, arising from
$H = \langle c \rangle$.
For each irrep $\lambda$ that appears in $\mcl^\DC$, let $d_\lambda = \dim V^\lambda$
and $d_\lambda^H = \dim\bigl( (V^\lambda)^H \big)$ be the dimension of the
$H$-invariant subspace.
Then:
\begin{align}
a_{(n)}^H &= 1\\
a_{(n-1,1)}^H &= 0\\
a_{(n-2,2)}^H &=
\begin{cases}
\displaystyle \frac{1}{n}, & \text{$n$ odd},\\[4pt]
\displaystyle \frac{n-2}{n(n-3)}, & \text{$n$ even},
\end{cases}\\
a_{(n-2,1,1)}^H &=
\begin{cases}
\displaystyle \frac{1}{n-2}, & \text{$n$ odd},\\[4pt]
\displaystyle \frac{1}{n-1}, & \text{$n$ even}.
\end{cases}
\end{align}
\end{lemma}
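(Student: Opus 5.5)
The plan is to compute each $d_\lm^H$ as the multiplicity of the trivial $H$-character, $d_\lm^H=\expect_{h\sim H}\chi_\lm(h)$, as in the proof of Theorem~\ref{thm:mc}. Rather than using $S_n$ character tables on the cycle types of the powers $c^k$, I will get these multiplicities from the explicit restriction
\[
V\!\downarrow_H\cong\bigoplus_{k=1}^{n-1}\chi_k
\]
established in Lemma~\ref{lem:Sn-perm-NDCSE}, combined with the decomposition of $\mcl^\DC$ from the preceding lemma. The two easy cases come first. $a_{(n)}^H=1$ is immediate, since the trivial irrep is one-dimensional and fixed by $H$ (this is also Lemma~\ref{lem:fgood}). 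For $a_{(n-1,1)}^H=0$, the irrep $V_{(n-1,1)}\cong V$ restricts to $H$ as $\bigoplus_{k=1}^{n-1}\chi_k$, which contains no trivial character $\chi_0$, so $d^H_{(n-1,1)}=0$.

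For the remaining two irreps I will use $\mathrm{Sym}^2V\cong V_{(n)}\oplus V_{(n-1,1)}\oplus V_{(n-2,2)}$ and $\Lambda^2V\cong V_{(n-2,1,1)}$. The $H$-invariants of $\mathrm{Sym}^2V$ and $\Lambda^2V$ follow from the character formulas
\[
\chi_{\mathrm{Sym}^2/\Lambda^2}(h)=\tfrac12\bigl(\chi_V(h)^2\pm\chi_V(h^2)\bigr),
\]
averaged over $H$. Expanding $\chi_V=\sum_{k=1}^{n-1}\chi_k$ and using orthogonality of the characters of $\mbz_n$:
\begin{itemize}
\item $\expect_h\chi_V(h)^2$ counts pairs $(k,l)\in[n-1]^2$ with $k+l\equiv0 \pmod n$. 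There are exactly $n-1$ such pairs.
\item $\expect_h\chi_V(h^2)$ counts $k\in[n-1]$ with $2k\equiv 0 \pmod n$. Call this number $\varepsilon$: $\varepsilon=0$ for $n$ odd and $\varepsilon=1$ for $n$ even, where $k=n/2$ is the unique solution.
\end{itemize}
Hence $\dim(\mathrm{Sym}^2V)^H=(n-1+\varepsilon)/2$ and $\dim(\Lambda^2V)^H=(n-1-\varepsilon)/2$.

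It remains to subtract off known pieces and divide by dimensions. For $\Lambda^2V$, the dimension is $(n-1)(n-2)/2$, so
\[
a^H_{(n-2,1,1)}=\frac{n-1-\varepsilon}{(n-1)(n-2)},
\]
which gives $1/(n-2)$ for $n$ odd and $1/(n-1)$ for $n$ even. For $V_{(n-2,2)}$, I remove from $\mathrm{Sym}^2V$ the one invariant contributed by the trivial summand and the zero invariants of the standard summand. This leaves $(n-3+\varepsilon)/2$ invariants, while $\dim V_{(n-2,2)}=n(n-1)/2-1-(n-1)=n(n-3)/2$. So
\[
a^H_{(n-2,2)}=\frac{n-3+\varepsilon}{n(n-3)},
\]
which equals $1/n$ for $n$ odd and $(n-2)/(n(n-3))$ for $n$ even, as claimed.

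The main point needing care is the middle step: justifying the use of the $\mathrm{Sym}^2/\Lambda^2$ character formulas inside $\mcl^\DC\cong V\otimes V^*$. This requires identifying $V^*\cong V$ compatibly with the splitting into symmetric and antisymmetric parts, which holds because $V$ is a real representation with an invariant symmetric form. One must also check that the $H$-invariant count of $V\otimes V^*$, namely $n-1$, is consistent with that split; it is, since $(n-1+\varepsilon)/2+(n-1-\varepsilon)/2=n-1$. The rest is bookkeeping with the hook-length dimensions and the assumption $n\ge4$ inherited from the decomposition lemma.
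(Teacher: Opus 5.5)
Your proposal is correct and follows essentially the same route as the paper. Both restrict $V$ to $H=\langle c\rangle$ as $\bigoplus_{k=1}^{n-1}\chi_k$ and count pairs with $k+\ell\equiv 0 \pmod n$ to get the $H$-invariants of $\mathrm{Sym}^2V$ and $\Lambda^2V$. Your $\tfrac12\bigl(\chi_V(h)^2\pm\chi_V(h^2)\bigr)$ average is just a repackaging of the paper's direct enumeration over $k\le\ell$ and $k<\ell$. You then subtract the trivial and standard contributions and divide by the dimensions, as the paper does.

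The compatibility issue you flag is not actually needed. Since $d^H_\lambda$ depends only on the isomorphism class of the $S_n$-irrep $\lambda$, the abstract decompositions of $\mathrm{Sym}^2V$ and $\Lambda^2V$ suffice.
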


\begin{proof}
The statements for $(n)$ and $(n-1,1)$ are immediate:
for any subgroup, the trivial irrep has a one-dimensional fixed space, and
$V_{(n-1,1)} \downarrow_H \cong \bigoplus_{k=1}^{n-1} \chi_k$ with all $\chi_k$ nontrivial.
Now, as a $H$-representation
\begin{equation}
V \downarrow_H \cong \bigoplus_{k=1}^{n-1} \chi_k,
\qquad
\chi_k(c) = \omega^k,
\end{equation}
whence
\begin{equation}
\mathrm{Sym}^2 V \downarrow_H \cong \bigoplus_{k \le \ell} \chi_{k+\ell},
\qquad
\Lambda^2 V \downarrow_H \cong \bigoplus_{k < \ell} \chi_{k+\ell},
\end{equation}
where the indices are understood mod $n$;
a $H$-invariant vector in either space corresponds to a pair $(k,\ell)$ with
$1 \le k,\ell \le n-1$ and $k+\ell \equiv 0 \pmod{n}$.

If $n$ is odd, then $2k \equiv 0 \pmod{n}$ has no nonzero solution, so all
valid solutions have $k \ne \ell$, and the unordered pairs are $\{1,n-1\},\{2,n-2\},\dots,\left\{\frac{n-1}{2},\frac{n+1}{2}\right\},$
giving $\frac{n-1}{2}$ such pairs. Hence
\begin{equation}
\dim\bigl((\mathrm{Sym}^2 V)^H\big)
=
\dim\bigl((\Lambda^2 V)^H\big)
=
\frac{n-1}{2}.
\end{equation}
Using the decompositions
\begin{equation}
\mathrm{Sym}^2 V \cong V_{(n)} \oplus V_{(n-1,1)} \oplus V_{(n-2,2)},
\qquad
\Lambda^2 V \cong   V_{(n-2,1,1)},
\end{equation}
with $d_{(n)}^H = 1$ and $d_{(n-1,1)}^H = 0$, we obtain
\begin{equation}
\frac{n-1}{2}
=
1 + d_{(n-2,2)}^H,
\qquad
\frac{n-1}{2}
=
d_{(n-2,1,1)}^H.
\end{equation}
Alternately, if $n$ is even, say $n=2m$, then $k+\ell \equiv 0 \pmod{n}$ has one solution
with $k=\ell=m$ and $m-1$ solutions with $k \ne \ell$, so there are $m$ unordered
pairs in the symmetric square and $m-1$ in the exterior square. Thus
\begin{equation}
\dim\bigl((\mathrm{Sym}^2 V)^H\big) = m = \frac{n}{2},
\qquad
\dim\bigl((\Lambda^2 V)^H\big) = m-1 = \frac{n}{2} - 1.
\end{equation}
Using the same $S_n$-decompositions and the known values of $d_{(n)}^H$ and
$d_{(n-1,1)}^H$ gives
\begin{equation}
\frac{n}{2} = 1 + d_{(n-2,2)}^H,
\qquad
\frac{n}{2} - 1 = d_{(n-2,1,1)}^H.
\end{equation}
which yields the claimed values for $n$ even. Elementary applications of the \textit{hook length formula}~\cite{fulton1991representation} yield
\begin{equation}
    d_{(n)} = 1,\qquad d_{(n-1,1)} = n-1,\qquad d_{(n-2,2)} = \frac{n(n-3)}{2},\qquad d_{(n-2,1,1)} = \frac{(n-1)(n-2)}{2},
\end{equation}
whence the claimed formulas for the $a_\lambda^H$ follow by dividing the $d_\lambda^H$ by the $d_\lambda$.
\end{proof} 

\noindent
We now come to the proof of
\snndcse*

\begin{proof}
By Definition~\ref{def:cse}, an NDCSE is determined by an abelian subgroup
$H \subseteq S_n$ such that, for each $S_n$-irrep $V^\eta$, the restriction
$V^\eta \downarrow_H$ is multiplicity-free. In particular, if $V^\eta$ admits
an NDCSE via some abelian $H$, then
\begin{equation}
V^\eta \downarrow_H \cong \bigoplus_{\chi \in \widehat{H}} m_\chi \chi,
\qquad m_\chi \in \{0,1\},
\end{equation}
so
\begin{equation}\label{eq:dim-le-H}
\dim( V^\eta) = \sum_{\chi} m_\chi \le |\widehat{H}| = |H|.
\end{equation}
Thus a necessary condition for $V^\eta$ to admit an NDCSE is that
$\dim V^\eta \le |H|$ for some abelian subgroup $H \subseteq S_n$. The key technical result that we need is that, for $n\ge 3$, every abelian subgroup $H \subseteq S_n$ satisfies
$ |H| \le 3^{n/3}$~\cite{dixon1971maximal}. On the other hand, we can find irreps whose dimensions are much larger than this; indeed, assume $n= 2m$ is even and let $V^\lambda$ be the irreducible
representation of $S_n$ corresponding to the two-row partition $\lambda = (m,m) \vdash 2m$. 
By the hook length formula~\cite{fulton1991representation} and Stirling's approximation one can see that
\begin{equation}
\dim (V^{(m,m)}) = \frac{(2m)!}{m!(m+1)!}\sim \frac{2^n}{\mathrm{poly}(n)},
\end{equation}
which evidently dominates $3^{n/3}\approx1.44^n$
for  sufficiently large $n$. We conclude that for such an $n$, $V^{(m,m)}$ does not admit an NDCSE.
\end{proof}

\noindent
Next we come to the proof of
\lemaba*
\begin{proof}
We will prove this lemma by producing an explicit example of an $S_n$ irrep and a  non-degenerate commutative subalgebra eigenbases (in the form of a Gelfand-Tsetlin basis) for which we can calculate the measurement channel exactly, and find that it is not central. \\

Recall that the Gelfand--Tsetlin basis of an irrep $V^\lm$ of $S_n$ is a non-degenerate simultaneously basis of the algebra of \textit{Jucys--Murphy elements},
$J_k \coloneqq \sum_{j<k} (jk) \in\mbc[S_n], \ 2\leq k\leq n$ (where $(jk)$ denotes the permutation which exactly swaps $j$ and $k$). As is well-known, the elements of the Gelfand--Tsetlin basis are indexed by standard Young tableau of shape $\lm$. For example, let us take the partition $\lm=[3,1,1]$ and considered the corresponding irrep $V^\lm$ of $S_5$. A routine calculation with the computer algebra system \texttt{Sage}~\cite{sage} yields 
\begin{equation}\label{eq:s5decomp}
    V^{[3,1,1]} \ot \left(V^{[3,1,1]}\right)^* \cong V^{[5]}\oplus V^{[4,1]}\oplus 2\,V^{[3,2]}\oplus V^{[3,1,1]}
\oplus 2\,V^{[2,2,1]}\oplus V^{[2,1,1,1]}\oplus V^{[1,1,1,1,1]}\,.
\end{equation}
Meanwhile, the Gelfand--Tsetlin basis vectors are given by
\begin{equation}
        T_1  = \begin{ytableau}
        1&2&3\\4\\5
    \end{ytableau}\,,\ T_2  = \begin{ytableau}
        1&2&4\\3\\5
    \end{ytableau}\,,\ T_3  = \begin{ytableau}
        1&2&5 \\3\\4
    \end{ytableau}\,,\ T_4  = \begin{ytableau}
        1&3&4\\2\\5
    \end{ytableau}\,,\ T_5  = \begin{ytableau}
        1&3&5\\2\\4
    \end{ytableau}\,,\ T_6  = \begin{ytableau}
        1&4&5\\2\\3
    \end{ytableau}\,.
\end{equation}
The transpositions $s_i=(i,i+1)$ have a particularly nice form in the Gelfand--Tsetlin basis. Indeed, if $i$ and $i+1$ are in the same row of $T$ then $s_i\ket T=\ket T$, and if they are in the same column then $s_i\ket T=-\ket T$. Otherwise, they map $\ket T$ into the span of $\{\ket T,\ket {s_i T}\}$ in a manner represented by the matrix 
\begin{equation}\label{eq:trans}
    V^\lm(s_i)\rvert_{{\rm span}\, \{\ket T,\ket {s_i T}\} } = \begin{pmatrix}
        1/r & \sqrt{1-1/r^2}\\
        \sqrt{1-1/r^2} & -1/r
    \end{pmatrix} ,
\end{equation}
where $r=c_T(i+1)-c_T(i)$, with $c_T(k)={\rm column}(k)-{\rm row}(k)$. As the adjacent transpositions generate the symmetric group, this allows one to mechanically obtain all of the representing matrices. Now, as the measurement channel is given by
\begin{align}
\mcm(X)&= \frac1{|S_5|}\sum_{g\in S_5}V^{[3,1,1]}(g)^{-1}\,
\mca_{\mathrm{GT}}\!\bigl(V^{[3,1,1]}(g)XV^{[3,1,1]}(g)^{-1}\big)\,V^{[3,1,1]}(g)\\
&= \frac1{120}\sum_{g\in S_5}\sum_{i=1}^{d_{V^{[3,1,1]}}} \sbraket{T_i\vert  V^{[3,1,1]}(g)XV^{[3,1,1]}(g)^{-1}} {T_i}   V^{[3,1,1]}(g)^{-1}\ketbra{T_i}
V^{[3,1,1]}(g)\,,
\end{align}
and we can by means of Eq.~\eqref{eq:trans} obtain the explicit matrix representations of each of the elements of $S_5$ with respect to the basis $\{\ket{T_i}\}_i$, we can perform the above finite sum on a computer and obtain the  exact eigenvalues of the channel. Upon doing this, we find that the channel acts in particular on the isotypics with degeneracies (recall Eq.~\eqref{eq:s5decomp}) as
\begin{equation}
    \mcm\big|_{2V^{(3,2)}} \cong 
\begin{pmatrix}
\frac{11}{90} & 0\\[1mm]
0 & \frac{43}{120}
\end{pmatrix}\otimes \id_{5},
\qquad
\mcm\big|_{2V^{(2,2,1)}} \cong 
\begin{pmatrix}
0 & 0\\[1mm]
0 & \frac{1}{24}
\end{pmatrix}\otimes \id_{5}\,;
\end{equation}
that is, not as a scalar (on, indeed, either of them). This concludes the proof.
\end{proof}

\subsection{Particle preserving free fermions}\label{sec:ppff}

We now consider the subgroup of the matchgate symmetry that preserves particle number. As in Section~\ref{sec:ff}, the underlying Hilbert space is the same $n$-qubit space $\mch \cong (\mbc^2)^{\otimes n}$, which we
identify  with the fermionic Fock space of $n$ modes via the Jordan--Wigner transform.
We begin by recalling the Majorana operators (Eq.~\eqref{eq:majoranas}), $\gamma_{2j-1}=Z_1\cdots Z_{j-1}X_j,$ $\gamma_{2j}=Z_1\cdots Z_{j-1}Y_j$, where $1\le j\le n.$
From these we form the usual creation and annihilation operators
\begin{equation}\label{eq:ppff-aa}
a_j=\frac{1}{2}\big(\gamma_{2j-1}+i\gamma_{2j}\big)
=
Z_1\cdots Z_{j-1}\left(\frac{X_j+iY_j}{2}\right),
\qquad
a_j^\dagger=\frac{1}{2}\big(\gamma_{2j-1}-i\gamma_{2j}\big)
=
Z_1\cdots Z_{j-1}\left(\frac{X_j-iY_j}{2}\right).
\end{equation}
The particle-number--preserving Gaussian unitaries are those generated by quadratic Hamiltonians containing only $a^\dagger a$ terms, $H=\sum_{i,j=1}^n h_{ij}\,a_i^\dagger a_j.  $
Equivalently, they form the image of $\U(n)$ under fermionic second quantization. We denote this represented group by $G_{\mathrm{pp}}$. 
A convenient Hermitian generating set for the Lie algebra is 
\begin{align}
H^{(x)}_{j,j+1}&=\frac{1}{2}\big(X_jX_{j+1}+Y_jY_{j+1}\big),\\
H^{(y)}_{j,j+1}&=\frac{1}{2}\big(X_jY_{j+1}-Y_jX_{j+1}\big),\\
N_j&=\frac{1-Z_j}{2},
\end{align}
so that on a connected chain one may think of $G_{\mathrm{pp}}$ as the subgroup of matchgates generated by nearest-neighbor $XX+YY$, $XY-YX$, and $Z$ terms. 

We now turn to the search for an appropriate basis for shadows. 
Let $\ket{\Omega}$ denote the vacuum state, characterized by $a_j\ket{\Omega}=0$ for all $j$. The \emph{Fock} (occupation-number) basis of $\mch$ has elements
\begin{equation}
\ket{n_1,\dots,n_n}
:=
(a_1^\dagger)^{n_1}\cdots (a_n^\dagger)^{n_n}\ket{\Omega},
\qquad
n_j\in\{0,1\};
\end{equation}
equivalently, writing $S\subseteq[n]$ for the occupied modes, $\ket{S}:=\prod_{j\in S} a_j^\dagger\ket{\Omega}$. 
Now, the particle-number operator $N:=\sum_{j=1}^n a_j^\dagger a_j$
decomposes $\mch$ into $r$-particle sectors, $\mch=\bigoplus_{r=0}^n \mch_r,
$ where $
\mch_r:=\mathrm{span}\{\ket{S}:\ |S|=r\}$. 
As a $\U(n)$-module, $\mch_r$ is naturally isomorphic to the $r$\textsuperscript{th} exterior power $\Lambda^r\mbc^n$ via $
e_{i_1}\wedge\cdots\wedge e_{i_r}
\longmapsto
a_{i_1}^\dagger\cdots a_{i_r}^\dagger\ket{\Omega}
\quad (i_1<\cdots<i_r)$ (this identification is well-defined by the canonical anticommutation relations of the creation operators).
Hence
\begin{equation}\label{eq:ppff-H-dec}
\mch \cong \bigoplus_{r=0}^n \Lambda^r \mbc^n
\end{equation}
as a multiplicity-free decomposition into irreducible $\U(n)$-modules.
Now, let $T=\U(1)^n\subset \U(n)$ be the natural diagonal torus. If $t=\mathrm{diag}(z_1,\dots,z_n)\in T,$ then
$t\cdot \ket{S}=\big(\prod_{j\in S} z_j\big)\ket{S}$, so that the Fock basis $\mcf$ consists of joint $T$-eigenvectors. Since different subsets $S$ give different weights, this basis is non-degenerate, and is therefore an NDCSE.

Let us turn next to the decomposition of ${\rm End}\,(\Lambda^r V), $ where to simplify the notation a little we write $V=\mbc^n$.
First, we have a natural isomorphism
\begin{equation}
{\rm End}\,(\Lambda^r V)\cong \Lambda^r V\otimes \big(\Lambda^r V\big)^*  \cong
\Lambda^r V \otimes \Lambda^{n-r}V \otimes \det(V)^{-1}.
\end{equation}
This is helpful as the tensor product $\Lambda^r V \otimes \Lambda^{n-r}V$ is known to be multiplicity-free. Indeed, by Pieri's rule, its irreducible summands are indexed by the Young diagrams $(2^j,1^{n-2j}),\, j=0,1,\dots,m_r,\ m_r:=\min(r,n-r)$, each appearing with multiplicity one.
After twisting by $\det(V)^{-1}$, one obtains irreducibles $X_j$ 
of highest weight $(1^j,0^{\,n-2j},-1^j).$
 Thus
\begin{equation}\label{eq:ppff-end-r-dec}
{\rm End}\,(\Lambda^r V)\cong \bigoplus_{j=0}^{m_r} X_j,
\end{equation}
again multiplicity-free. 

To determine the measurement channel, we need the dimension of $X_j$ and the dimension of its zero-weight subspace.
First, the hook-content formula gives
\begin{equation}\label{eq:ppff-dim}
d_j:=\dim X_j
=
\binom{n}{j}^2-\binom{n}{j-1}^2,
\end{equation}
with the convention $\binom{n}{-1}=0$.
Next, the zero-weight multiplicity can be read off from Schur functions. The coefficient of the monomial $x_1x_2\cdots x_n$ in the Schur polynomial $s_{(2^j,1^{n-2j})}$ is simply the Kostka number $K_{(2^j,1^{n-2j}),(1^n)},$ which counts semistandard tableaux of shape $(2^j,1^{n-2j})$ and content $(1^n)$. Since each symbol $1,\dots,n$ appears exactly once, such tableaux are automatically standard. Hence the zero-weight multiplicity is the number of standard Young tableaux of shape $(2^j,1^{n-2j})$, which by the hook-length formula equals
\begin{equation}\label{eq:ppff-zerowt}
d_j^0:=\dim (X_j)^T
=
\binom{n}{j}-\binom{n}{j-1}.
\end{equation}
Combining~\eqref{eq:ppff-dim} and~\eqref{eq:ppff-zerowt} with Theorem~\ref{thm:mc}, we find that the measurement channel acts on every copy of $X_j$ by the scalar
\begin{equation}\label{eq:ppff-aj}
a_j^H
=
\frac{d_j^0}{d_j}
=
\frac{1}{\binom{n}{j}+\binom{n}{j-1}}.
\end{equation}
Since $d_j^0>0$ for every $j$, all of the particle preserving operators are  visible. For example, consider  one-body observables.
The one-body sector $\mathrm{span}\{a_i^\dagger a_j\}_{i,j} \cong V\otimes \overline V \cong \mbc \oplus X_1$ The trivial summand is spanned by the total number operator $N,$
which therefore has channel eigenvalue $1$. Meanwhile, the traceless one-body observables transform in the $X_1$ summand and have channel eigenvalue $a_1=1/(n+1).$ 

Finally, let us briefly discuss the variance bounds.
If an observable $O$ is supported on a single copy of $X_j$, then Theorem~\ref{thm:var} and the single-isotypic estimate give
\begin{equation}
{\rm Var}[\hat o]\le \frac{\|O\|_2^2}{a_j}
=
\bigg(\binom{n}{j}+\binom{n}{j-1}\bigg)\|O\|_2^2,
\end{equation}
and
\begin{equation}
{\rm Var}[\hat o]\le \frac{\|O\|_\infty^2}{a_j^2}
=
\bigg(\binom{n}{j}+\binom{n}{j-1}\bigg)^2\|O\|_\infty^2.
\end{equation}
In the full free-fermion protocol of Sec.~\ref{sec:ff}, on the other hand, the degree-$2j$ Majorana sector has channel eigenvalue
\begin{equation}
a_{2j}^{{\rm FGU}(2n)}=\frac{\binom{n}{j}}{\binom{2n}{2j}}.
\end{equation}
For number-conserving observables, the present $G_{\mathrm{pp}}$-protocol is therefore always at least as good, since
\begin{equation}
a_{j;\,{G_{\mathrm{pp}}}}
=
\frac{1}{\binom{n}{j}+\binom{n}{j-1}}
\;\ge\;
\frac{\binom{n}{j}}{\binom{2n}{2j}}
=
a_{2j;\,{\rm FGU}(2n)},
\end{equation}
as can be readily verified by direct calculation.
Thus the particle-preserving protocol gives smaller variance bounds on number-conserving observables, at the price of making particle-number--changing observables invisible.

\begin{figure}[t]
    \centering
    \includegraphics[width=0.95\linewidth]{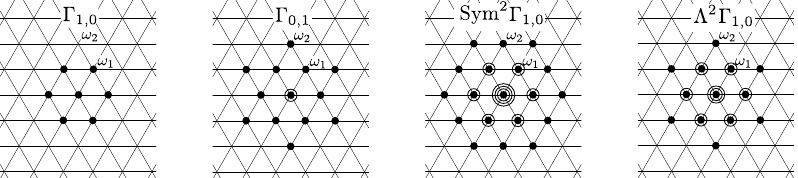}
    \caption{Weight-diagrams of $\mfg_2$ representations that appear in Appendix~\ref{sec:eshad}. The Cartan subalgebra of $\mfg_2$ being two-dimensional allows us to easily draw the diagrams and reason geometrically about highest-weights. For example, we see immediately that $\Gamma_{2,0}\subset{\rm Sym}^2\hspace{0.5mm}\Gamma_{1,0}$ and $\Gamma_{0,1}\subset\Lambda^2\hspace{0.5mm}\Gamma_{1,0}$; removing the weights of $\Gamma_{0,1}$ from the diagram of $\Lambda^2\hspace{0.5mm}\Gamma_{1,0}$ subsequently reveals that the complementary factor is $\Gamma_{1,0}$. In all cases, the dimensions of both the representations and their weight-zero subspaces can be read directly from the diagrams. }
    \label{fig:g2}
\end{figure}

\begin{table}
\centering
\begin{tabular}{c c c}
\hline
\ \ Simple type\ \  & Minuscule highest weights & Corresponding irreducible module(s) \\
\hline
$\mfsl_{n+1}$ &
$\omega_1,\omega_2,\ldots,\omega_n$ &
$\Lambda^k \mathbb{C}^{n+1}$ for $1\leq k\leq n$ \\

$\mfso_{2n+1}$ &
$\omega_n$ &
spin module \\

$\mfsp_{2n}$ &
$\omega_1$ &
defining module $\mathbb{C}^{2n}$ \\

$\mfso_{2n}$ &
$\omega_1,\omega_{n-1},\omega_n$ &
vector module $\mathbb{C}^{2n}$ and the two spinor modules \\

$\mfe_6$ &
$\omega_1,\omega_6$ &
the two dual $27$-dimensional modules \\

$\mfe_7$ &
$\omega_7$ &
the $56$-dimensional module \\

$\mfe_8$ &
none &
none \\

$\mff_4$ &
none &
none \\

$\mfg_2$ &
none &
none \\
\hline
\end{tabular}
\caption{Classification of the nontrivial minuscule irreducible representations of simple Lie algebras~\cite{li2018inner}. }
\label{tab:minuscule}
\end{table}

\subsection{Minuscule Lie group modules}\label{sec:eshad}
As our final (set of) examples we briefly make some general remarks on the shadows protocols induced by a Lie group $G$ acting on a \textit{minuscule} irreducible weight module, with measurements in a weight basis. We recall that a module is said to be minuscule if the Weyl group acts transitively on its highest weight~\cite{fulton1991representation}. As the highest weight space of an irrep is one-dimensional, the transitivity of this action immediately implies that all of the other weight spaces appearing in the module are also one-dimensional. This, combined with the assumed irreducibility of the module (so that any basis is a FB) implies that a weight basis of the module is an NDCSE. As evidenced by Table~\ref{tab:minuscule}, this immediately gives us quite a few examples of representations with centralizing bases. In the case of the classical groups we are simply rediscovering the centralizing nature of the weight bases that we have seen above, but the cases of the irreps of the exceptional Lie groups $E_6$ and $E_7$ mentioned in the table are new. \\

We emphasise, however, that while an irrep being minuscule is a sufficient condition for us to obtain a centralizing basis, it can also be possible to find centralizing basis when working with a non-minuscule irrep. To see this, let us turn to 
the exceptional Lie group $G_2$. The simplest of the  exceptional Lie groups, $G_2$ is the automorphism group of the octonions~\cite{fulton1991representation} and occurs in physics in such varied settings as string theory~\cite{atiyah2001m} and the symmetries of rolling balls~\cite{baez2014g}. For our purposes the first important fact is that it admits an embedding $G_2\hookrightarrow\SO(7)$ onto a compact subgroup of $\SO(7)$, with the uniform measure over it constituting an ensemble of unitaries one can use for a shadows protocol targeting seven-dimensional quantum systems. Although we will not reproduce the construction here, we note that explicit matrices realizing this isomorphism are known~\cite{arenas2005constructing}. 
To understand the relevant representations we pass, as usual, to the Lie algebra.\\

Happily, $\mfg_2$ is of rank two (whence, indeed, the subscript of its name) so we can readily visualize the representations by means of weight-diagrams (see Fig.~\ref{fig:g2}) and characterize the irreps by a pair of non-negative integers $a$ and $b$, using the notation $\Gamma_{a,b}$ to denote the irrep of highest weight $a\omega_1+b\omega_2$. Omitting the details (which can be found in, for example, Ref.~\cite{fulton1991representation}) the fundamental representation $\mch$ (acting on our $7-$dimensional system) is irreducible of highest weight $\omega_1$, i.e., in our notation $\mch=\Gamma_{1,0}$. We will take measurements in the weight basis. Now, from the diagram (Fig.~\ref{fig:g2}) we see  that $\mch $ is \textit{not} minuscule (as witnessed by the presence of a zero weight), but that it is degeneracy-free and self-dual; combined with its irreducibility we have
\begin{equation}
\mcl^\DC=\mcl=\mch\otimes\mch^*=\Gamma_{1,0}\otimes\Gamma_{1,0}={\rm Sym}^2\hspace{0.5mm}\Gamma_{1,0}\hspace{0.7mm} \oplus\hspace{0.7mm}\Lambda^2\hspace{0.5mm}\Gamma_{1,0}\,.
\end{equation}
Next up, from Fig.~\ref{fig:g2} we see that ${\rm Sym}^2\hspace{0.5mm}\Gamma_{1,0}$ contains a copy of $\Gamma_{2,0}$; though it is not immediate, it turns out that, up to a copy of the trivial representation $\mbc$, this is it: ${\rm Sym}^2\hspace{0.5mm}\Gamma_{1,0}=\Gamma_{2,0}\oplus\mbc$~\cite{fulton1991representation}. Finally, we can understand the case of $\Lambda^2\hspace{0.5mm}\Gamma_{1,0}$  from Fig.~\ref{fig:g2}: it certainly contains a copy of $\Gamma_{0,1}$, and examining the remaining weights reveals that the other factor is $\Gamma_{1,0}$. In totality, then, we have
\begin{equation}
\mcl^\DC=\Gamma_{2,0}\hspace{0.7mm} \oplus\hspace{0.7mm}\Gamma_{1,0}\hspace{0.7mm} \oplus\hspace{0.7mm}\Gamma_{0,1}\hspace{0.7mm} \oplus\hspace{0.7mm}\mbc\,.
\end{equation}
As in the previous appendices we are at this point in business, and from Fig.~\ref{fig:g2} can simply read off
\begin{equation}
a_{\Gamma_{2,0}}=\frac{3}{27}=\frac{1}{9},\ a_{\Gamma_{1,0}}=\frac{1}{7},\ a_{\Gamma_{0,1}}=\frac{2}{14}=\frac{1}{7},\ a_\mbc=1;
\end{equation}
by Theorem~\ref{thm:mc} we immediately understand the $G_2$-shadows measurement channel. 

\section{Exact variance calculations}\label{sec:exact}
When the decomposition of the inverse of the measurement channel on the target observable, $\mcm^{-1}(O)$, with respect to a basis of weight vectors of the action of $G$ on ${\rm End\ }\mch$ is known, it can be tractable to analytically obtain the variance of the shadows estimator,
\begin{equation}\label{eq:vexact}
\mathrm{Var}[\hat{o}]=\sum_{\eta,i,\alpha}\sdbra{\rho\otimes \mcm^{-1}(O)\otimes \mcm^{-1}(O)}\mce^{(3)}\sdket{\Pi_{\eta,i,\alpha}^{\otimes 3}}-\left(\sum_{\eta,i,\alpha}\sdbra{\rho\otimes \mcm^{-1}(O)}\mce^{(2)}\sdket{\Pi_{\eta,i,\alpha}^{\otimes 2}}\right)^2,
\end{equation}
obviating the need to appeal to the (often loose) variance bounds derived above. We will give several examples of this in the case $G=\SU(2)$ (acting via the tensor rep), the permutation symmetric action on $n$ qubits.\\

As $\mce^{(2)}$ and $ \mce^{(3)}$ project respectively onto the trivial $G$-irreps of $\mcl^{\otimes 2}$ and $ \mcl^{\otimes 3}$,
the key step  is to determine the overlap of those subspaces with the other operators present in Eq.~\eqref{eq:vexact}. Our strategy for doing this is to iteratively construct weight vectors of $\mcl^{\otimes k}$ from $ \mcl^{\otimes (k-1)}$ via Clebsch–Gordan expansions, where we recall the definition
\begin{equation}
C^{s_1,s_2,s_{12}}_{m_1,m_2,m_{12}}=\braket{s_1,m_1;s_2,m_2}{s_1,s_2;s_{12},m_{12}}
\end{equation}
of the Clebsch–Gordan coefficients, relating the decompositions of the tensor basis $\ket{s_1,m_1}\otimes \ket{s_2,m_2}$ of $\HC^{s_1}\otimes\HC^{s_2} \cong \bigoplus_{s_{12}} \HC^{(2)}_{s_{12}}$ to the ``coupled'' basis $\ket{s_{12},m_{12}}$ of each irrep  $\HC^{(2)}_{s_{12}}$  . Beginning with $\Pi_{\eta,i,\alpha}$, from Eq.~\eqref{eq:obasis} we have:
\begin{equation}
\Pi_{\eta,i,\alpha}=\sum_{\mu}(-1)^{\alpha+s_{\eta}}C^{s_{\eta},s_{\eta},\mu}_{\alpha,-\alpha,0}O^{\eta,i}_{\mu,0}.
\end{equation}
Taking the tensor square of this expression and introducing the ``coupled operator basis elements'' defined implicitly by 
\begin{equation}\label{eq:c2}
O^{{\eta},i}_{\mu,\nu}\otimes O^{{\eta},i}_{\mu',\nu'}=\sum_{\mu_{12}}C^{\mu,\mu',\mu_{12}}_{\nu,\nu',\nu+\nu'}O^{(2),{\eta},i}_{[\mu,\mu'];\mu_{12},\nu+\nu'},
\end{equation}
we have
\begin{align}
\Pi_{\eta,i,\alpha}^{\otimes 2}&=\sum_{\mu,\mu'}C^{s_{\eta},s_{\eta},\mu}_{\alpha,-\alpha,0}C^{s_{\eta},s_{\eta},\mu'}_{\alpha,-\alpha,0}O^{{\eta},i}_{\mu,0}\otimes O^{{\eta},i}_{\mu',0}=\sum_{\mu,\mu',\mu_{12}}C^{s_{\eta},s_{\eta},\mu}_{\alpha,-\alpha,0}C^{s_{\eta},s_{\eta},\mu'}_{\alpha,-\alpha,0}C^{\mu,\mu',\mu_{12}}_{0,0,0}O^{(2),{\eta},i}_{[\mu,\mu'];\mu_{12},0}
\end{align}
Tensoring our expressions for $\Pi_{\eta,i,\alpha}$ and $\Pi_{\eta,i,\alpha}^{\otimes 2}$ then similarly yields
\begin{equation}\label{eq:pi3}
\Pi_{\eta,i,\alpha}^{\otimes 3}=\sum_{\mu,\mu',\mu'',\mu_{12},\mu_{123}}(-1)^{\alpha+s_{\eta}}C^{s_{\eta},s_{\eta},\mu}_{\alpha,-\alpha,0}C^{s_{\eta},s_{\eta},\mu'}_{\alpha,-\alpha,0}C^{s_{\eta},s_{\eta},\mu''}_{\alpha,-\alpha,0}C^{\mu,\mu',\mu_{12}}_{0,0,0}C^{\mu''\mu_{12},\mu_{123}}_{0,0,0}O^{(3),{\eta},i}_{[\mu''[\mu,\mu']_{\mu_{12}}];\mu_{123},0}
\end{equation}
with the third order coupled operator basis elements defined by the relation
\begin{equation}\label{eq:c3}
O^{{\eta},i}_{\mu'',\nu''}\otimes O^{(2),{\eta},i}_{[\mu,\mu'];\mu_{12},\nu_{12}}=\sum_{\mu_{123}}C^{\mu''\mu_{12},\mu_{123}}_{\nu'',\nu_{12},\nu''+\nu_{12}}O^{(3),{\eta},i}_{[\mu''[\mu,\mu']_{\mu_{12}}];\mu_{123},\nu''+\nu_{12}}\,.
\end{equation}
Let us also write
\begin{equation}
\rho:=\sum_{{\eta},i,s'',m''}\rho^{{\eta},i}_{s'',m''} O^{{\eta},i}_{s'',m''} + \rho_{\rm invisible},
\end{equation}
where the ``invisible'' component of $\rho$ (i.e., the component that is not $G$-diagonal) will not affect the  result of the shadow tomography, and is henceforth ignored. 
Before moving to the calculation of the variance for some choices of target observable $O$ it is perhaps helpful to have in mind a simple example of an explicit realization of the operators $O^{{\eta},i}_{\mu,\nu}$ in the Schur basis. Taking for example $n=2$, $\mch$ decomposes into irreps as 
$(\mathbf{1/2})\otimes(\mathbf{1/2})=\mathbf{0}\oplus\mathbf{1}$, and the corresponding $G$-diagonal operators decompose as 
\begin{align}
{\rm End}(\mathbf{0})\oplus {\rm End}(\mathbf{1})&=(\mathbf{0})\oplus(\mathbf{0}\oplus\mathbf{1}\oplus\mathbf{2})\\
&={\rm span}\{ (O^{0,0}_{0,0}),\ ((O^{1,0}_{0,0}),\ (O^{1,0}_{1,0},O^{1,0}_{1,-1},O^{1,0}_{1,1}),\ (O^{1,0}_{2,-2},O^{1,0}_{2,-1},O^{1,0}_{2,0},O^{1,0}_{2,1},O^{1,0}_{2,2})) \}\,,
\end{align}
where in the Schur basis we have
\begin{equation*}
O^{0,0}_{0,0} = \begin{pmatrix}
    0 & 0 & 0 & 0\\
    0 & 0 & 0 & 0\\
    0 & 0 & 0 & 0\\
    0 & 0 & 0 & 1
\end{pmatrix};\quad
O^{1,0}_{0,0} = \begin{pmatrix}
    \frac{1}{\sqrt{3}} & 0 & 0 & 0\\
    0 & \frac{1}{\sqrt{3}} & 0 & 0\\
    0 & 0 & \frac{1}{\sqrt{3}} & 0\\
    0 & 0 & 0 & 0
\end{pmatrix};\quad
O^{1,0}_{1,-1} = \begin{pmatrix}
    0 & -\frac{1}{\sqrt{2}} & 0 & 0\\
    0 & 0 & -\frac{1}{\sqrt{2}} & 0\\
    0 & 0 & 0 & 0\\
    0 & 0 & 0 & 0
\end{pmatrix};\quad
O^{1,0}_{1,0} = \begin{pmatrix}
    -\frac{1}{\sqrt{2}} & 0 & 0 & 0\\
    0 & 0 & 0 & 0\\
    0 & 0 & \frac{1}{\sqrt{2}} & 0\\
    0 & 0 & 0 & 0
\end{pmatrix}
\end{equation*}

\begin{equation*}
O^{1,0}_{1,1} = \begin{pmatrix}
    0 & 0 & 0 & 0\\
    \frac{1}{\sqrt{2}} & 0 & 0 & 0\\
    0 & \frac{1}{\sqrt{2}} & 0 & 0\\
    0 & 0 & 0 & 0
\end{pmatrix};\quad  
O^{1,0}_{2,-2} = \begin{pmatrix}
    0 & 0 & 1 & 0\\
    0 & 0 & 0 & 0\\
    0 & 0 & 0 & 0\\
    0 & 0 & 0 & 0
\end{pmatrix};\quad  
O^{1,0}_{2,-1} = \begin{pmatrix}
    0 & \frac{1}{\sqrt{2}} & 0 & 0\\
    0 & 0 & -\frac{1}{\sqrt{2}} & 0\\
    0 & 0 & 0 & 0\\
    0 & 0 & 0 & 0
\end{pmatrix};\quad  
O^{1,0}_{2,0} = \begin{pmatrix}
    \frac{1}{\sqrt{6}} & 0 & 0 & 0\\
    0 & -\frac{2}{\sqrt{6}} & 0 & 0\\
    0 & 0 & \frac{1}{\sqrt{6}} & 0\\
    0 & 0 & 0 & 0
\end{pmatrix}
\end{equation*}
\begin{equation*}
O^{1,0}_{2,1} = \begin{pmatrix}
    0 & 0 & 0 & 0\\
    -\frac{1}{\sqrt{2}} & 0 & 0 & 0\\
    0 & \frac{1}{\sqrt{2}} & 0 & 0\\
    0 & 0 & 0 & 0
\end{pmatrix};\quad  
O^{1,0}_{2,2} = \begin{pmatrix}
    0 & 0 & 0 & 0\\
    0& 0 & 0 & 0\\
    1 & 0 & 0 & 0\\
    0 & 0 & 0 & 0
\end{pmatrix}.
\end{equation*}
Here we have labelled $\eta$ by the spin of the corresponding $\mch$ irrep, and indexed the multiplicities $i$ from 0.
Note that in this basis $O^{{\eta},i}_{\mu,\nu}$ is diagonal if and only if  $\nu=0$. The visible space has dimension $3^2+1^2=10<16$; one could extend the above list to a basis of ${\rm End}((\mbc^2)^{\otimes 2})$ by (for example) adding the six elementary matrices given in the Schur basis by $\{\ketbra{0,0,0}{1,0,i}, \ketbra{1,0,i}{0,0,0}\}_{i=-1,0,1}$, which are not $G$-diagonal, mapping as they do between $G$-irreps of $\mch$. \\

\noindent
We now detail two examples of  explicit variance calculations.

\subsection{A symmetrized \textit{Z} operator}
\noindent
We begin by recalling our expression Eq.~\eqref{eq:zsymd2} for $Z_{\rm sym}=\sum_i Z_i$, restated here for convenience:
\begin{equation}\label{eq:zsymd3}
{Z}_{\rm sym}=-\sum_{{\eta},i}\sqrt{2\binom{d_{\eta} + 1}{3}} O^{{\eta},i}_{1,0}=:-\sum_{{\eta},i}z_{\eta} O^{{\eta},i}_{1,0}\,.
\end{equation}
As ${Z}_{\rm sym}$ is therefore entirely contained within ``spin-1'' irreps of $\mcl^\DC$ we see $\mcm^{-1}({Z}_{\rm sym})=a_{{\rm spin-}1}^{-1}{Z}_{\rm sym}=3{Z}_{\rm sym}$. Beginning with the first term on the right hand side of Eq.~\eqref{eq:vexact} we have:
\begin{align}
\mbe[\hat{z}_{\rm sym}^2]  &= \sum_{\eta,i,\alpha}\sdbra{\rho\otimes \mcm^{-1}({Z}_{\rm sym})\otimes \mcm^{-1}({Z}_{\rm sym})}\mce^{(3)}\sdket{\Pi_{\eta,i,\alpha}^{\otimes 3}}\\
&={9}\sum_{\eta,i,\alpha}\sdbra{\rho\otimes {Z}_{\rm sym}\otimes {Z}_{\rm sym}}\mce^{(3)}\sdket{\Pi_{\eta,i,\alpha}^{\otimes 3}}\\
&={9}\sum_{\eta,i,\alpha,s'',m''}z_{\eta}^2 \rho^{{\eta},i}_{s'',m''}\sdbra{O^{{\eta},i}_{s'',m''}\otimes O^{{\eta},i}_{1,0}\otimes O^{{\eta},i}_{1,0}}\mce^{(3)}\sdket{\Pi_{\eta,i,\alpha}^{\otimes 3}}\label{eq:zins}\\
&={9}\sum_{\eta,i,\alpha,s,s'',m''}z_{\eta}^2 \rho^{{\eta},i}_{s'',m''}C^{1,1,s}_{0,0,0}\sdbra{O^{{\eta},i}_{s'',m''}\otimes O^{(2),{\eta},i}_{[1,1];s,0}}\mce^{(3)}\sdket{\Pi_{\eta,i,\alpha}^{\otimes 3}}\\
&={9}\sum_{\eta,i,\alpha,s}z_{\eta}^2 \rho^{{\eta},i}_{s,0}C^{1,1,s}_{0,0,0}C^{s,s,0}_{0,0,0}\sdbra{O^{(3),{\eta},i}_{[s[1,1]_s];0,0}}\mce^{(3)}\sdket{\Pi_{\eta,i,\alpha}^{\otimes 3}}\label{eq:lo}\,,
\end{align}
where (for example) $O^{(3),{\eta},i}_{[s[1,1]];0,0}\in {\rm End\ }\mch^{\otimes 3}$ is an element of the ``coupled'' basis with total and $z$ spin quantum numbers $s_{123}=m_{123}=0$ (see Eqs.~\eqref{eq:c2} and~\eqref{eq:c3}). In Eq.~\eqref{eq:zins} we inserted our expression Eq.~\eqref{eq:zsymd3} for $Z_{\rm sym}$. To obtain the last line we used (recalling Eq.~\eqref{eq:c3}) the following calculation:
\begin{align*}
\sdbra{O^{{\eta},i}_{s'',m''}\otimes O^{(2),{\eta},i}_{[1,1];s,0}}\mce^{(3)}&= \sum_{s'''}C^{s,s'',s'''}_{m'',0,m''}\sdbra{O^{(3),{\eta},i}_{[s''[1,1]_s];s''',m''}}\mce^{(3)}\\
&= \sum_{s'''}\delta_{s''',0}C^{s,s'',s'''}_{m'',0,m''}\sdbra{O^{(3),{\eta},i}_{[s''[1,1]_s];s''',m''}}\mce^{(3)}\\
&= C^{s,s'',0}_{m'',0,m''}\sdbra{O^{(3),{\eta},i}_{[s''[1,1]_s];0,m''}}\mce^{(3)}\\
&= \delta_{s,s''}\delta_{m'',0} C^{s,s,0}_{0,0,0}\sdbra{O^{(3),{\eta},i}_{[s[1,1]_s];0,0}}\mce^{(3)}\,.
\end{align*}
Where we have noticed that the projector $\mce^{(3)}$ kills operators living in non-trivial irreps (i.e., forcing $s'''=0$). In turn, this requires $s=s''$ and $m''=0$.
Vectorizing and inserting our expression Eq.~\eqref{eq:pi3} for $\Pi_{\eta,i,\alpha}^{\otimes 3}$ into Eq.~\eqref{eq:lo} then gives
\begin{align*}
\mbe[\hat{z}_{\rm sym}^2]&={9}\sum_{\eta,i,\alpha,s}z_{\eta}^2 \rho^{{\eta},i}_{s,0}C^{1,1,s}_{0,0,0}C^{s,s,0}_{0,0,0}\sdbra{O^{(3),{\eta},i}_{[s[1,1]_s];0,0}}\mce^{(3)}|\\
&\hspace{20mm}\sdket{\sum_{\mu,\mu',\mu'',\mu_{12},\mu_{123}}(-1)^{\alpha+s_{\eta}}C^{s_{\eta},s_{\eta},\mu}_{\alpha,-\alpha,0}C^{s_{\eta},s_{\eta},\mu'}_{\alpha,-\alpha,0}C^{s_{\eta},s_{\eta},\mu''}_{\alpha,-\alpha,0}C^{\mu,\mu',\mu_{12}}_{0,0,0}C^{\mu''\mu_{12},\mu_{123}}_{0,0,0}O^{(3),{\eta},i}_{[\mu''[\mu,\mu']_{\mu_{12}}];\mu_{123},0}}\\
&={9}\sum_{\eta,i,\alpha,s}\sum_{\mu,\mu',\mu'',\mu_{12},\mu_{123}}(-1)^{\alpha+s_{\eta}}C^{s_{\eta},s_{\eta},\mu}_{\alpha,-\alpha,0}C^{s_{\eta},s_{\eta},\mu'}_{\alpha,-\alpha,0}C^{s_{\eta},s_{\eta},\mu''}_{\alpha,-\alpha,0}C^{\mu,\mu',\mu_{12}}_{0,0,0}C^{\mu''\mu_{12},\mu_{123}}_{0,0,0}z_{\eta}^2 \rho^{{\eta},i}_{s,0}C^{1,1,s}_{0,0,0}C^{s,s,0}_{0,0,0}\ \ \times\\
&\hspace{40mm}\sdbra{O^{(3),{\eta},i}_{[s[1,1]_s];0,0}}\mce^{(3)}\sdket{O^{(3),{\eta},i}_{[\mu''[\mu,\mu']_{\mu_{12}}];\mu_{123},0}}\,.
\end{align*}
Now, $\sdbra{O^{(3),{\eta},i}_{[s[1,1]_s];0,0}}\mce^{(3)}\sdket{O^{(3),{\eta},i}_{[\mu''[\mu,\mu']_{\mu_{12}}];\mu_{123},0}}=\delta_{s,\mu''}\delta_{1,\mu}\delta_{1,\mu'}\delta_{0,\mu_{123}}\delta_{s,\mu_{12}}$, so the sum simplifies quite significantly:

\begin{align*}
\mbe[\hat{z}_{\rm sym}^2]&={9}\sum_{\eta,i,\alpha,s}z_{\eta}^2 \rho^{{\eta},i}_{s,0}C^{1,1,s}_{0,0,0}C^{s,s,0}_{0,0,0}(-1)^{\alpha+s_{\eta}}C^{s_{\eta},s_{\eta},1}_{\alpha,-\alpha,0}C^{s_{\eta},s_{\eta},1}_{\alpha,-\alpha,0}C^{s_{\eta},s_{\eta},s}_{\alpha,-\alpha,0}C^{1,1,s}_{0,0,0}C^{s,s,0}_{0,0,0}\\
&={9}\sum_{\eta,i,\alpha,s}z_{\eta}^2 \rho^{{\eta},i}_{s,0}(-1)^{\alpha+s_{\eta}}C^{s_{\eta},s_{\eta},s}_{\alpha,-\alpha,0}\left(C^{1,1,s}_{0,0,0}C^{s,s,0}_{0,0,0}C^{s_{\eta},s_{\eta},1}_{\alpha,-\alpha,0}\right)^2\\
&={18}\sum_{\eta,i,\alpha,s}\binom{d_{\eta} + 1}{3}\rho^{{\eta},i}_{s,0}(-1)^{\alpha+s_{\eta}}C^{s_{\eta},s_{\eta},s}_{\alpha,-\alpha,0}\left(C^{1,1,s}_{0,0,0}C^{s,s,0}_{0,0,0}C^{s_{\eta},s_{\eta},1}_{\alpha,-\alpha,0}\right)^2\,.
\end{align*}
We now turn to the second term on the right hand side of Eq.~\eqref{eq:vexact}, where we have a similar calculation: 
\begin{align}
\mbe[\hat{z}_{\rm sym}]&=\sum_{\eta,i,\alpha}\sdbra{\rho\otimes  \mcm^{-1}({Z}_{\rm sym})}\mce^{(2)}\sdket{\Pi_{\eta,i,\alpha}^{\otimes 2}}\\
&={3}\sum_{\eta,i,\alpha}\sdbra{\rho\otimes {Z}_{\rm sym}}\mce^{(2)}\sdket{\Pi_{\eta,i,\alpha}^{\otimes 2}}\\
&={3}\sum_{\eta,i,\alpha,s'',m''}\rho^{{\eta},i}_{s'',m''}z_{\eta} \sdbra{O^{{\eta},i}_{s'',m''}\otimes O^{{\eta},i}_{1,0}}\mce^{(2)}\sdket{\sum_{\mu,\mu',\mu_{12}}C^{s_{\eta},s_{\eta},\mu}_{\alpha,-\alpha,0}C^{s_{\eta},s_{\eta},\mu'}_{\alpha,-\alpha,0}C^{\mu,\mu',\mu_{12}}_{0,0,0}O^{(2),{\eta},i}_{\mu,\mu';\mu_{12},0}}\\
&={3}\sum_{\eta,i,\alpha,s'',\mu,\mu'}\rho^{{\eta},i}_{s'',0}z_{\eta} C^{s_{\eta},s_{\eta},\mu}_{\alpha,-\alpha,0}C^{s_{\eta},s_{\eta},\mu'}_{\alpha,-\alpha,0}C^{\mu,\mu',0}_{0,0,0}\sdbra{C^{s'',1,0}_{0,0,0}O^{(2),{\eta},i}_{s'',1;0,0}}\mce^{(2)}\sdket{O^{(2),{\eta},i}_{\mu,\mu';0,0}}\\
&={3}\sum_{\eta,i,\alpha}\rho^{{\eta},i}_{1,0}z_{\eta} C^{s_{\eta},s_{\eta},1}_{\alpha,-\alpha,0}C^{s_{\eta},s_{\eta},1}_{\alpha,-\alpha,0}C^{1,1,0}_{0,0,0}C^{1,1,0}_{0,0,0}\\
&={3}\sum_{{\eta},i}\rho^{{\eta},i}_{1,0}\sqrt{2\binom{d_{\eta} + 1}{3}} \left(C^{1,1,0}_{0,0,0}\right)^2\sum_{\alpha}\left(C^{s_{\eta},s_{\eta},1}_{\alpha,-\alpha,0}\right)^2\\
&=\sum_{{\eta},i}\rho^{{\eta},i}_{1,0}\sqrt{2\binom{d_{\eta} + 1}{3}}\,,
\end{align}
where we have used $C^{1,1,0}_{0,0,0}=-1/\sqrt{3}$ and $\sum_{\alpha}\left(C^{s_{\eta},s_{\eta},1}_{\alpha,-\alpha,0}\right)^2=1$. 
The variance is therefore given by
\begin{equation}
\mathrm{Var}[\hat{z}_{\rm sym}] = \left[{18}\sum_{\eta,i,\alpha,s}\binom{d_{\eta} + 1}{3}\rho^{{\eta},i}_{s,0}(-1)^{\alpha+s_{\eta}}C^{s_{\eta},s_{\eta},s}_{\alpha,-\alpha,0}\left(C^{1,1,s}_{0,0,0}C^{s,s,0}_{0,0,0}C^{s_{\eta},s_{\eta},1}_{\alpha,-\alpha,0}\right)^2\right]-\left[\sum_{{\eta},i}\rho^{{\eta},i}_{1,0}\sqrt{2\binom{d_{\eta} + 1}{3}}\right]^2
\end{equation}
when $s_{\eta}\in\mathbb{Z}$ this further simplifies to 
\begin{align}
\mathrm{Var}[\hat{z}_{\rm sym}] &= \sum_{{\eta},i}\binom{d_{\eta} + 1}{3}\left[\frac{6\rho^{{\eta},i}_{0,0}}{\sqrt{d_{\eta}}}+\frac{12\rho^{{\eta},i}_{2,0}(4s_{\eta}(1+s_{\eta})-3)}{5\sqrt{5s_{\eta}(s_{\eta}+1)(2s_{\eta}-1)(2s_{\eta}+1)(2s_{\eta}+3)}}\right]-\left[\sum_{{\eta},i}\rho^{{\eta},i}_{1,0}\sqrt{2\binom{d_{\eta} + 1}{3}}\right]^2
\end{align}
(note $C^{1,1,s}_{0,0,0}=0$ for $s\neq 0, 2$).

\subsection{A projector onto an isotypic}
One can in principle approach this calculation in a conceptually identical manner to the previous one; in this instance, however, we do not require the full Clebsch-Gordan machinery. Indeed, let us note that the single-shot estimator corresponding to a projector $\Pi^{(\eta)}$ is given by
\begin{equation}
    \hat{\pi}_{w,U}^{(\eta)} = \Tr[\mcm^{-1}((U\tn)\ad\Pi_w U\tn)\Pi^{(\eta)}] = \Tr[(U\tn)\ad\Pi_w U\tn\mcm^{-1}(\Pi^{(\eta)})]= \Tr[(U\tn)\ad\Pi_w U\tn \Pi^{(\eta)}]= \Tr[\Pi_w  \Pi^{(\eta)}] 
\end{equation}
where we have used that $[\Pi^{(\eta)},U\tn]=0$. But, writing $\ket w=\ket{\eta',i',\a'}$, this final expression is simply $\delta_{\eta,\eta'}$. It follows that $\hat{\pi}^{(\eta)}$ is just an indicator variable for whether or not measuring $\rho$ in the Schur basis yields (in the irrep label) the value $\eta$, which  happens with probability $\Tr [\rho\Pi^{(\eta)}]$. As (like any random variable that takes only the values zero and one) we have $\expect[\hat{\pi}^{(\eta)}] = \expect [(\hat{\pi}^{(\eta)})^2] $, the variance is therefore simply given by
\begin{equation}
\mathrm{Var}[\hat{\pi}^{(\eta)}] = \Tr [\rho\Pi^{(\eta)}] \left(1-\Tr [\rho\Pi^{(\eta)}]\right) \leq \frac{1}{4}\,,
\end{equation}
where we have used we have $x(1-x)\leq 1/4$ for any $x\in\mbr$. So one improves by a factor  of four upon the generic bound of Theorem~\ref{thm:var}.

\section{Tight frame formalism}\label{sec:tf}
We now describe what one might think of as a generalization of the \textit{tight frame} formalism introduced in Ref.~\cite{zhao2021fermionic}, although we will not feel the need to explicitly invoke such language, nor the machinery of tight frames in general.  We begin with the following result:
\begin{restatable}{lemma}{lemorb}\label{lem:orbit}
(A minor generalization of Theorem 11 of Ref.~\cite{zhao2021fermionic}).
Suppose $\mch$ possesses an NDCSE $\mcw$ with respect to a subgroup $H$ of $G$, and a basis $\{e_\sigma\}_\sigma$ of the irrep $({\rm End\ }\mch^{\eta,i})^{\lm,j}$ whose elements are either diagonal or entirely off-diagonal with respect to $\mcw$. Denote by $\lm$ the corresponding adjoint representation, i.e., 
\begin{equation}
    U_g e_\sigma U_g^\dagger = \sum_\tau \lm(g)_{\sigma,\tau}e_\tau\,.
\end{equation}
Then with $D\subset \{e_\sg\}_\sg$ denoting the set of diagonal basis elements we have:
\begin{equation}
\expect_{g\in G} \sum_{\sg\in D} |\lm(g)_{\sg,\tau}|^2=a_\lm^H,
\end{equation}
where, as before, $a_\lm^H=d^H_\lm/d_\lm$.
\end{restatable}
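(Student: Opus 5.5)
The plan is to combine Schur orthogonality for the matrix coefficients of the irrep $\lm$ with a count of the diagonal basis elements. That count will come from the identification of dephasing with the $H$-twirl in Lemma~\ref{lem:2}. Throughout I take $\{e_\sg\}_\sg$ to be orthonormal with respect to the Hilbert--Schmidt inner product. The adjoint action ${\rm Ad}_{U_g}$ preserves this inner product, so $\lm(g)$ is then a unitary matrix. If this normalization is not part of the hypotheses, I would add it: without it the left-hand side depends on the scaling of the basis elements and the identity cannot hold in general.

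The first step is the average over $G$. Since $({\rm End\ }\mch^{\eta,i})^{\lm,j}$ is an irreducible $G$-module of dimension $d_\lm$ and $\lm(g)$ is unitary, the Schur orthogonality relations give
\begin{equation*}
\expect_{g\in G}\, \lm(g)_{\sg,\tau}\,\overline{\lm(g)_{\sg',\tau'}} = \frac{\delta_{\sg,\sg'}\delta_{\tau,\tau'}}{d_\lm}.
\end{equation*}
The ordering of indices, i.e.\ whether $\lm(g)$ or its transpose is the representing matrix, does not affect this relation. Setting $\sg=\sg'$ and $\tau=\tau'$ and summing over $\sg\in D$ yields
\begin{equation*}
\expect_{g\in G} \sum_{\sg\in D}|\lm(g)_{\sg,\tau}|^2 = \frac{|D|}{d_\lm},
\end{equation*}
and this holds independently of $\tau$. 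It therefore remains to show that $|D| = d_\lm^H$.

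The second step, which I expect to be the only substantive one, is to identify ${\rm span}\,D$ with the $H$-invariant subspace of the irrep. By Lemma~\ref{lem:2}, on ${\rm End}(\mch^{\eta,i})$ the $H$-twirl $\mct_H$ coincides with the dephasing map $\mca_\mcw$. Applying this to the basis elements gives:
\begin{itemize}
\item each diagonal element $e_\sg\in D$ is fixed, $\mct_H(e_\sg)=e_\sg$;
\item each element that is entirely off-diagonal with respect to $\mcw$ is annihilated, $\mct_H(e_\sg)=0$.
\end{itemize}
Because $H\subseteq G$ acts on the $G$-irrep by restriction, $\mct_H$ maps the irrep into itself and is the projector onto its $H$-invariant subspace $(V^\lm)^H$. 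Its image on the irrep is therefore ${\rm span}\,D$. The elements of $D$ are linearly independent, so $d_\lm^H = \dim (V^\lm)^H = |D|$. Substituting this into the Schur-orthogonality formula gives $a_\lm^H = d_\lm^H/d_\lm$, as claimed.

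The point requiring care is the use of the dichotomy hypothesis. It is what guarantees that $\mct_H$ is diagonal in the basis $\{e_\sg\}$ with eigenvalues in $\{0,1\}$. For a mixed element, $\mct_H$ would act as a nontrivial projection, the invariant subspace would not be spanned by basis vectors, and the count $|D|=d^H_\lm$ would fail.
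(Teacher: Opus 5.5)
Your proposal is correct and follows essentially the same route as the paper, which applies Schur's lemma to $\mathbb{E}_{g}[\lambda(g)^\dagger P_D \lambda(g)]$ with $P_D=\sum_{\sigma\in D}|e_\sigma\rangle\langle e_\sigma|$ and reads off the $(\tau,\tau)$ entry; this is the same fact as your Schur-orthogonality computation giving $|D|/d_\lambda$. The paper simply asserts $\Tr[P_D]=d^H_\lambda$ and tacitly assumes an orthonormal basis. Your derivation of $|D|=d^H_\lambda$ from Lemma~\ref{lem:2} via the diagonal/off-diagonal dichotomy, and your explicit normalization assumption, fill in what the paper leaves implicit.
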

\begin{proof}
Consider the map $P_D:\lm\to\lm,\ P_D = \sum_{\sg\in D} \ketbra{e_\sigma} $. As $\lm $ is irreducible, by Schur's lemma we have
\begin{equation}
    \expect_{g\in G} [\lm(g)\ad P_D\lm(g)]  = \frac{\Tr[P_D]}{d_\lm}\id = \frac{d_\lm^H}{d_\lm}\id = a_\lm^H \id.
\end{equation}
Taking the $(\tau,\tau)$-matrix element then exactly gives
\begin{equation}
    a_\lm^H = \sbraket{e_\tau| \expect_{g\in G} [\lm(g)\ad P_D\lm(g)]}{e_\tau} = \expect_{g\in G} \sum_{\sg\in D}\sbraket{e_\tau|  \lm(g)\ad \ketbra{e_\sg}\lm(g)}{e_\tau} = \expect_{g\in G} \sum_{\sg\in D} |\lm(g)_{\sg,\tau}|^2.
\end{equation}
\end{proof}

With the above lemma we are ready to prove Theorem~\ref{thm:varf}, concerning special cases where we can improve over the general bounds of Theorem~\ref{thm:var}:
\thmvarf*

\begin{proof}
\textit{(i)}. In fact, when the first condition holds the result follows  quickly, with no need to invoke the result of Lemma~\ref{lem:orbit}.  Suppose first that $O$ is positive semidefinite. Then we have:
\begin{align}
\mathrm{Var}[\hat{o}]&\leq \mbe[\hat{o}^2]\\
&=\sum_w\expect_{U\sim G} \  \Tr[\rho U^\dagger \ketbra{w}  U]  \Tr[  \mcm^{-1}\left(O\right)  U^\dagger \ketbra{w} U] ^{ 2}\\
&=\sum_w\expect_{U\sim G} \  \Tr[\rho U^\dagger \ketbra{w}  U]  \Tr[ \frac{O}{a_\lm^H}  U^\dagger \ketbra{w} U] ^{ 2}\\
&\leq \frac{\| O\|_\infty}{(a_\lm^H)^2} \sum_w\expect_{U\sim G} \  \Tr[\rho U^\dagger \ketbra{w}  U]  \Tr[  O U^\dagger \ketbra{w} U]\label{eq:po} \\
&= \frac{\| O\|_\infty}{(a_\lm^H)^2} \Tr[\mcm(O)\rho] \\
&= \frac{\| O\|_\infty}{a_\lm^H} \Tr[O\rho] \\
&\leq \frac{\| O\|_\infty^2}{a_\lm^H}\,,
\end{align}
where we have used the definition   of the measurement channel and that $\Tr[O\rho]\leq\|O\|_\infty\|\rho\|_1=\|O\|_\infty$ (H\"older's inequality). Importantly, the bound of Eq.~\eqref{eq:po} is possible as, by the positive semidefiniteness of both $\rho$ and $O$,   
\begin{equation*}
 \Tr[\rho U^\dagger \ketbra{w}  U]  \Tr[  O U^\dagger \ketbra{w} U] \geq 0 \qquad \forall U\in G.
\end{equation*}
An entirely analogous argument applies in the negative semidefinite case:
\begin{align}
\mathrm{Var}[\hat{o}]&\leq \mbe[\hat{o}^2]\\
&=\sum_w\expect_{U\sim G} \  \Tr[\rho U^\dagger \ketbra{w}  U]  \Tr[  \mcm^{-1}\left(O\right)  U^\dagger \ketbra{w} U] ^{ 2}\\
&=\sum_w\expect_{U\sim G} \  \Tr[\rho U^\dagger \ketbra{w}  U]  \Tr[ \frac{O}{a_\lm^H}  U^\dagger \ketbra{w} U] ^{ 2}\\
&\leq -\frac{\| O\|_\infty}{(a_\lm^H)^2} \sum_w\expect_{U\sim G} \  \Tr[\rho U^\dagger \ketbra{w}  U]  \Tr[  O U^\dagger \ketbra{w} U] \\
&= -\frac{\| O\|_\infty}{(a_\lm^H)^2} \Tr[\mcm(O)\rho] \\
&= \frac{\| O\|_\infty}{a_\lm^H} \Tr[-O\rho] \\
&\leq \frac{\| O\|_\infty^2}{a_\lm^H}\,,
\end{align}
where one has to be a little careful with the minus signs; e.g., note that
\begin{equation*}
0\leq   \Tr[ {O}  U^\dagger \ketbra{w} U] ^{ 2} \leq -{\| O\|_\infty}  \Tr[  O U^\dagger \ketbra{w} U]
\end{equation*}
by the assumed negative-semidefiniteness of $O$.

\textit{(ii)}. For an element $O=O_\a$ of the presumed basis we have
\begin{align}
\mathrm{Var}[\hat{o}]&\leq \mbe_{U_g\sim G}\left[\sum_w \Tr[\rho U_g^\dagger \ketbra{w}  U_g]  \Tr[  \mcm^{-1}\left(O_\alpha\right)  U_g^\dagger \ketbra{w} U_g] ^{ 2}\right]\\
&=\frac{1}{(a_\lm^H)^2} \mbe_{U_g\sim G}\left[\sum_w \Tr[\rho U_g^\dagger \ketbra{w}  U_g]  \Tr[  O_\alpha  U_g^\dagger \ketbra{w} U_g] ^{ 2}\right]\\
&=\frac{1}{(a_\lm^H)^2} \mbe_{U_g\sim G}\left[\sum_w  \Tr[\rho U_g^\dagger \ketbra{w}  U_g] \bra{w} \sum_\beta \lm_{\beta,\alpha}(g) O_{\beta}   \ket{w}  ^{ 2}\right]\\
&=\frac{1}{(a_\lm^H)^2} \mbe_{g\sim G}\left[\sum_w \Tr[\rho U_g^\dagger \ketbra{w}  U_g]  \bra{w} \sum_{{\rm diagonal\ }O_\beta} \lm_{\beta,\alpha}(g) O_{\beta}   \ket{w}  ^{ 2}\right]\label{eq:wza} \\
&=\frac{1}{(a_\lm^H)^2} \mbe_{g\sim G}\left[\sum_w \Tr[\rho U_g^\dagger \ketbra{w}  U_g] \sum_{{\rm diagonal\ }O_\beta} |\lm_{\beta,\alpha}(g)|^2 \bra{w}  O_{\beta}   \ket{w}  ^{ 2}\right] \label{eq:outsum} \\
&\leq\frac{1}{(a_\lm^H)^2} \mbe_{g\sim G}\left[\sum_w \Tr[\rho U_g^\dagger \ketbra{w}  U_g] \sum_{{\rm diagonal\ }O_\beta} |\lm_{\beta,\alpha}(g)|^2  \|O\|_\infty^2 \right]\label{eq:unibound}\\
&=\frac{\|O\|_\infty^2}{(a_\lm^H)^2}  \sum_{{\rm diagonal\ }O_\beta} \mbe_{g\sim G}\left[|\lm_{\beta,\alpha}(g)|^2\right]  \\
&=\frac{\|O\|_\infty^2}{(a_\lm^H)^2}  \mbe_{g\sim G}\left[ \sum_{{\rm diagonal\ }O_\beta} |\lm_{\beta,\alpha}(g)|^2\right]  \\
&=\frac{\|O\|_\infty^2}{a_\lm^H}\,.
\end{align}
Here we have used that, by assumption, $G$ is contained within the normalizer of the set $\{O_\beta\}_\beta$, the adjoint action of $G$ simply permutes the elements of this set amongst themselves; accordingly, for a given $\alpha$, the matrix elements $\lm_{\beta,\alpha}$ can be non-zero only for a single value of $\beta$. This is employed in Eq.~\eqref{eq:outsum} to pull the sum over $\beta$ outside of the square (c.f. the proof of Lemma 6 in Ref.~\cite{zhao2021fermionic}). In Eq.~\eqref{eq:unibound} we have used the assumed uniformity of the spectral norms of the elements of the basis; finally we have recalled the result of
Lemma~\ref{lem:orbit}.
\end{proof}

\end{document}